\newcommand{\by}{\mathbf{y}}
\def\min{\mathbf{min}}
\def\arg{\mathbf{arg}}
\def\me{\mathbb{e}}
 \def\mm{\mathbf{m}}
 \def\mbu{\mathbf{u}}
\def\mx{\mathbf x}
\def\mW{\mathbf{W}}
\def\P{\mathbb{P}}
\def\mmW{\mathbf{W}}
\def\my{\mathbf{y}}
\def\vps{\varepsilon}
\newcommand{\smallO}{\mbox{\tiny $\mathcal{O}$}}
\newtheorem{theorem}{Theorem}
\definecolor{hint}{RGB}{191,63,0}
\definecolor{hellgelb}{rgb}{1,1,0.8}
\definecolor{colKeys}{rgb}{0,0,1}
\definecolor{colIdentifier}{rgb}{0,0,0}
\definecolor{colComments}{rgb}{1,0,0}
\definecolor{colString}{rgb}{0,0.5,0}
\newcommand{\E}{\mathop{\mbox{\sf E}}}
\newcommand{\IF}{\mathbf{I}}
\newcommand{\Cov}{\mathop{\mbox{Cov}}}
\newcommand{\Var}{\mathop{\mbox{\sf Var}}}
\def\ssup{\mbox{sup}_{\theta \in \Theta, \gamma \in \Gamma}}
\newcommand{\CO}{{\mathcal{O}}}
\def\Co{{\scriptstyle{\mathcal{O}}}}
\newcommand{\rit}{\mathbb{R}}
\def\defeq{\stackrel{\mathrm{def}}{=}}  % for definitions
\newcommand{\II}{\mathbf{I}}
\newtheorem{proposition}{Proposition}
\newtheorem{lemma}{Lemma}
\newtheorem{definition}{Definition}
\renewcommand{\arraystretch}{1.3}
\title{Using generalized estimating equations to estimate nonlinear models
with spatial data \thanks{%
This paper is supported by the National Natural Science Foundation of China,
No.71601094 and German Research Foundation.}}
\author{Cuicui Lu\thanks{%
Department of Economics, Nanjing University Business School, Nanjing,
Jiangsu 210093 China; email: lucuicui@nju.edu.cn} ,  Weining Wang \footnote{Department of Economics, City, U of London;
Northampton Square, Clerkenwell, London EC1V 0HB. Humboldt-Universit\"{a}t zu Berlin, C.A.S.E. - Center for Applied Statistics and Economics; email: weining.wang@city.ac.uk}, Jeffrey M. Wooldridge\thanks{%
Department of Economics, Michigan State University, East Lansing, MI 48824
USA; email: wooldri1@msu.edu} }
\date{\;}
\begin{document}
\maketitle

\begin{abstract}

In this paper, we study estimation of nonlinear models with cross sectional data using two-step generalized estimating equations (GEE) in the quasi-maximum likelihood estimation (QMLE) framework.
In the interest of improving efficiency, we propose a grouping estimator to account for the potential spatial correlation in the underlying innovations. We use a Poisson
model and a Negative Binomial II model for count data and a Probit model for binary
response data to demonstrate the GEE procedure. Under mild weak dependency assumptions, results on estimation consistency and asymptotic normality are provided.
Monte Carlo simulations show efficiency gain of our approach in comparison of
different estimation methods for count data and binary response data. Finally we apply the GEE approach to study the determinants of the inflow foreign direct investment (FDI) to China.

\textbf{keywords:}
quasi-maximum likelihood estimation; generalized estimating equations;
nonlinear models; spatial dependence; count data; binary response data; FDI
equation

JEL Codes: C13, C21, C35, C51
\end{abstract}

\newpage

\section{ Introduction}

In empirical economic and social studies, there are many examples of discrete
data which exhibit spatial or cross-sectional correlations possibly due to the closeness of geographical locations of individuals or agents. One example is the technology spillover effect. The number of patents a firm received shows correlation with that received by other nearby firms (E.g. \cite{bloom2013identifying}).
Another example is the neighborhood effect. There is a causal effect between
the individual decision whether to own stocks and the average stock market
participation of the individual's community (E.g.\cite{brown2008neighbors}). These two examples involves dealing with discrete data. The first
example is concerned with count data and the second one handles binary response data. Nonlinear models are more
appropriate than linear models for discrete response data. With spatial correlation,
these discrete variables are no longer independent. Both the nonlinearity and the
spatial correlation make the estimation difficult.

In order to estimate nonlinear models, one way is to use maximum likelihood estimation (MLE).
A full MLE specifies the joint distribution of spatial random variables. This includes correctly specifying the marginal and the conditional distributions, which impose very strong assumptions on the data generating processes.  However, given a spatial data set, the dependence structure is generally unknown. If the joint distribution of the
variables is misspecified, MLE is in general not consistent. One of the alternative MLE method is partial-maximum likelihood estimation (PMLE), which only uses marginal distributions. \cite{wang2013partial} use a bivariate Probit partial MLE to improve the estimation efficiency with a
spatial Probit model. Their approach requires to correctly
specify the marginal distribution of the binary response variable
conditional on the covariates and distance measures\footnote{%
A sample of spatial data is collected with a set of geographical locations.
Spatial dependence is usually characterized by distances between
observations. A distance measure is how one defines the distances between
observations. Physical distance or economic distance could be two options.
Information about agents locations is commonly imprecise, e.g. only zip code
is known. \cite{conley2007spatial} deals with the inference problem when
there exist distance errors. In this paper we assume there are no
measurement errors in pairwise distances.} There are two concerns with \cite{wang2013partial}. First the computation is already hard for a bivariate distribution. The multivariate marginal distribution of {a higher dimensional variable}, e.g., trivariate, is more computationally demanding; second it also requires the correct specification of the marginal bivariate distribution to obtain
consistency. The bivariate marginal distribution of a spatial multivariate normal distribution is bivariate normal, thus the bivariate Probit model can be derived. But there are other distributions whose marginal distribution is not the same anymore. For example, the marginal distribution of a multivariate Logit is not logistic. If the partial likelihood is misspecified, the estimation of the mean parameters could be not consistent. With less distributional assumptions, the quasi-maximum likelihood estimation (QMLE) can also be used to estimate nonlinear models. Using a density that belongs to a linear exponential family (LEF), QMLE is consistent if we correctly specify the
conditional mean while other features of the density can be misspecified (\cite{gourieroux1984pseudo}). \cite{lee2004asymptotic} derives asymptotic
distributions of quasi-maximum likelihood estimators for spatial
autoregressive models by allowing not assuming normal distributions. In a
panel data case, pooled or partial QMLE (PQMLE) which ignores serial correlations is
consistent under some regularity conditions (\cite{wooldridge2010econometric}).

We further relax distributional assumptions than those required in
bivariate partial MLE as in \cite{wang2013partial}. Suppose we
only assume correct mean function and one working variance covariance matrix%
\footnote{The true variance covariance matrix is generally unknown. By specifying a
working variance covariance matrix, one can capture some of the correlation
structure between observations.} which may not be correct. Using QMLE in the
LEF, we can consistently estimate the mean parameters as well as the average
partial effects. The generalized estimating equations (GEE) approach is one of the QMLE methods.
It is used in panel data models to account for serial correlation and thus get more efficient
estimators. A generalized estimating equation is used to estimate the
parameters of a generalized linear model with a possible unknown correlation
between outcomes (\cite{liang1986longitudinal}). Parameter estimates from the GEE
are consistent even when the variance and covariance structure is misspecified under mild
regularity conditions. This is quite related to a different terminology,
composite likelihood. \cite{varin2011overview} provide a survey of
developments in the theory and application of composite likelihood. The
motivation for the use of composite likelihood is usually computational, to
avoid computing or modelling the joint distributions of {high dimensional}
random processes. {One can find many related reference in the literature, such as \cite{bhat2010comparison}.} As a special case of composite likelihood methods, one way
is to use partial conditional distribution, and maximize the summand of log
likelihoods for each observation. It assumes a working independence
assumption, which means that the estimators are solved by ignoring
dependence between individual likelihoods. The parameters can be
consistently estimated if the partial log likelihood function satisfies
certain regularity assumptions. However, a consistent variance estimator
should be provided for valid inference\footnote{%
Ignoring dependence in the estimation of parameters will result in wrong
inferences if the variances are calculated in the way that independence is
assumed. Dependence should be accounted for to the extent of how much one
ignores it in the estimation.}. When there exists spatial correlation, the
pooled maximum likelihoods (composite likelihoods) can be considered as
misspecified likelihoods because of the independence assumption.

Generalized least squares (GLS) could be used to improve the estimation efficiency in a linear regression model even if the variance covariance structure is misspecified. { \cite{lu2017quasi} propose a quasi-GLS method to estimate the linear regression model with an spatial error component. By first estimating the spatial parameter in the error variance and then using estimated variance matrix for within group observations, the quasi-GLS is computationally easier and would not loose much efficiency compared to GLS. } Similarly, the multivariate nonlinear weighted least squares estimator (\textrm{MNWLS}), {see Chapter 12.9.2 in \cite{wooldridge2010econometric}}, is essentially a GLS approach applied in nonlinear models to improve the estimation efficiency.

{It is worth noting that the GEE approach discussed in this paper is a two-step method, which is essentially a special MNWLS estimator that uses a \textrm{LEF} variance assumption and a possibly misspecified working correlation matrix in the estimation. The GEE approach was first extended to correlated data by \cite{liang1986longitudinal}, which propose a fully iterated GEE estimator in a panel data setting. In addition, \cite{zeger1986longitudinal} fit the GEE method to discrete dependent variables. The iterated GEE method has solutions which are consistent and asymptotically Gaussian even when the temporal dependence is misspecified. The consistency of mean parameters only depends on the correct specification of the mean, not on the choice of working correlation matrix. GEE used in nonlinear panel data models and system of equations is supposed to obtain more efficient conditional mean parameters with covariance matrix accounting for the dependency structure of the data. In this paper, we apply a similar idea to grouped spatial data. We use the \textrm{PQMLE} as the initial estimator for the two-step GEE and study the efficiency properties of a two-step GEE estimator and expect that GEE can give more efficient estimators compared to PQMLE.}

Moreover, we demonstrate theoretically how to use our GEE approach within the QMLE framework
in a spatial data setting to obtain consistent estimators. We give a
series of assumptions, based on which QMLE estimators are consistent for the
spatial processes. To derive the asymptotics for the GEE estimator we have
to use a uniform law of large numbers (ULLN) and a central limit theorem
(CLT) for spatial data. These limit theorems are the fundamental building
blocks for the asymptotic theory of nonlinear spatial M-estimators, for
example, maximum likelihood estimators (MLE) and generalized method of
moments estimators (GMM) (\cite{Jenish2012}). \cite{conley1999gmm} makes an
important contribution toward developing an asymptotic theory of GMM
estimators for spatial processes. He utilizes \cite{bolthausen1982central} CLT for
stationary random fields. \cite{Jenish2009,Jenish2012} provide ULLNs and a
CLTs for near-epoch dependent spatial processes. Using theorems in \cite{Jenish2009, Jenish2012}, one can analyze more interesting economic phenomena. It should be noted that although GEE can be considered as a special case of M-estimation, we have carefully checked how the near-epoch dependence  property
of the underlying processes is translated to our responses and the partial sum processes involved in proving the asymptotics of the estimation. Our setup is different from the literature as it is with a grouped estimation structure. Finally, we have provided a consistency proof of the proposed semiparametric estimator of the variance covariance matrix.

{We contribute to the literature in three aspects. First, we propose a simple method which uses less distributional assumptions by only specifying the conditional mean for spatial dependent data. The method is computationally easier by dividing data into small groups compared to using all information. We model the spatial correlation as a moving average (MA) type in the underlying innovations instead of the spatial autoregressive (SAR) model in the dependent variable. Second, we proved the theoretical property of our estimator by applying ULLN and CLT in \cite{Jenish2009,Jenish2012} to the GEE estimator with careful checking the hyper assumptions. Third,  we emphasize the possible efficiency gain from making use of spatial correlation from our simulation study, and we demonstrate how to use GEE with two types of data: count and binary response. }

In Section \ref{sec2}, the GEE methodology in a QMLE framework under the spatial data
context is proposed. In Section \ref{sec3}, we
look in detail at a Poisson model and Negative Binomial II model for count data with a multiplicative spatial error term. We further study a Probit model for binary response data with spatial correlation in the latent error term.
In Section \ref{sectheorem}, a series of assumptions are given based on \cite{Jenish2009, Jenish2012} under which GEE-estimators are consistent and have an
asymptotic normal distribution. The asymptotic distributions for GEE for
spatial data are derived. Consistent variance covariance estimators are
provided for the nonlinear estimators.
 Section \ref{sec5} contains Monte Carlo simulation results which compare
efficiency of different estimation methods for the nonlinear models
explored in the previous section.
 Section \ref{sec6}  contains an application to study the determinants of the inflow FDI to China using city level data.  The technical details are delegated to Section \ref{sec7}.

\section{Methodology} \label{sec2}

\subsection{Notation and definition}

Unlike linear models, a very important feature of nonlinear models is that
estimators cannot be obtained in a closed form, which requires new tools for
asymptotic analysis: uniform law of large numbers (\textrm{ULLN}) and a
central limit theorem (\textrm{CLT}). \cite{Jenish2009} develop
\textrm{ULLN} and \textrm{CLT} for $\alpha $-mixing random fields on
unevenly spaced lattices that allow for nonstationary processes with
trending moments. But the mixing property can fail for quite a few reasons, thus
we adopt the notion of near-epoch dependence (\textrm{NED}) as in
\cite{Jenish2012} which refers to a generalized class of random
fields that is "closed with respect to infinite transformations."
We consider spatial processes located on a unevenly spaced lattice $D \subseteq \rit^{d}, d \geq 1$. The space $\rit^{d}$ is endowed with the metric $\rho(i, j) = max_{1 \leq l \leq d}|j_{l} - i_{l}|$ with the corresponding norm $|i|_{\infty} = max_{1 \leq l \leq d}|i_{l}|$, where $i_{l}$ is the $l$-th component of $i$. The distance between any subsets $U, V \in D$ is defined as $\rho(U, V) = \inf \{\rho(i, j): i\in U \text{ and } j\in V \}$. Further, let $|U|$ denote the cardinality of a finite subset $ U\subseteq D$. The setting is illustrated in \citet{Jenish2009,Jenish2012}.

Let $Z = \{Z_{n, i}, i \in D_{n}, n \geq 1\}$ and $\varepsilon = \{\varepsilon_{n, i}, i \in T_{n}, n \geq 1\}$ be triangular arrays of random fields defined on a probability space $(\Omega, \mathscr{F}, P)$ with $D_{n}\subseteq T_{n} \subseteq D$ where $D$ satisfies A.1). The cardinality of $D_{n}$ and $T_{n}$ satisfy $\displaystyle \lim_{n \rightarrow \infty} |D_{n}|\rightarrow \infty, \displaystyle \lim_{n \rightarrow \infty} |T_{n}|\rightarrow \infty$. For any vector $v \in R^p$, $|v|_2$ denotes the $L_2$ norm of $v$. For any $n \times m$ matrix $A$ with element $a_{ij}$, denote $|A|_1 = \displaystyle \max_{1\leq j\leq m} \displaystyle \sum^{n}_{i=1} |a_{ij}|$ and $|A|_{\infty} = \displaystyle \max_{1\leq i\leq n} \displaystyle \sum^{m}_{j=1} |a_{ij}|$, $|A|_2$ denotes the 2-norm. For any random vector $X$, denote $\| X_{n,i} \|_{p} = (\E|X_{n,i}|^{p})^{1/p}$ as its $L_{p}$-norm, where the absolute $p$th moment exists.  We  brief $\| X_{n,i} \|_{2} $ as $\| X_{n,i} \|$. Let $\mathcal{F}_{n,i}(s) = \sigma(\varepsilon_{n,j}: j \in D_{n}, \rho(i, j) \leq s)$ as the $\sigma$- field generated by random vectors $\varepsilon_{n, j}$ located within distance $s$ from $i$.
 Given two sequences of positive numbers $x_n$ and $y_n$, write $x_n\lesssim y_n$ if there exists constant $C>0$ such that $x_n/y_n\leq C$, also we can write $x_n = \CO(y_n)$. A sequence $x_n$ is said to be $\Co(y_n)$ if $x_n/y_n \to 0,$ as $n \to \infty$. In a similar manner,
 The notation, $X_n=\CO_p(a_n)$ means that the set of values $X_n/a_n$ is stochastically bounded.  That is, for any $ \varepsilon > 0$, there exists a finite M > 0 and a finite N > 0 such that,
$P(|X_n/a_n| > M) < \varepsilon, \forall n > N$. $|.|_a$ is the elementwise absolute value of a matrix  $|A|_a$. $a\vee b $ is $ \max(a,b).$

\begin{definition}\label{DefNED}
Let $Z = \{Z_{n, i}, i \in D_{n}, n \geq 1\}$ and $\varepsilon = \{\varepsilon_{n, i}, i \in D_{n}, n \geq 1\}$ be random fields with $\| Z_{n,i} \|_{p} < \infty, p \geq 1 $, where $D_{n} \subseteq D$ and its cardinality $|D_{n}|=n$. Let $\{d_{n, i}, i \in D_{n}, n \geq 1\}$ be an array of finite positive constants. Then the random field $Z$ is said to be $L_{p}$-near-epoch dependent on the random field $\varepsilon$ if
\begin{equation*}
 \| Z_{n, i} - \E(Z_{n, i}|\mathcal{F}_{n, i}(s)) \|_{p} < d_{n, i}\varphi(s)
\end{equation*}
for some sequence $\varphi(s) \geq 0$ with $\displaystyle \lim_{s \rightarrow \infty} \varphi(s) = 0$. $\varphi(s)$ are denoted as the NED coefficients, and $d_{n, i}$ are denoted as NED scaling factors. If $\displaystyle \sup_{n}\sup_{i \in D_{n}} d_{n, i} < \infty $, then $Z$ is called as uniformly $L_{p}$-NED on $\varepsilon$.
\end{definition}

\begin{itemize}\label{Ass1}
\item[A.1)] The lattice $D \subseteq \rit^{d}, d \geq 1$, is infinitely countable. The distance $\rho(i, j)$ between any two different individual units $i$ and $j$ in $D$ is at least larger than a positive constant, i.e., $\forall i, j \in D: \rho(i, j) \geq \rho_{0} $, w.l.o.g. we assume $\rho_{0} >1 $.
\end{itemize}

We will present the $L_{2}$-NED properties of a random field $Z$ on some $\alpha$-mixing random field $\varepsilon$. The definition of the $\alpha$-mixing coefficient employed in the paper are stated as following.

\begin{definition}\label{Def2}
Let $\mathscr{A}$ and $\mathscr{B}$ be two $\sigma$-algebras of $\mathscr{F}$, and let
\begin{equation*}
  \alpha(\mathscr{A}, \mathscr{B}) = \sup(|P(A\cap B) - P(A)P(B)|, A \in \mathscr{A}, B \in \mathscr{B}),
\end{equation*}
For $U \subseteq D_{n}$ and $V \subseteq D_{n}$, let $\sigma_{n}(U) = \sigma(\varepsilon_{n, i}, i \in U)$ ($\sigma_{n}(V) = \sigma(\varepsilon_{n, i}, i \in V)$) and $\alpha_{n}(U, V) = \alpha(\sigma_{n}(U), \sigma_{n}(V))$. Then, the $\alpha$-mixing coefficients for the random field $\varepsilon$ are defined as:
\begin{equation*}
 \overline{\alpha}(u, v, h) = \displaystyle \sup_{n} \displaystyle \sup_{U,V}(\alpha_{n}(U, V), |U| \leq u, |V| \leq v, \rho(U, V)\geq h).
\end{equation*}
\end{definition}

Note that we suppress the dependence on $n$ from now on for the triangular array.
Let $\left \{ \left(\mathbf{x}_{i},y_{i}\right) ,i=1,2,...,n\right \} $, where $\left(
\mathbf{x}_{i},y_{i}\right) $ is the observation at location $s_{i}.$ $%
\mathbf{x}_{i}$ is a row vector of independent variables which can be
continuous, discrete or a combination. The dependent variable $y_{i}$ can be
continuous or discrete. Let $\left( \mathbf{x}_{g},\mathbf{y}%
_{g}\right) $ be the observations in group $g$ and $B_{g}$ is the associated set of locations within the group $g$. We will focus on the case of a discrete dependent variable, a
binary response and a count. Let $\theta \in \mathbf{R}^p$, $\gamma \in \mathbf{R}^q$ and  $\mathbf{\theta \in \Theta, \gamma \in \Gamma}$, where $\mathbf{\Theta\times \Gamma} $ is a compact set, and $(\theta^0, \gamma^0)$ is the true parameter value.

\subsection{The generalized estimating equations methodology}

{The \textrm{GEE} methodology proposed in equations (6) and (7) in \cite{liang1986longitudinal} is an iterated approach to estimate the mean parameters. We simplify the procedure using a two-step method by first estimate the working correlation matrix and then apply MWNLS.} In the following, we write the \textrm{GEE} methodology in the group level notation. Groups are divided according to geographical properties or other researcher defined economic (social) relationships. Our asymptotic analysis is based on large number of groups $g = 1, \cdots, G$. The notation $D_{G}$ indicates the lattice containing group locations, each group location is denoted as vectorizing the elements in $B_g$. Let the total number of groups
be $|D_G| = G,$ while the total number of observations is still $|D_n| = n.$ Let $L_{g}$ be
the number of observations in group $g$. For simplicity assume $L_{g}=L,$
for all $g$. Let $\left \{ \left( \mathbf{x%
}_{g},\mathbf{y}_{g}\right) \right \} $ be the observations for group $g$,
where $\mathbf{x}_{g}$ is an $L\times p$ matrix and $\mathbf{y}_{g}$ is an $%
L\times 1$ vector$.$ There are two extreme cases of the group size. The
first case is when the group size is $1$, the resulting estimator is the
usual \textrm{PQMLE} estimator, which means we ignore all of the pairwise
correlations. The second case is when the group size is $n$, which means we
are using all the pairwise information. If the group size is not equal to $1$ or
$n$, the estimation is actually a "partial" \textrm{QMLE}. By "partial", we
mean that we do not use full information, but only the information within
the same groups. Note that we work with the case with number of groups $G\to \infty$ in our theory,
{ while the groupsize $L$ is assumed to be fixed.}

Assume that we correctly specify conditional mean of $\mathbf{y}_{g},$ that
is, the expectation of $\mathbf{y}_{g}$ conditional on $\mathbf{x}_{g}$ is
\begin{equation}
\E\left( \mathbf{y}_{g}|\mathbf{x}_{g}\right) =%
\mathbf{m}_{g}\left( \mathbf{x}_{g};\mathbf{\theta }^{0}\right) =\mathbf{m}_{g}(\theta^0) .
\label{groupmean}
\end{equation}%
Assume the conditional variance-covariance matrix of $\mathbf{y}_{g}$ is $\mathbf{W}^*_{g}$
which is unknown in most cases, where $\mathbf{W}_g \defeq \Cov(y_g,y_g|\mx_g)= \E(\mathbf{y}_{g}\mathbf{y}_{g}^{\top}|\mx_g) - \E(\mathbf{y}_{g}|\mx_g)\E(\mathbf{y}_{g}|\mx_g)^{\top}$.  Usually we parameterize a corresponding weight matrix $\mathbf{W}_g$ by $\mathbf{W}_g(  \theta,\gamma)$, where $\theta \in \Theta \subset \mathbf{R}^q$ and $\gamma \in \Gamma \subset \mathbf{R}^p$ as a nuisance parameter involved only in the estimation of the variance covariance matrix.
{In practice, we usually preestimate $\gamma$ and thus it is replaced by a consistent estimate of $\hat{\gamma}$, then $\mathbf W_g$ is denoted as $\mathbf{W}( \theta, \hat{\gamma})$.}

The objective function for group $g$ and the whole sample are given as follows:
\begin{eqnarray}
q_{g}(\theta,\gamma) &\defeq&\left( \mathbf{y}_{g}-\mathbf{m}_{g}(\theta)\right) ^{\top }\mathbf{W}%
_{g}^{-1}\left( \mathbf{\theta,\gamma}\right) \left( \mathbf{y}_{g}-\mathbf{m}%
_{g}(\theta)\right) , \\
Q_{G}(\theta,\gamma) &\defeq& {(M_{G} G)}^{-1}\sum_g q_{g}(\theta) ,
\end{eqnarray}
where {$M_{G}$ is a scaling constant defined in A.5) in section \ref{sectheorem}.}

Theoretically, an estimator of $\theta^0, \gamma^0$ is given by%
\begin{equation}
(\mathbf{\hat{\theta},\hat{\gamma}})=\arg \min_{\mathbf{\theta, \gamma }\in
\mathbf{\Theta, \Gamma }}Q_{G}(\theta,\gamma).  \label{gobj}
\end{equation}%

In practice a GEE estimator is obtained by a two-step procedure, where the first step is
to estimate the nuisance parameter $\gamma$ and the second step is to have the parameter $\theta$ estimated
with the plug-in estimator $\hat{\gamma}$ from step 1.
\begin{equation}
\mathbf{\hat{\theta}}_{_{\mathrm{GEE}}}=\arg \min_{\mathbf{\theta }\in
\mathbf{\Theta }}Q_{G}( \theta, \hat{\gamma}).  \label{gobj1}
\end{equation}%

{Because this only uses the groupwise information, it actually is a "quasi" or "pseudo"
\textrm{MWNLS}. }The quasi-score equation, which is the first order condition
for GEE, is defined as follows:
\begin{equation}
\mathbf{S}_{G}\left(  \theta,{\gamma}\right) =\frac{1}{G M_G}%
\sum_{g}\nabla \mathbf{m}_{g}\left( \mathbf{\theta }\right)^{\top }
\mathbf{W}_{g}^{-1}\left(  \theta,{\gamma}\right) \left[ \mathbf{y}_{g}-%
\mathbf{m}_{g}\left( \mathbf{\theta }\right) \right] ,  \label{quasiscore}
\end{equation}%
where $\nabla _{\theta }\mathbf{m}_{g}\left( \mathbf{\theta }\right) $ is
the gradient of $\mathbf{m}_{g}\left( \mathbf{\theta }\right) .$ $M_G$ is defined as the scaling constant in A.5) in section \ref{sectheorem}. The GEE
estimator $(\mathbf{\hat{\theta},\hat{\gamma}}) = \mbox{argzero}_{\theta \in \Theta, \gamma \in \Gamma} \mathbf{S}_{G}\left( \theta, \mathbf{{\gamma}}\right) .$

Denote the population version of loss as
$\mathbf{S}_{\infty}\left( \theta,\gamma\right) = \mbox{lim}_{G\to \infty} \E \mathbf{S}_{G}\left(\theta,\gamma\right),$ and \\$Q_{\infty}(\theta,\gamma) = \lim_{G\to\infty} {(G M_G)}^{-1}\sum_g \E q_{g}(\theta,\gamma) .$  Thus the true parameter $( \theta^0,\gamma^0) \\= \mbox{argzero}_{\theta \in \Theta, \gamma \in \Gamma } \mathbf{S}_{\infty}\left(  \mathbf{\theta}, \mathbf{\gamma} \right) = \mbox{argmin}_{ \theta \in \Theta, \gamma \in \Gamma  }Q_{\infty}( \theta,\gamma).$

Frequently we restrict our attention to the exponential family, which embraces  many frequency encountered distributions, such as Bernoulli, Poisson and Gaussian, etc.

Now we write this estimation in a \textrm{QMLE} framework. We suppress the parameter $\gamma$ for a moment. Assume the
probability density function $f\left( \my_g|\mathbf{x}_g;\mathbf{\theta }\right) $
is in the LEF.({ See details in Appendix \ref{exp} }.)

Without accounting for the spatial covariance, one characterization of \textrm{QMLE} in \textrm{LEF} is that the individual score function has the following form:
\begin{equation}
\mathbf{s}_{i}\left( \mathbf{\theta }\right) =\nabla m_{i}\left(\mathbf{\theta }\right) ^{\top}\{ y_{i}-m_{i}\left(\mathbf{\theta }\right) \}/v_{i}\left( m_{i}\left(
\mathbf{\theta }\right) \right) ,  \label{pqmlescore}
\end{equation}%
where $\nabla m_{i}\left( \mathbf{x}_{i};\mathbf{\theta }%
\right) $ is the $1\times p$ gradient of the mean function and $v_{i}\left(
m_{i}\left( \mathbf{x}_{i},{D}_{n};\mathbf{\theta }\right) \right) $
is the {conditional} variance function associated with the chosen \textrm{LEF} density.
For Bernoulli distribution, $v_{i}\left( m_{i}\left( \mathbf{x}_{i};\mathbf{%
\theta }\right) \right) =m_{i}\left( \mathbf{x}_{i};\mathbf{\theta }\right)
\left( 1-m_{i}\left( \mathbf{x}_{i};\mathbf{\theta }\right) \right) ,$ and
for Poisson distribution, $v_{i}\left( m_{i}\left( \mathbf{x}_{i};\mathbf{%
\theta }\right) \right) =m_{i}\left( \mathbf{x}_{i};\mathbf{\theta }\right)
. $
Note that (\ref{pqmlescore}) gives a consistent estimator but is not likely
to be the most efficient estimator as it ignores the possible spatial
correlations between observations. However, it accounts for possible
heteroscedasticity.

We write the quasi-score function for a group. Let $\mathbf{v}%
_{g}\left( \mathbf{m}_{g}\left( \mathbf{x}_{g};\mathbf{\theta }\right)
\right) $ be the conditional variance covariance matrix for group $g$. Then score involved in the estimation is denoted as
\begin{equation}
\mathbf{S}_{G}\left( \mathbf{\theta}\right) =\frac{1}{M_{G}G}%
\sum_{g}{s}_{g}\left( \mathbf{\theta }\right) =\frac{1}{ M_{G}G}%
\sum_{g}\nabla \mathbf{m}_{g}\left( \mathbf{x}_{g};\mathbf{\theta }%
\right) ^{\top }\mathbf{v}_{g}\left( \mathbf{m}_{g}\left( \mathbf{x}_{g};%
\mathbf{\theta }\right) \right) ^{-1}\left[ \mathbf{y}_{g}-\mathbf{m}%
_{g}\left( \mathbf{x}_{g};\mathbf{\theta }\right) \right] ,
\label{quasiscore2}
\end{equation}%
where
\begin{equation}
{s}_{g}\left( \mathbf{\theta }\right) =\nabla \mathbf{m}_{g}\left(
\mathbf{x}_{g};\mathbf{\theta }\right) ^{\top }\mathbf{v}_{g}\left(
\mathbf{m}_{g}\left( \mathbf{x}_{g};\mathbf{\theta }\right) \right) ^{-1}%
\left[ \mathbf{y}_{g}-\mathbf{m}_{g}\left( \mathbf{x}_{g};\mathbf{\theta }%
\right) \right] .
\end{equation}%

We specify a more general form of variance $\mathbf{v}_{g}\left( \theta \right)$ with the dependency  of the nuisance parameter $\gamma$.
The conditional mean vector is correctly specified for each individual
\textrm{E}$\left( y_{i}|\mathbf{x}_{i}\right) =m_{i}\left( \mathbf{x}_{i};%
\mathbf{\theta }^{0}\right) .$ Thus for each group, $%
\mathbf{m}_{g}\left( \mathbf{x}_{g};\mathbf{\theta }^0\right) =\E%
\left( \mathbf{y}_{g}|\mathbf{x}_{g}\right) .$ Let ${s}_{g}\left(
\theta,\mathbf{{\gamma}}\right) $ denote the $p\times 1$
vector of score for group $g$.
% As long as the conditional mean is correctly
%specified, $\E\left[ \mathbf{s}_{g}\left( \mathbf{w}_{g},\mathbf{%
%\hat{\gamma}, \theta }\right) |\mathbf{X}_{g}\right] =\mathbf{0}$, Fisher
%consistency holds.
Let $h_{g}\left( \mathbf{
\theta, {\gamma}}\right) $ be the $p\times p$ matrix of Hessian for group $g$.
The score function for $Q_{G}\left( \mathbf{\theta,{\gamma} }\right) $
can be defined as $\mathbf{S}_{G}\left(  \theta, {\gamma}\right)
$ and the Hessian can be defined as $\mathbf{H}_{G}\left( \mathbf{ \theta, {\gamma}
}\right) .$
%We use $\mathbf{\hat{W}}_{g}$ to denote $\mathbf{W}%
%_{g}\left( \mathbf{X}_{g},\mathbf{\hat{\gamma}}, \theta\right) $ and $\mathbf{W}%
%_{g}^{\ast }$ to denote $\mathbf{W}_{g}\left( \mathbf{X}_{g},\mathbf{\gamma }%
%^{0 },\theta\right) $.
The score function
for \textrm{GEE} can be written as%
\begin{equation}
\mathbf{S}_{G}\left( \theta, \mathbf{{\gamma} }\right) =\frac{1}{M_{G}G}%
\sum_{g}{s}_{g}\left( \mathbf{
\theta, \gamma }\right) =\frac{1}{ M_{G} G}\sum_{g}\nabla \mathbf{m}_{g}^{\top}\left(
\mathbf{\theta }\right) \mathbf{{W}}_{g}^{-1}(\theta, \gamma)\left[ \mathbf{y}_{g}-%
\mathbf{m}_{g}\left( \mathbf{\theta }\right) \right] .
\end{equation}%
and the Hessian is
\begin{eqnarray}\label{eq:hessian}
\mathbf{H}_{G}\left( \mathbf{\theta ,{\gamma}}\right) &\equiv &\frac{1}{M_{G}G%
}\sum_{g}{h}_{g}\left( \mathbf{\theta ,{%
\gamma}}\right)  \notag \\
&=&-\frac{1}{M_{G}G}\sum_{g}\nabla _{\mathbf{\theta }}\mathbf{m}%
_{g}^{\top }\left( \mathbf{\theta }\right) \mathbf{{W}}_{g}^{-1}(\theta, \gamma)\nabla
_{\mathbf{\theta }}\mathbf{m}_{g}\left( \mathbf{\theta }\right) \nonumber \\&&+\frac{1}{M_{G}G}%
\sum_{g}[  \{(\mathbf{y}_g-\mathbf{m}_g(\theta))^{\top} \mathbf{{W}}_g^{-1}(\theta, \gamma) \otimes  I_q\}]\partial \mbox{Vec}(\nabla \mathbf{m}^{\top}_g(\theta))/\partial \theta \nonumber\\&&+\frac{1}{M_{G}G}%
\sum_{g}\{(\mathbf{y}_g-\mathbf{m}_g(\theta))^{\top} \otimes \nabla \mathbf{m}^{\top}_g (\theta)\}\partial \mbox{Vec}(\mathbf{{W}}_g(\theta, \gamma))/\partial \theta \nonumber\\&
\defeq&\mathbf{H}_{G,1}(\theta,{\gamma})+ \mathbf{H}_{G,2}(\theta,{\gamma})+ \mathbf{H}_{G,3}(\theta, {\gamma}),
\end{eqnarray}
where $\mbox{Vec}$ is denoted as the vectorization of a matrix $A$.

\subsection{The first-step estimation of the weight matrix}

In this subsection, we demonstrate one way to find an estimator for $\gamma$ involved in $\mathbf{W}_{g}(\theta, \gamma).$
$\mathbf{W}_{g}(\theta,\gamma)$ can be written as
\begin{equation}
\mathbf{W}_{g}(\theta,\gamma)=\mathbf{V}_{g}(\mathbf{x}_{g};\mathbf{\theta })^{1/2}\mathbf{R%
}_{g}\left( \mathbf{\gamma },{D}_{G}\right) \mathbf{V}_{g}(\mathbf{x}%
_{g};\mathbf{\theta })^{1/2},
\end{equation}%
where $\mathbf{V}_{g}$ is the $L\times L$ diagonal matrix that only contains
variances of $\my_g - \mm_g(\mx_g, \theta^0)$ and $\mathbf{R}_{g}$ is the $L\times L$ correlation matrix for
group $g$.

Let \begin{equation}
\mathbf{V}_{g}(\mathbf{x}_{g};\mathbf{\theta })=\left(
\begin{array}{cccc}
v_{g1} & 0 & \cdots & 0 \\
0 & v_{g2} &  & \vdots \\
\vdots &  & \ddots & 0 \\
0 & ... & 0 & v_{gL}%
\end{array}%
\right) ,
\end{equation}%
where the $l$th element on the diagonal is $v_{gl}=\mathrm{Var}(\mathbf{y}%
_{gl}\mathbf{|x}_{gl})$ in group $g,$ $\mathbf{y}_{gl}$ is the $l$th element
in the vector $\mathbf{y}_{g}$ and $\mathbf{x}_{gl}$ is the $l$th row in $%
\mathbf{x}_{g}$. And
\begin{equation}
\mathbf{R}_{g}\left( \mathbf{\gamma },{D}_{G}\right) =\left(
\begin{array}{cccc}
1 & \pi _{g12} & \cdots & \pi _{g1L} \\
\pi _{g21} & 1 &  & \vdots \\
\vdots &  & \ddots & \pi _{gL-1,L} \\
\pi _{gL1} & ... & \pi _{gL,L-1} & 1%
\end{array}%
\right) .
\end{equation}%
Let $d_{glm}$ be the distance between the $l$th and the $m$th observations
in group $g$. An example of a parametrization of the correlation i.e. the $l, m$th, $l\neq
m,$ element of $\mathbf{R}_{g},$ as in \cite{cressie1992statistics} is%
\begin{equation}
\pi _{glm}=1-b-c\left[ 1-\exp \left( -d_{glm}/\rho \right) \right] ,
\end{equation}%
where the spatial correlation parameters $\mathbf{\gamma =}\left( b,c,\rho
\right) ,$ $b\geq 0,c\geq 0,\rho \geq 0,$ and $b+c\leq 2.\footnote{%
See \cite{cressie1992statistics} p.61 for more examples.}$Set $b=c=1$ without loss of
generality. Then
\begin{equation}
\pi _{glm}=\left \{
\begin{array}{c}
1\text{ \  \  \  \  \  \  \  \ if }l=m, \\
\exp \left( -d_{glm}/\rho \right) \text{ \  \ otherwise.}%
\end{array}%
\right.
\end{equation}%
Although the above specification does not represent all the possibilities,
it at least provides a way of how to parameterize the spatial correlation,
and therefore the basis for testing spatial correlation.

The following provides a way to estimate $\mathbf{\gamma }$. Let $\mathbf{%
\check{\theta}}$ be the first-step \textrm{PQMLE} estimator. $\check{u}%
_{i}=y_{i}-m_{i}\left( x_{i};\mathbf{\check{\theta}}\right) $ are the
first-step residuals. $\check{v}_{i}=v\left( m_{i}\left( \mathbf{x}_{i};%
\mathbf{\check{\theta}}\right) \right) $ is the fitted variance of
individual $i$ corresponding to the chosen \textrm{LEF} density. Let $\check{r%
}_{i}=\check{u}_{i}/\sqrt{\check{v}_{i}}$ be the standardized residual. Let $%
\mathbf{\check{r}}_{g}=$ $\left( \check{r}_{g1},\check{r}_{g2},...,\check{r}%
_{gL}\right) ^{\top }.$ Then $\mathbf{\mathbf{\check{r}}_{g}\mathbf{\check{r%
}}_{g}}^{\top }$ is the estimated sample correlation matrix for group $g$.
Let $\mathbf{e}_{g}(\check{\theta})$ be a vector containing $L(L-1)/2$ different elements of
the lower (or upper) triangle of $\mathbf{\mathbf{\check{r}}_{g}\mathbf{%
\check{r}}_{g}}^{\top },$ excluding the diagonal elements. Let $\mathbf{z}%
_{g}(\gamma)$ be the vector containing the elements in $\mathbf{R}_{g}$
corresponding to the same entries of elements in $\mathbf{\mathbf{\check{r}}%
_{g}\mathbf{\check{r}}_{g}}^{\top }$. We can follow \cite{prentice1988applications}, who provides one way to
find a consistent estimator for $\mathbf{\gamma }$ by solving:%
\begin{equation}\label{pregamma}
\mathbf{\hat{\gamma}}=\arg \min_{\gamma \in \Gamma} \sum_{g}(\mathbf{e}_{g}(\check{\theta})-\mathbf{z}%
_{g}(\gamma))^{\top }(\mathbf{e}_{g}(\check{\theta})-\mathbf{z}_{g}(\gamma)).
\end{equation}

\section{Estimating nonlinear models with spatial error: two examples} \label{sec3}

The setup of nonlinear models with spatial data varies with different
models. For each model, we need to incorporate the spatial correlated term
in an appropriate way. In this Section, we will demonstrate how we
incorporate the spatial correlated error term in two types of discrete data
and how to use a \textrm{GEE} procedure to estimate the nonlinear models.
The first example is for count data and the second one is for binary response data.

\subsection{Example 1 \ Count data with a multiplicative spatial error}

A count variable is a variable that takes on nonnegative integer values,
such as the number of patents applied for by a firm during a year. \cite{bloom2013identifying} studies spillover effects of R\&D between firms in terms of firm
patents. Other examples include the number of times someone being arrested
during a given year. Count data examples with upper bound include the number
of children in a family who are high school graduates, in which the upper
bound is number of children in the family (\cite{wooldridge2010econometric}).

\subsubsection{Poisson model}

We first model the count data with a conditional Poisson density, $f\left( y|%
\mathbf{x}\right) =\exp \left[ -\mu \right] \mu ^{y}/y!,$ where $y!=1\cdot
2\cdot ...\cdot \left( y-1\right) \cdot y$ and $0!=1.$ $\mu $ is the
conditional mean of $y.$ The Poisson QMLE requires us only to correctly specify
the conditional mean. A default assumption for the Poisson distribution is
that the mean is equal to the variance. Note that even if $y_{i}$ does not
follow the Poisson distribution, the Q\textrm{MLE} approach will give a
consistent estimator if you use the Poisson density function and a correctly
specified conditional mean (\cite{gourieroux1984pseudo}). Moreover, $%
y_{i}$ even need not to be a count variable. The most common mean function
in applications is the exponential form:
\begin{equation}
\E\left( y_{i}|\mathbf{x}_{i}\right) =\exp \left( \mathbf{x}_{i}%
\mathbf{\beta }_{0}\right) .
\end{equation}%
When spatial correlation exists, we can characterize count data model with a
multiplicative spatial error. \cite{silva2006log} use the Poisson
pseudo-maximum-likelihood (\textrm{PPML}), which is the Poisson QMLE in this
paper, to estimate the gravity model for trade. They argue that constant
elasticity models should be estimated in their multiplicative form, because
using a log linear model can cause bias in coefficient estimates under
heteroskedasticity. Now we further consider the Poisson regression model
with spatial correlation in the multiplicative error,
\begin{equation}
\E\left( y_{i}|\mathbf{x}_{i},v_{i}\right) =v_{i}\exp \left( \mathbf{%
x}_{i}\mathbf{\beta }_{0}\right) ,
\end{equation}%
where $v_{i}$ is the multiplicative spatial error term. Let $\mathbf{v}$
equal $\left( v_{1},v_{2},...,v_{n}\right) ^{\top }.$  (Note that for this example we treat location $i$ as an one dimensional object.) This model is
characterized by the following assumptions:

(1) $\{(\mathbf{x}_{i},v_{i}),i=1,2,...,n\}$ is a mixing sequence on the
sampling space $D_n$, with mixing coefficient $\alpha $.

(2) $\E\left( y_{i}|\mathbf{x}_{i},v_{i}\right) =v_{i}\exp \left(
\mathbf{x}_{i}\mathbf{\beta }_{0}\right) .$

(3) $y_{i},y_{j}$ are independent conditional on $\mathbf{x}_{i},\mathbf{x}%
_{j},v_{i},v_{j},i\neq j.$

(4) $v_{i}$ has a conditional multivariate distribution, $\E\left(
v_{i}|\mathbf{x}_{i}\right) =1$. $\mathrm{Var}\left( v_{i}|\mathbf{x}%
_{i}\right) =\tau ^{2},$ $\mathrm{Cov}\left( v_{i},v_{j}|\mathbf{x}_{i},%
\mathbf{x}_{j}\right) =\tau ^{2}\cdot c\left( d_{ij},\rho \right) ,$ where $%
c\left( d_{ij},\rho \right) $ is the correlation function of $v_{i}$ and $%
v_{j}.$

Under the above assumptions, and again conditional on $D_n$ is
suppressed, we can integrate out $v_{i}$ by using the law of iterated
expectations.%
\begin{equation}
\E\left( y_{i}|\mathbf{x}_{i},D_n\right) =\E%
\left( \E\left( y_{i}|\mathbf{x}_{i},v_{i}\right) |\mathbf{x}_{i},%
D_n\right) =\exp \left( \mathbf{x}_{i}\mathbf{\beta }_{0}\right) .
\label{pmean}
\end{equation}%
If $x_{j}$ is continuous, the partial effects on $\E\left( y_{i}|%
\mathbf{x}_{i},D_n\right) $ is $\exp \left( \mathbf{x}_{i}\mathbf{%
\beta }_{0}\right) \beta _{j}.$ If $x_{j}$ is discrete the partial effects
is the change in $\E\left( y_{i}|\mathbf{x}_{i},\mathbf{D}%
_{n}\right) $ when, say, $x_{K}$ goes from $a_{K}$ to $a_{K}+1$ which is
\begin{equation}
\exp \left( \beta _{1}+x_{2}\beta _{2}+...+\beta _{K}\left( a_{K}+1\right)
\right) -\exp \left( \beta _{1}+x_{2}\beta _{2}+...+\beta _{K}a_{K}\right) .
\end{equation}%
The pooled QMLE gives a consistent estimator for the mean parameters, which
solves:
\begin{equation}
\mathbf{\hat{\beta}}_{PQMLE}=\arg \max_{\theta \in \Theta
}\sum_{i=1}^{n}l_{i}\left( \beta \right) =\sum_{i=1}^{n}y_{i}\mathbf{x}_{i}%
\mathbf{\beta }-\sum_{i=1}^{n}\exp \left( \mathbf{x}_{i}\mathbf{\beta }%
\right) -\sum_{i=1}^{n}\log \left( y_{i}!\right) .
\end{equation}%
Its score function is
\begin{equation}
\sum_{i=1}^{n}\mathbf{x}_{i}^{\top }\left[ y_{i}-\exp \left( \mathbf{x}_{i}%
\mathbf{\check{\beta}}_{\mathrm{QMLE}}\right) \right] =\mathbf{0}.
\end{equation}%
Since this estimator does not account for any heteroskedasticity or spatial
correlation, a robust estimator for the asymptotic variance of partial \textrm{QMLE} estimator is
provided as follows,
\begin{eqnarray}
\widehat{\mathrm{Avar}}\left( \mathbf{\check{\beta}}_{\mathrm{QMLE}}\right) &=&%
\left[ \sum_{i=1}^{n}\exp \left( -\mathbf{x}_{i}\mathbf{\check{\beta}}_{%
\mathrm{QMLE}}\right) \mathbf{x}_{i}^{\top }\mathbf{x}_{i}\right]
^{-1} \\
&&\sum_{i=1}^{n}\sum_{j=1}^{n}k\left( d_{ij}\right) \mathbf{x}_{i}^{\top }\hat{u}_{i}\hat{u}_{j}\mathbf{x}_{j}\left[ \sum_{i=1}^{n}\exp\left( -\mathbf{x}_{i}\mathbf{\check{\beta}}_{\mathrm{QMLE}}\right) \mathbf{x}_{i}^{\top }\mathbf{x}_{i}\right] ^{-1}, \notag
\end{eqnarray}
where $k\left( d_{ij}\right) $ is a kernel function depending on the
distance between observations $i$ and $j$.

Moreover, a very specific nature of the Poisson distribution is that we can
write down the conditional variances and covariances of $y$:%
\begin{equation}
\mathrm{Var}\left( y_{i}|\mathbf{x}_{i},D_n\right) =\exp \left(
\mathbf{x}_{i}\mathbf{\beta }_{0}\right) +\exp \left( 2\mathbf{x}_{i}\mathbf{%
\beta }_{0}\right) \cdot \tau ^{2}.  \label{geevar}
\end{equation}%
The conditional variance of $y_{i}$ given $\mathbf{x}_{i}$ is a function of
both the level and the quadratic of the conditional mean. The traditional
Poisson variance assumption is that the conditional variance should equal the conditional mean. That is, $\mathrm{Var}\left( y_{i}|\mathbf{x}_{i}\right)=\exp \left( \mathbf{x}_{i}\mathbf{\beta }_{0}\right) .$
The Poisson \textrm{GLM} variance assumption is $\mathrm{Var}\left( y_{i}|%
\mathbf{x}_{i}\right) =\sigma ^{2}\exp \left( \mathbf{x}_{i}\mathbf{\beta }%
_{0}\right) $ with an overdispersion or underdispersion parameter $\sigma
^{2}$, which is a constant. Obviously, there is over-dispersion in (\ref%
{geevar}) since $\exp \left( 2\mathbf{x}_{i}\mathbf{\beta }_{0}\right) \cdot
\tau ^{2}\geq 0,$ and the over-dispersion parameter is $1+\exp \left(
\mathbf{x}_{i}\mathbf{\beta }_{0}\right) \cdot \tau ^{2}$, which is changing
with $\mathbf{x}_{i}$. This does not coincide with Poisson variance assumption
and the \textrm{GLM} variance assumption. What is more, the conditional
covariances can be written in the following form,
\begin{equation}
\mathrm{Cov}\left( y_{i},y_{j}|\mathbf{x}_{i},\mathbf{x}_{j},{D}%
_{n}\right) =\exp \left( \mathbf{x}_{i}\mathbf{\beta }_{0}\right) \exp
\left( \mathbf{x}_{j}\mathbf{\beta }_{0}\right) \cdot \tau ^{2}\cdot c\left(
d_{ij},\rho \right) .  \label{geecov}
\end{equation}%
In the group level notation,
\begin{equation}
\E\left( \mathbf{y}_{g}|\mathbf{x}_{g},{D}_{G}\right) =\exp
\left( \mx_{g}\mathbf{\beta }_{0}\right) .
\end{equation}%
Let $\mathbf{W}_{g}$ be the variance-covariance matrix for group $g$ evaluated at the true value $\beta_0,\rho_0$. The
variance of the $l$th element in group $g$ is
\begin{equation}
v_{gl}=\exp \left( \mathbf{x}_{gl}\mathbf{\beta }_{0}\right) \left( 1+\exp
\left( \mathbf{x}_{gl}\mathbf{\beta }_{0}\right) \cdot \tau ^{2}\right) ,
\label{countvar}
\end{equation}%
and the covariance of the $l$th and $m$th elements in group $g$ is
\begin{equation}
r_{glm}=\exp \left( \mathbf{x}_{gl}\mathbf{\beta }_{0}\right) \exp \left(
\mathbf{x}_{gm}\mathbf{\beta }_{0}\right) \cdot \tau ^{2}\cdot c\left(
d_{glm},\rho \right) .  \label{countcov}
\end{equation}%
Here $\mathbf{\gamma =}\left( \tau ^{2},\rho \right) ^{\top }$ and $%
\mathbf{\hat{\gamma}=}\left( \hat{\tau}^{2},\hat{\rho}\right) ^{\top }$ is
an estimator for $\mathbf{\gamma }$. Let $\mathbf{\check{\beta}}_{\mathrm{%
PQMLE}}$ be the partial QMLE estimator in the first step. Then the elements
in $\mathbf{W}_{g}$ can be estimated as

\begin{equation}
\hat{v}_{gl}=\exp \left( \mathbf{x}_{gl}\mathbf{\check{\beta}}_{\mathrm{PQMLE%
}}\right) +\exp \left( 2\mathbf{x}_{gl}\mathbf{\check{\beta}}_{\mathrm{PQMLE}%
}\right) \cdot \hat{\tau}^{2},
\end{equation}%
\begin{equation}
\hat{r}_{glm}=\exp \left( \mathbf{x}_{gl}\mathbf{\check{\beta}}_{\mathrm{%
PQMLE}}\right) \exp \left( \mathbf{x}_{gm}\mathbf{\check{\beta}}_{\mathrm{%
PQMLE}}\right) \cdot \hat{\tau}^{2}\cdot c\left( d_{ij},\hat{\rho}\right) .
\end{equation}%
Based on the conditional distribution, the first order conditions for
\textrm{GEE }is:%
\begin{equation}
\sum_{g}\mx_{g}^{\top }\mathbf{W}_{g}^{-1}\left( \mathbf{\hat{%
\gamma},\hat{\theta}}\right) \left[ \mathbf{y}_{g}-\exp \left( \mx_{g}\mathbf{\hat{%
\beta}}_{\mathrm{GEE}}\right) \right] =0.
\end{equation}%
$\mathbf{\hat{\beta}}_{\mathrm{GEE}}$ is consistent and follows a normal
distribution asymptotically by Theorem \ref{th:consistency} and \ref{th:normality}. We will brief $\mathbf{W}_{g}^{-1}\left( \mathbf{\hat{%
\gamma},\hat{\theta}}\right)$  as $\hat{\mathbf{W}}_{g}^{-1}$ in the following text. The variance
estimator for the asymptotic variance that is robust to misspecification of spatial correlation is:
\begin{eqnarray}
\widehat{\mathrm{Avar}}\left( \mathbf{\hat{\beta}}_{\mathrm{GEE}}\right)
&=& \left( \sum_{g}\exp \left( 2\mx_g^{\top }\mathbf{\hat{%
\beta}}_{\mathrm{GEE}}\right) _{g}\mx_g^{\top }\mathbf{\hat{W}}%
_{g}^{-1}\mx_g\right) ^{-1} \\
&& \left( \sum_{g}\sum_{h (\neq g) }k(d_{gh})\exp \left( \mathbf{x}%
_{g}^{\top }\mathbf{\hat{\beta}}_{\mathrm{GEE}}+\mathbf{x}_{h}\mathbf{\hat{%
\beta}}_{\mathrm{GEE}}\right) \mx_g^{\top }\mathbf{\hat{W}}%
_{g}^{-1}\mathbf{\hat{u}}_{g}\mathbf{\hat{u}}_{h}^{\top }\mathbf{\hat{W}}%
_{h}^{-1}\mathbf{x}_{h}^{\top }\right)  \notag \\
&&\left( \sum_{g}\exp \left( 2\mx_g^{\top }\mathbf{\hat{%
\beta}}_{\mathrm{GEE}}\right) \mx_g^{\top }\mathbf{\hat{W}}%
_{g}^{-1}\mx_g\right) ^{-1}  \notag
\end{eqnarray}%
where $k(d_{gh})$ is a kernel function depending on the distances between
groups. The distances could be the smallest distance between two
observations belonging to different groups.

The pivotal parameters, $\tau ^{2}$ and $\rho ,$ can be estimated using the
Poisson QMLE residuals. Let $\check{u}_{i}^{2}=\left[ y_{i}-\exp \left(
\mathbf{x}_{i}\mathbf{\check{\beta}}_{\mathrm{QMLE}}\right) \right] ^{2}$ be
the squared residuals from the Poisson QMLE. Based on equation (\ref%
{countvar}), $\tau ^{2}$ can be estimated as the coefficient by regressing $%
\check{u}_{i}^{2}-\exp \left( \mathbf{x}_{i}\mathbf{\check{\beta}}_{\mathrm{%
QMLE}}\right) $ on $\exp \left( 2\mathbf{x}_{i}\mathbf{\check{\beta}}_{%
\mathrm{QMLE}}\right) .$ The situation to estimate $\rho $ depends on the
specific form of $c\left( d_{ij},\rho \right) $. We would like to assume a
structure, though it might be wrong, to approximate the true covariance. For
example, suppose the covariance structure of $e_{i}$ and $e_{j}$ is $\exp
\left( \frac{\rho }{d_{ij}}\right) -1,$ and the correlation structure is $%
c\left( d_{ij},\rho \right) =\frac{\exp \left( \frac{\rho }{d_{ij}}\right) -1%
}{\mathrm{e}-1},$ then an estimator for $\rho $ is:

\begin{equation}
\hat{\rho}=\mbox{argmin}_{\rho}\sum_{i=1}^{n}\sum_{j\neq i}^{n}\left \{ \frac{\check{u}_{i}%
\check{u}_{j}}{\exp \left( \mathbf{x}_{i}\mathbf{\check{\beta}}\right) \exp
\left( \mathbf{x}_{j}\mathbf{\check{\beta}}\right) }-\left[ \exp \left(
\frac{\rho }{d_{ij}}\right) -1\right] \right \} ^{2}.
\end{equation}%
Then $\mathbf{\hat{W}}_{g}$ is obtained by plugging $\hat{\tau}^{2}$ and $%
\hat{\rho}$ back in the variance-covariance matrix. We can also directly
calculate $\hat{\rho}$ as
\begin{equation}
\hat{\rho}=\frac{1}{n\cdot \left( n-1\right) }\sum_{i=1}^{n}\sum_{j\neq
i}^{n}\left[ \log \left( \frac{\check{u}_{i}\check{u}_{j}}{\exp \left(
\mathbf{x}_{i}\mathbf{\check{\beta}}\right) \exp \left( \mathbf{x}_{j}%
\mathbf{\check{\beta}}\right) }+1\right) \cdot d_{ij}\right] .
\end{equation}

\subsubsection{The negative binomial model}

Since the conditional variances and covariances can be written in a specific
form, we would consider NegBin II model of \cite{cameron1986econometric} as a
more appropriate model. The NegBin II model can be derived from a model of
multiplicative error in a Poisson model. With an exponential mean, $y_{i}|%
\mathbf{x}_{i},v_{i},D_n\sim $Poisson$\left[ v_{i}\exp \left(
\mathbf{x}_{i}\mathbf{\beta }_{0}\right) \right] $. Under the above
assumptions for Poisson distribution, with the conditional mean (\ref{pmean}%
) and conditional variance (\ref{geevar}), $y_{i}|\mathbf{x}_{i}$ is shown
to follow a negative binomial II distribution. It implies overdispersion,
but where the amount of overdispersion increases with the conditional mean,
\begin{equation}
\mathrm{Var}\left( y_{i}|\mathbf{x}_{i},D_n\right) =\exp \left(
\mathbf{x}_{i}\mathbf{\beta }_{0}\right) \left( 1+\exp \left( \mathbf{x}_{i}%
\mathbf{\beta }_{0}\right) \cdot \tau ^{2}\right) .
\end{equation}%
Now the log-likelihood function for observation $i$ is%
\begin{eqnarray}
l_{i}(\beta, \tau) &=&\left( \tau ^{2}\right) ^{-2}\log \left[ \frac{\left( \tau
^{2}\right) ^{-2}}{\left( \tau ^{2}\right) ^{-2}+\exp \left( \mathbf{x}_{i}%
\mathbf{\beta }\right) }\right] +y_{i}\log \left[ \frac{\exp \left( \mathbf{x%
}_{i}\mathbf{\beta }\right) }{\left( \tau ^{2}\right) ^{-2}+\exp \left(
\mathbf{x}_{i}\mathbf{\beta }\right) }\right]  \label{nblikelihood} \\
&&+\log \left[ \Gamma \left( y_{i}+\left( \tau ^{2}\right) ^{-2}\right)
/\Gamma \left( \left( \tau ^{2}\right) ^{-2}\right) \right] ,  \notag
\end{eqnarray}%
where $\Gamma \left( \cdot \right) $ is the gamma function defined for $r>0$
by $\Gamma \left( r\right) =\int_{0}^{\infty }z^{r-1}\exp \left(
-z\right) dz$. For fixed $\tau ^{2}$, the log likelihood equation in (\ref%
{nblikelihood}) is in the exponential family; see \cite{gourieroux1984pseudo}. Thus the
negative binomial QMLE using (\ref{nblikelihood}) is consistent under
conditional mean assumption only, which is the same as the Poisson QMLE.
Since the negative binomial II likelihood captures the nature of the variance
function, it should deliver more efficient estimation when the data generating process is correctly specified, although the spatial correlation is not accounted. Again, we can use a GEE working correlation
matrix to account for the spatial correlation.

\subsection{Example 2. Binary response data with spatial correlation in the
latent error}

The Probit model is one of the popular binary response models. The dependent
variable $y$ has conditional Bernoulli distribution and takes on the values
zero and one, which indicates whether or not a certain event has occurred.
For example, $y=1$ if a firm adopts a new technology, and $y=0$ otherwise.
The value of the latent variable $y^{\ast }$ determines the outcome of $y$.

Assume the Probit model is%
\begin{eqnarray}
y_{i} &=&1\left[ y_{i}^{\ast }>0\right] , \\
y_{i}^{\ast } &=&\mathbf{x}_{i}\mathbf{\beta }+e_{i}.
\end{eqnarray}%
We do not observe $y_{i}^{\ast }$; we only observe $y_{i}.$ Let $\Phi \left(
\cdot \right) $ be the standard normal cumulative density function (\textrm{%
CDF}), and $\phi $ be the standard normal probability density function (%
\textrm{PDF}). Assume that the mean function $m_{i}\left( \mathbf{x}_{i};%
\mathbf{\beta }\right) \equiv $ $\E\left( y_{i}|\mathbf{x}_{i},%
D_n\right) =\Phi \left( \mathbf{x}_{i}\mathbf{\beta }\right) $ is
correctly specified. $e$ is the spatial correlated latent error. Let $%
\mathbf{e}=\left( e_{1},e_{2},...,e_{n}\right) ^{\top }$. For example,
\cite{pinkse1998contracting} use the following assumption of $\mathbf{e}$:
\begin{equation}
\mathbf{e}=\rho W\mathbf{e}+\mathbf{\varepsilon ,}  \label{spatialcorr}
\end{equation}%
where $\mathbf{\varepsilon =}\left( \varepsilon _{1},\varepsilon
_{2},...,\varepsilon _{n}\right) $ which has a standard normal distribution.
$W$ is a $n\times n$ weight matrix with zeroes on the diagonal and inverse
of distances off diagonal. $\rho $ is a correlation parameter. We can see $e$
can be written as a function of $\varepsilon ,$%
\begin{equation}
\mathbf{e}=\left( I-\rho W\right) ^{-1}\mathbf{\varepsilon }.
\end{equation}%
Thus the conditional expectation of $\me$ is zero. The variance covariance
matrix of $\me$ is
\begin{equation}
\mathrm{Var}\left( \mathbf{e}|\mathbf{x},D_n\right) =\left( I-\rho
W\right) ^{-1}\left( I-\rho W\right) ^{-1\top }.  \label{probitvar}
\end{equation}%
If we assume that $e|x$ has a multivariate normal distribution with mean zero
and variance matrix specified in (\ref{probitvar}). Thus a much simpler
specification is to directly model $e|x$ as a multivariate distribution.
Different from the usual multivariate distribution\footnote{%
A multivariate normal distribution usually specifies the mean vector and
correlation matrix. The correlations do not depend on the pairwise distance
between two variables.}, the covariances of $e$ should depend on the
pairwise distances $d_{ij}$. We also let the covariances depend on a
parameter $\rho $. The above equation can be written in a conditional mean
form:%
\begin{equation}
\E\left( y_{i}|\mathbf{x}_{i},D_n\right) =\Phi \left(
\mathbf{x}_{i}\mathbf{\beta }\right) .  \label{probitmean}
\end{equation}%
It is very natural to write the variance function for a Bernoulli
distribution,
\begin{equation}
\mathrm{Var}\left( y_{i}|\mathbf{x}_{i},D_n\right) =\Phi \left(
\mathbf{x}_{i}\mathbf{\beta }\right) \left[ 1-\Phi \left( \mathbf{x}_{i}%
\mathbf{\beta }\right) \right] .
\end{equation}%
We are interested in the partial effects of $x$ to $y$. For a continuous $%
x_{K}$ the partial effect is
\begin{equation}
\frac{\partial \E\left( y_{i}|\mathbf{x}_{i},D_n\right) }{%
\partial x_{K}}=\Phi \left( \mathbf{x}_{i}\mathbf{\beta }\right) \beta _{K}.
\end{equation}%
For a discrete $x_{K}$, the partial effects when $x_{K}$ changes from $a_{K}$
to $a_{K}+1$ is
\begin{equation}
\Phi \left( \beta _{1}+x_{2}\beta _{2}+...+\beta _{K}\left( a_{K}+1\right)
\right) -\Phi \left( \beta _{1}+x_{2}\beta _{2}+...+\beta _{K}a_{K}\right) .
\end{equation}%
A simple one-step estimation is the pooled Bernoulli quasi-MLE (\textrm{QMLE}%
), which is obtained by maximizing the pooled Probit log-likelihood. The log
likelihood function for each observation is%
\begin{equation}
l_{i}\left( \mathbf{\beta }\right) \mathbf{=}y_{i}\log \Phi \left( \mathbf{x}%
_{i}\mathbf{\beta }\right) \mathbf{+}\left( 1-y_{i}\right) \log \left[
1-\Phi \left( \mathbf{x}_{i}\mathbf{\beta }\right) \right] .
\end{equation}
Let $\check{u}_{i}=y_{i}-\Phi \left( \mathbf{x}_{i}\mathbf{\check{\beta}}%
\right) ,i=1,2,...,n$ be the residuals from the partial \textrm{QMLE}
estimation. At this stage, a robust estimator for the asymptotic variance of $%
\mathbf{\check{\beta}}_{\mathrm{PQMLE}}$ can be computed as follows:%
\begin{eqnarray}
\widehat{\mathrm{Avar}}\left( \mathbf{\check{\beta}}_{\mathrm{PQMLE}}\right)
&=&\left( \sum_{i=1}^{n}\frac{\phi ^{2}\left( \mathbf{x}_{i}\mathbf{\check{%
\beta}}_{\mathrm{PQMLE}}\right) \mathbf{x}_{i}^{\top }\mathbf{x}_{i}}{\Phi
\left( \mathbf{x}_{i}\mathbf{\check{\beta}}\right) \left[ 1-\Phi \left(
\mathbf{x}_{i}\mathbf{\check{\beta}}_{\mathrm{PQMLE}}\right) \right] }%
\right) ^{-1} \\
&&\left( \sum_{i=1}^{n}\sum_{j\neq i}^{n}k\left( d_{ij}\right) \frac{\phi
\left( \mathbf{x}_{i}\mathbf{\check{\beta}}_{\mathrm{PQMLE}}\right) \phi
\left( \mathbf{x}_{j}\mathbf{\check{\beta}}_{\mathrm{PQMLE}}\right) \mathbf{x%
}_{i}^{\top }\check{u}_{i}\check{u}_{j}\mathbf{x}_{j}}{\Phi \left( \mathbf{%
x}_{i}\mathbf{\check{\beta}}_{\mathrm{PQMLE}}\right) \left[ 1-\Phi \left(
\mathbf{x}_{i}\mathbf{\check{\beta}}_{\mathrm{PQMLE}}\right) \right] }\right)
\notag \\
&&\left( \sum_{i=1}^{n}\frac{\phi ^{2}\left( \mathbf{x}_{i}\mathbf{\check{%
\beta}}_{\mathrm{PQMLE}}\right) \mathbf{x}_{i}^{\top }\mathbf{x}_{i}}{\Phi
\left( \mathbf{x}_{i}\mathbf{\check{\beta}}_{\mathrm{PQMLE}}\right) \left[
1-\Phi \left( \mathbf{x}_{i}\mathbf{\check{\beta}}_{\mathrm{PQMLE}}\right) %
\right] }\right) ^{-1},  \notag
\end{eqnarray}%
where $k\left( d_{ij}\right) $ is the kernel weight function that depends on
pairwise distances. This partial QMLE and its robust variance-covariance
estimator provides a legitimate way of the estimation of the spatial Probit
model.

We use partial QMLE as a first-step estimator. An estimator for the working
variance matrix for each group is%
\begin{equation}
\check{v}_{gl}=\Phi \left( \mathbf{x}_{gl}\mathbf{\check{\beta}}_{\mathrm{%
PQMLE}}\right) \left[ 1-\Phi \left( \mathbf{x}_{gl}\mathbf{\check{\beta}}_{%
\mathrm{PQMLE}}\right) \right] .
\end{equation}%
And assume the working correlation function for $l$th and $m$th elements in
group $g$ is
\begin{equation}
r_{glm}=\mathbf{C}\left( d_{glm},\rho \right) .
\end{equation}%
For example, suppose that %
\begin{equation}
\mathbf{C}\left( d_{glm},\rho \right) =\frac{\rho }{d_{glm}}\text{ or }\exp
\left( -\frac{d_{glm}}{\rho }\right) .
\end{equation}%
Let $\check{u}_{i}$ be the partial QMLE residual and $\hat{r}_{i}=\check{u}%
_{i}/\sqrt{\check{v}_{i}}$, for $i=1,2,...,n,$ be the standardized
residuals. $\mathbf{\hat{C}}_{ij}$ equals the sample correlation of $\check{u%
}_{i}/\sqrt{\check{v}_{i}}$ and $\check{u}_{j}/\sqrt{\check{v}_{j}}$. Using
the correlations within groups, one estimator of $\rho $ is
\begin{equation}
\hat{\rho}=\arg \min_{\rho} \sum_{g}\sum_{l=1}^L\sum_{m < l}\left[
\hat{r}_{gl}\hat{r}_{gm}-C\left( d_{glm},\rho \right) \right] ^{2},
\end{equation}%
for $l<m.$

The second-step \textrm{GEE} estimator for $\mathbf{\beta }$ is
\begin{equation}
\mathbf{\hat{\beta}}_{\mathrm{GEE}}=\arg \min_{\beta }\sum_{g}\left(
\mathbf{y}_{g}-\Phi \left( \mx_g\mathbf{\beta }\right) \right)
^{\top }\mathbf{\hat{W}}_{g}^{-1}\left( \mathbf{y}_{g}-\Phi \left( \mathbf{%
x}_{g}\mathbf{\beta }\right) \right) .
\end{equation}%
The first order condition is
\begin{equation}
\sum_{g}\phi \left( \mx_g\mathbf{\hat{\beta}}_{\mathrm{GEE}%
}\right) ^{\top }\mathbf{\hat{W}}_{g}^{-1}\left( \mathbf{y}_{g}-\Phi
\left( \mx_g\mathbf{\hat{\beta}}_{\mathrm{GEE}}\right) \right) =%
\mathbf{0.}
\end{equation}%
$\mathbf{\hat{\beta}}_{\mathrm{GEE}}$ is consistent and follows a normal
distribution asymptotically by Theorem \ref{th:normality}. $%
\mathbf{\hat{\beta}}$ is consistent even for misspecified spatial
correlation structure $\mathbf{\hat{W}}_{g}$. The asymptotic variance estimator that is robust
to misspecification of spatial correlation is:
\begin{eqnarray}
\widehat{\mathrm{Avar}}\left( \mathbf{\hat{\beta}}_{\mathrm{GEE}}\right)&=&\left(
\sum_{g}\phi ^{2}\left( \mx_g\mathbf{\hat{\beta}}_{\mathrm{GEE%
}}\right) \mx_g^{\top }\mathbf{\hat{W}}_{g}^{-1}\mathbf{x}%
_{g}\right) ^{-1} \\
&&\left( \sum_{g}\sum_{h (\neq g) }k(d_{gh})\phi \left( \mathbf{x}_{g}
\mathbf{\hat{\beta}}_{\mathrm{GEE}}\right) \phi \left( \mathbf{x}_{h}\mathbf{%
\hat{\beta}}_{\mathrm{GEE}}\right) \mx_g^{\top }\mathbf{\hat{W}}%
_{g}^{-1}\mathbf{\hat{u}}_{g}\mathbf{\hat{u}}_{h}^{\top }\mathbf{\hat{W}}%
_{h}^{-1}\mathbf{x}_{h}\right)  \notag \\
&&\left( \sum_{g}\phi ^{2}\left( \mx_g\mathbf{\hat{\beta}}_{%
\mathrm{GEE}}\right) \mx_g^{\top }\mathbf{\hat{W}}_{g}^{-1}%
\mx_g\right) ^{-1},  \notag
\end{eqnarray}%
where $k(d_{gh})$ is a kernel function which depends on the distances
between groups.

An alternative approach is to specify the specific distributions of the
multivariate normal distribution of the latent error, and then find the
estimator for the spatial correlation parameter for the latent error within
a MLE framework. For example, see \cite{wang2013partial}.

\section{Theorems}\label{sectheorem}
In this section, we provide the assumptions and results on the theoretical properties our GEE estimation.
\subsection{Consistency and Normality}
\begin{itemize}
\item[A.2)] $\{y_{i}\}$ is $L_4-$ uniformly NED on the $\alpha-$ mixing random field $\vps = \{\vps_{i}, i\in D_n\},$ where $\vps_i = (x_i, \epsilon_i)$($\epsilon_i$s are some underlying innovation processes). With the $\alpha-$ mixing coefficient $\overline{\alpha}(u,v,r) \leq (u+v)^\tau \hat{\alpha}(r),$ and $\hat{\alpha}(r) \to 0$ as $r\to \infty.$ Assume that $\sum^{\infty}_{r = 1} r^{d-1}\hat{\alpha}(r)< \infty .$ The NED  constant is $d_{n,i}$, ($\sup_{n, i\in T_n} d_{n,i} < \infty$) and the NED coefficient is $\psi(s)$ with $\psi(s) \to 0$, where recall that $L$ is the group size, and $\sum^{\infty}_{r=0} r^{d-1} \psi(r) \to 0$.

    \textbf{Remark:}
    See section \ref{ass2} for a detailed verification of the special cases.
   It should be noted that by the
Lyapunov inequality, if $\{y_{i}\}$ is $L_k$-NED, then it is also $L_l$-NED with the same coefficients
$d_{n,i}$ and $\psi(s)$ for any $l \leq k$.

\item[A.3)] The parameter space $\mathbf{\Theta }\times \mathbf{\Gamma}$ is a compact
subset on $\mathcal{R}^{p+q}$ with metric $\nu(.,.)$.

\item[A.4)]$q_{g}\left( \mathbf{\theta,\gamma }\right) $, ($s_g(\mathbf{\theta,\gamma})$), ($h_g(\mathbf{\theta,\gamma})$) are $\mathbf{R}^{p_w}\times \Theta \times \Gamma \to \mathbf{R}^{1} (\mathbf{R}^{p}), (\mathbf{R}^{p^2}) $ measurable for each $\theta \in \Theta, \gamma \in \Gamma$, and Lipschitz continuous on $\mathbf{\Theta }\times \Gamma$.

\item[A.5)] $\E \sup_{\theta \in \Theta} |m_{g,i}|^{r}\leq C_1$, $\E \ssup |w_{g,i,j}|^r \leq  C_2$, $\E |y_{g,i}|^{r} \leq C_3$\\ $\E \sup_{\theta \in \Theta} |\nabla_{\theta}m_{g,i}|^{r}\leq C_4$,  where $C_1, C_2, C_3, C_4$ are constants, where $w_{g,i,j}, y_{g,i}, m_{g,i}$ is the elementwise component for $\mathbf{W}_g^{-1}(\theta,\gamma)$, $\mathbf{y}_g$, $\mm_g(\theta,\gamma).$ $r > 4p'' \vee 4p'.$ $m_{g,i}, w_{g,i,j}$ are continuously differentiable up to the third order derivatives, and its $r$th moment (the supreme over the parameter space) is bounded up to the second order derivatives.
    Define $d_g = \max_{i \in B_g} d_{n,i}$, $M_G \defeq \max_g d_g \vee c_{g,q} \vee c_{g,s}\vee c_{g,h}.$ Also assume that $\sup_G\sup_g (c_{g,q} \vee c_{g,s}\vee c_{g,h})/ d_g \leq C_5$, where $C_5$ is a constant.

    \textbf{Remark:}
Condition A.5) guarantees that there exists non random positive constants such that $c_{g,q},c_{g,s},c_{g,h}, g \in D_G, n\geq 1$ such that  $\E |q_{g}/c_{g,q}|^{p''} <
     \infty$, $\E |s_{g}/c_{g,s}|_2^{p''}< \infty$, $\E |h_{g}/c_{g,h}|_1^{p''} <\infty$ .

\end{itemize}

From now on we work with group level asymptotics.
Define the field $\tilde{\vps} = \{\vps_g: g \in 1, \cdots, G\}$ with grouped observations.
First of all suppose that $D_n$ is divided by $G$ blocks with $\cup^G_1 B_g = D_n \subset T_n$,  and the group level lattice is denoted as $D_G$.
Define the distance between two groups $g,h$ as $\rho(g,h) = \min_{i \in B_g, j \in B_h} \rho(i,j).$
 And the $\alpha-$ mixing coefficient between two union of groups for $U = \{g_1,\cdots, g_L\}$, $V = \{h_1, \cdots, h_M\}$,  $\rho(U,V) = \min_{l \in 1 \cdots L,m \in 1,\cdots, M} \rho(g_l,h_m)$ is thus $\tilde{\alpha}(u, v, r)  = \tilde{\alpha}(L\leq u, M \leq v, \rho(U,V) \geq r) = \sup_{L\leq u, M \leq v, \rho(U,V) \geq r} \alpha(\sigma(U), \sigma(V)) $.
If the group size are the same, i.e. $L$, then the mixing coefficients of the grouped observations have the following relationship with respect to it in the original field $\tilde{\alpha}(u, v, r) = \alpha(uL, vL, r).$
We can assume $\tilde{\alpha}(u, v, r)  = (uL+ vL)^{\tau} \hat{\alpha}(r)$.

Assume that  $L^{\tau}\hat{\alpha}(r) \to 0$ as $r \to \infty,$ and $\tilde{\vps}$ would maintain the $\alpha-$ mixing property.
Define the ball around group $g$ with radius $s$ to be $\mathcal{F}_g(s) = \sigma\{\cup_{h: \rho(g,h)\leq s} B_h\}.$

\begin{itemize}

\item[A.6)]The $\alpha-$ mixing coefficients of the input field $\tilde{\vps}$ satisfy $\tilde{\alpha}(u,v,r) \leq \phi(uL,vL) \hat{\alpha}(r),$ with $\phi(uL,vL) = (u+v)^{\tau}L^{\tau}$ and for some $\hat{\alpha}(r)$,
 $\sum^{\infty}_{r=1} L^{\tau} r^{d-1} \hat{\alpha}(r)< \infty.$

\item[A.7)] We assume moment conditions on the objects involved to prove the NED property of $\mathbf{H}_G(\theta,\gamma)$.
$b_{ij} \defeq e_i^{\top}(\mathbf{1}^{\top} \mathbf{W}_g(\theta,\gamma) \otimes I_g )|\partial{\mbox{Vec}(\nabla\mm_g(\theta))}/\partial \theta|_a e_j.$
$c_{ij} =  e_i^{\top}(\mathbf{1}^{\top}\otimes \nabla \mm_g^{\top}(\theta))|\partial{\mbox{Vec}(\nabla\mm_g(\theta))}/\partial \theta|_a e_j$.
$\|\ssup b_{ij}\|$ and $\|\ssup c_{ij}\|$ are finite.
%${L}^{1/2} \|c_{ij}\| d_{g} \psi(s)\to 0,$ recall that $d_g \defeq \max_{i\in B_g} d_{n,i}$.

\item[A.8)](Identifiability)Let $\overline{Q}_{G}\left( \mathbf{\theta }, \mathbf{\gamma}\right) \defeq \frac{1}{|M_{G}||D_G|}%
\sum_{g}\mathrm{\E}\left( q_{g}\left( \mathbf{\theta }, \mathbf{\gamma}\right) \right) .$
 Recall that $Q_{\infty}( \theta, \gamma)\defeq \lim_{G\rightarrow \infty }\bar{Q}_{G}\left( \mathbf{
\theta, \gamma }\right) .$  Assume that $\theta^0, \gamma^0$ are identified unique in a sense that \\ $\liminf_{G\to \infty}\mathbf{inf}_{\theta \in \Theta: \nu(\theta, \theta^0) \geq \vps }Q_{G}\left( \mathbf{\theta }, \mathbf{\gamma}\right) > c_0> 0$, for any $\gamma$ and a positive constant $c_0$.
\end{itemize}

\textbf{Remark} A.8) can be implied from positive definiteness of $\mW_g(\theta, \gamma)$ and the same identification assumption $\liminf_{G\to \infty}\mbox{inf}_{\theta \in \Theta: \nu(\theta, \theta_0) \geq \vps }Q'_{G} (\theta, \gamma)> c_0>0$ on $Q'_{G} (\theta, \gamma)\defeq \frac{1}{M_{G}|D_G|}%
\sum_{g\in |D_G|} \E\left[ \mathbf{y}_{g}-\mathbf{m}%
_{g}\left( \mathbf{x}_{g};\mathbf{\theta }\right) \right]^{\top }\left[ \mathbf{y}_{g}-\mathbf{m}%
_{g}\left( \mathbf{x}_{g};\mathbf{\theta }\right) \right]$.
As it can be seen that with probability $1- \Co_p(1)$ \\$\liminf_{G\to \infty}\mbox{inf}_{\theta \in \Theta: \nu(\theta, \theta_0) \geq \vps }Q_{G}\left( \mathbf{\theta }, \mathbf{\gamma}\right)> \liminf_{G\to \infty}\mbox{inf}_{\theta \in \Theta: \nu(\theta, \theta_0) \geq \vps }\lambda_{min} \{\mW_g(\theta, \gamma)\} Q'_{\infty} (\theta, \gamma),$ where $\lambda_{min} \{\mW_g(\theta, \gamma)\}$ is the minimum eigenvalue of the matrix $\lambda_{min} \{\mW_g(\theta, \gamma)\}$. If we assume that with probability $1-\Co_p(1),$ $\lambda_{min} \{\mW_g(\theta, \gamma)\}>c$ where $c$ is a positive constant.
We now comment on assumptions, Condition A.2) is concerning the $L_2$ NED property of our data generating processes.
A.3) and A.4) are the standard regularities assumptions. A.5) is a few moment assumptions on the statistical objects involved in the estimation. A.6) is the mixing coefficients restrictions after grouping observations. A.7) is again moment conditions on the elementwise Hessian matrices. A.8) is a condition on identification of our estimator.
Given the assumptions, we can provide the consistency property of our estimation.

\begin{theorem} \label{th:consistency}
(Consistency) Under A.1)-A.8) the GEE-estimator in (\ref{gobj}) is
consistent, that is, $\nu(\mathbf{\hat{\theta}}, \mathbf{%
\theta }^{0}) \to _{p}0$ as $G\rightarrow \infty .$
\end{theorem}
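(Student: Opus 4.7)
The plan is the standard M-estimator consistency argument adapted to the grouped spatial NED setting with a plug-in nuisance parameter. Three ingredients are required: (i) a uniform law of large numbers (ULLN) for $Q_G(\theta,\gamma)$ on the compact set $\Theta\times\Gamma$; (ii) consistency of the first-step nuisance estimator $\hat{\gamma}$ from (\ref{pregamma}); (iii) the separation condition A.8). Given (i)--(iii), a uniform continuity argument combined with the standard argmin continuous-mapping argument yields $\nu(\hat{\theta},\theta^0)\to_p 0$.

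The first step, and the main technical burden, is transferring the NED structure from the underlying responses to the group objective. Since $q_g(\theta,\gamma)=(\mathbf{y}_g-\mathbf{m}_g(\theta))^{\top}\mathbf{W}_g^{-1}(\theta,\gamma)(\mathbf{y}_g-\mathbf{m}_g(\theta))$ is a smooth (by A.4) quadratic function of $\mathbf{y}_g$, and $\{y_i\}$ is $L_4$-uniformly NED on $\varepsilon$ by A.2), a standard NED-preservation result for Lipschitz/polynomial transformations combined with the uniform moment bounds in A.5) shows that $q_g(\theta,\gamma)/c_{g,q}$ is $L_2$-NED on the grouped field $\tilde{\varepsilon}$ with coefficient of the same order as $\psi$. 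The mixing input needed by Jenish--Prucha is then supplied by A.6). The Lipschitz property in A.4) provides the stochastic equicontinuity ingredient and compactness A.3) provides the tightness needed in their functional ULLN; applying the Jenish--Prucha (2012) ULLN for NED fields on unevenly spaced lattices gives
\[
\sup_{(\theta,\gamma)\in\Theta\times\Gamma}\bigl|Q_G(\theta,\gamma)-\bar{Q}_G(\theta,\gamma)\bigr|\;\to_p\;0,
\]
where $\bar Q_G(\theta,\gamma)=(M_GG)^{-1}\sum_g\E q_g(\theta,\gamma)$.

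Next, the first-step estimator $\hat{\gamma}$ in (\ref{pregamma}) is itself an M-estimator built from PQMLE standardized residuals. Running the same NED-preservation plus Jenish--Prucha ULLN machinery on the auxiliary criterion $\sum_g(\mathbf{e}_g(\check\theta)-\mathbf{z}_g(\gamma))^{\top}(\mathbf{e}_g(\check\theta)-\mathbf{z}_g(\gamma))$ (using consistency of the PQMLE $\check\theta$, which is the trivial group-size-one specialisation of the current result) yields $\hat{\gamma}\to_p\gamma^0$. Combining this with the uniform continuity of $\bar{Q}_G$ in $\gamma$ (a consequence of A.4), A.5) and compactness) gives
\[
\sup_{\theta\in\Theta}\bigl|Q_G(\theta,\hat{\gamma})-\bar{Q}_G(\theta,\gamma^0)\bigr|\;\to_p\;0.
\]
The identification hypothesis A.8), interpreted (as in the remark after A.8)) for $\bar Q_G$ and valid for \emph{any} admissible $\gamma$, says that outside any $\epsilon$-neighbourhood of $\theta^0$ the population objective is uniformly bounded away from its minimum. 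A standard two-line argmin argument then concludes: on the event $\sup_\theta|Q_G(\theta,\hat\gamma)-\bar Q_G(\theta,\gamma^0)|<c_0/3$, one has $\nu(\hat\theta,\theta^0)<\epsilon$, which has probability tending to one.

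The hardest step is the NED transfer in Step 1. Three subtleties must be handled: (a) the quadratic dependence on $\mathbf{y}_g$ is why A.2) was stated at the $L_4$ level, so that the cross products inherit $L_2$-NED; (b) $\mathbf{W}_g^{-1}(\theta,\gamma)$ enters nonlinearly in $\gamma$ and has to be shown uniformly bounded on $\Theta\times\Gamma$, which uses A.5) together with the positive-definiteness of $\mathbf W_g$ noted in the remark following A.8); and (c) the grouping itself recalibrates the NED: the original NED is indexed on $D_n\subseteq T_n$ with scaling $d_{n,i}$ while we need NED on the coarser group lattice $D_G$, and the scaling constant $M_G=\max_g d_g\vee c_{g,q}\vee c_{g,s}\vee c_{g,h}$ from A.5) is exactly what makes the resulting NED coefficients summable against $r^{d-1}$, as required for Jenish--Prucha to apply.
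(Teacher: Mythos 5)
Your proposal is correct and follows essentially the same route as the paper: transfer the NED property from the $L_4$-NED responses to $q_g(\theta,\gamma)$ using the moment bounds in A.5) and the Lipschitz condition A.4), invoke the Jenish--Prucha ULLN for $Q_G$ over the compact $\Theta\times\Gamma$ (this is the paper's Lemma \ref{lem1:consistency}, which works at the $L_1$-NED level where your sketch uses $L_2$ --- either suffices for the ULLN), and conclude via the identification condition A.8) and the standard argmin argument. The only point worth flagging is your intermediate step establishing $\hat{\gamma}\to_p\gamma^0$ from (\ref{pregamma}): under A.1)--A.8) alone this is not available (the paper proves it separately as Proposition \ref{pre} under the additional primed assumptions A.8)', A.9)', A.11)', and uses it only for asymptotic normality), and it is also not needed here, because A.8) bounds the population objective away from its minimum outside any $\nu$-ball around $\theta^0$ uniformly over \emph{all} $\gamma\in\Gamma$, so the uniform LLN over $\Theta\times\Gamma$ already forces $\nu(\hat{\theta},\theta^0)\to_p 0$ regardless of whether the plug-in $\hat{\gamma}$ is consistent.
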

Theorem \ref{th:consistency} indicates the consistency of the estimation as long as the number of groups tends to infinity.
The proof is in the Appendix. To prove further the asymptotic normality of the estimation we need in addition the following assumptions.

\begin{itemize}
\item[A.9)] The true point $\theta^0, \gamma^0$ lies in the interior point of $\Theta, \Gamma$. $\hat{\gamma}$ is estimated with $|\hat{\gamma} - \gamma^0|_2 = \Co_p(G^{-1/2}).$\\
\textbf{Remark}
Verification of this assumption is in Proposition \ref{pre} and its proof in the Appendix.

\item[A.10)] $c'<\lambda_{min}(M_G^{-2}\E\left(\nabla \mathbf{m}_{g}^{\top }(\theta^0)\mathbf{W}_{g}^{
-1}(\theta^0,\gamma^0)\nabla \mathbf{m}_{g}(\theta^0)\right))\\< \lambda_{max} (M_G^{-2}\E\left(\nabla \mathbf{m}_{g}^{\top }(\theta^0)\mathbf{W}_{g}^{
-1}(\theta^0,\gamma^0)\nabla \mathbf{m}_{g}(\theta^0)\right) < C'$ is positive definite, and $c'$ and $C'$ are two positive constants.\\

Define $\mathbf{u}_g = \my_g - \mm_g(\theta^0)$ and $\hat{\mathbf{u}}_g = \my_g- \mm_g(\hat{\theta}) $
\begin{equation}
\mathbf{S}_{G}\left( \mathbf{\theta ,\hat{\gamma}}\right) =\frac{1}{M_{G}|D_G|}%
\sum_{g}\nabla \mathbf{m}_{g}^{\top}\left( \mathbf{\theta }\right)
\mathbf{W}_{g}^{-1}\left( \mathbf{\theta,\hat{\gamma}}\right) \left[ \mathbf{y}_{g}-%
\mathbf{m}_{g}\left( \mathbf{\theta }\right) \right] . \label{quasiscore}
\end{equation}
Define
\begin{eqnarray}\label{as}
AS_G
&=&\frac{1}{G}\sum_{g}\E\left[ \nabla \mathbf{m}_{g}^{\top
}\left( \mathbf{\theta }^{0}\right) \mathbf{W}_{g}^{-1}\left( \mathbf{\theta }^{0},\mathbf{\gamma
}^{0 }\right) \mathbf{u}_{g}\mathbf{u}_{g}^{\top }\mathbf{W}%
_{g}^{-1}\left( \mathbf{\theta }^{0},\mathbf{\gamma }^{0 }\right) \nabla \mathbf{m}_{g}\left(
\mathbf{\theta }^{0}\right) \right]  \\
&&+\frac{1}{G}\sum_{g}\sum_{h, h\neq g}\E\left[ \nabla
\mathbf{m}_{g}^{\top }\left( \mathbf{\theta }^{0}\right) \mathbf{W}%
_{g}^{-1}\left( \mathbf{\theta }^{0}, \mathbf{\gamma }^{0 }\right) \mathbf{u}_{g}\mathbf{u}^{\top}_{h}%
\mathbf{W}_{h}^{-1}\left(\mathbf{\theta }^{0}, \mathbf{\gamma }^{0 }\right) \nabla \mathbf{m}%
_{h}\left( \mathbf{\theta }^{0}\right) \right] , \nonumber
\end{eqnarray}
 and $AS_{\infty} = \lim_{G\to \infty} AS_G$.

\item[A.11)] $\mathbf{S}_{G}\left( \mathbf{\hat{\gamma}, \hat{\theta}} \right) = \Co_p(1)$.
$\inf_G |D_G|^{-1} M_G^{-2}  \lambda_{min}(\mathbf{AS}_{\infty})> 0,$ where $\mathbf{AS}_{\infty}$ is defined in equation (\ref{as}).
The mixing coefficients satisfy $\sum^{\infty}_{r =1}r^{(d \tau^*+d)-1}L^{\tau^*} \hat{\alpha}^{\delta/(2+\delta)}(r)< \infty.$ ($\tau^* = \delta \tau /(4+2\delta)$).
\end{itemize}

A.9) is concerning the the pre-estimation of the nuisance parameter $\gamma$, and A.10), A.11) are two standard assumptions on the regularities of the estimation. Note that $\mathbf{S}_{G}\left( \mathbf{ \hat{\theta},\hat{\gamma}} \right) = \Co_p(1)  = 0$ if $\hat{\theta}, \hat{\gamma}$ lies in the interior point of the parameter space.
In the following, we verify that with our proposal of estimating $\hat{\gamma}$ in (\ref{pregamma}) in Section \ref{sec2} , we will achieve A.9).

\begin{proposition} \label{pre}
Under A.1)-A.3), A.5), A.6) and A.8)', A.9)', A.11)', ( A.8)', A.9)', A.11)'are defined in the Appendix), the estimator solving equation (\ref{pregamma}) satisfies,
\begin{equation}
|\hat{\gamma} - \gamma^0|_2 = \Co_p(1/{\sqrt{G}}).
\end{equation}
\end{proposition}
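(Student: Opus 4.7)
The plan is to view (\ref{pregamma}) as a nonlinear least squares (NLS) problem for $\gamma$ with a first-step plug-in estimator $\check{\theta}$, and to establish $\sqrt{G}$-consistency in the usual two-stage way: (i) prove consistency of $\hat{\gamma}$, then (ii) linearize the first-order condition around $\gamma^{0}$ and apply a spatial CLT, with a careful accounting of the plug-in contribution of $\check{\theta}-\theta^{0}$.

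\textbf{Step 1 (consistency).} Define
\[
Q^{\gamma}_{G}(\theta,\gamma)=\frac{1}{G}\sum_{g}\bigl(\mathbf{e}_{g}(\theta)-\mathbf{z}_{g}(\gamma)\bigr)^{\top}\bigl(\mathbf{e}_{g}(\theta)-\mathbf{z}_{g}(\gamma)\bigr).
\]
Under A.1)--A.3), A.5) and the primed identification/moment assumptions A.8)', A.9)', A.11)', the standardized residual products $\mathbf{e}_{g}(\theta)$ inherit $L_{2}$-NED from $\{y_{i}\}$ through smooth transformations (division by $\sqrt{\check v_{i}}$ and pairwise products), so a ULLN of the Jenish--Prucha type gives $\sup_{\theta,\gamma}|Q^{\gamma}_{G}(\theta,\gamma)-\bar Q^{\gamma}_{\infty}(\theta,\gamma)|\to_{p}0$. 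Since $\check{\theta}\to_{p}\theta^{0}$ by Theorem \ref{th:consistency} (applied to the first-step PQMLE, which is the GEE with $\mathbf{W}_{g}=I$), and the limit is uniquely minimized at $\gamma^{0}$ (a direct identification assumption on the parameterization of $\mathbf{R}_{g}$), standard M-estimation arguments yield $\hat{\gamma}\to_{p}\gamma^{0}$.

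\textbf{Step 2 (linearization).} The first-order condition $G^{-1}\sum_{g}\nabla_{\gamma}\mathbf{z}_{g}(\hat{\gamma})^{\top}\bigl(\mathbf{e}_{g}(\check{\theta})-\mathbf{z}_{g}(\hat{\gamma})\bigr)=0$ Taylor-expanded around $\gamma^{0}$ gives
\[
\sqrt{G}(\hat{\gamma}-\gamma^{0})=\hat H_{G}^{-1}\,\frac{1}{\sqrt{G}}\sum_{g}\nabla_{\gamma}\mathbf{z}_{g}(\gamma^{0})^{\top}\bigl(\mathbf{e}_{g}(\check{\theta})-\mathbf{z}_{g}(\gamma^{0})\bigr)+o_{p}(1),
\]
where $\hat H_{G}=G^{-1}\sum_{g}\nabla_{\gamma}\mathbf{z}_{g}(\bar{\gamma})^{\top}\nabla_{\gamma}\mathbf{z}_{g}(\bar{\gamma})$ for some $\bar{\gamma}$ between $\hat\gamma$ and $\gamma^{0}$. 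By the ULLN and A.11)' (positive definiteness of the Hessian limit), $\hat H_{G}$ converges in probability to an invertible matrix. I then split
\[
\mathbf{e}_{g}(\check{\theta})-\mathbf{z}_{g}(\gamma^{0})=\underbrace{\bigl(\mathbf{e}_{g}(\theta^{0})-\mathbf{z}_{g}(\gamma^{0})\bigr)}_{=:\,\xi_{g}}+\underbrace{\bigl(\mathbf{e}_{g}(\check{\theta})-\mathbf{e}_{g}(\theta^{0})\bigr)}_{=:\,\Delta_{g}}.
\]
Under correct specification of the conditional mean, $\E\xi_{g}=0$, and $\xi_{g}$ is $L_{2}$-NED on $\tilde{\varepsilon}$ (products of NED variables are NED under A.5) moments). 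Hence A.6) together with the mixing rate in A.11)' permits the Jenish--Prucha CLT for NED random fields, giving $G^{-1/2}\sum_{g}\nabla_{\gamma}\mathbf{z}_{g}(\gamma^{0})^{\top}\xi_{g}=\mathcal{O}_{p}(1)$.

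\textbf{Step 3 (plug-in correction).} The remaining piece is $G^{-1/2}\sum_{g}\nabla_{\gamma}\mathbf{z}_{g}(\gamma^{0})^{\top}\Delta_{g}$. A further Taylor expansion in $\theta$ gives $\Delta_{g}=\nabla_{\theta}\mathbf{e}_{g}(\tilde\theta)(\check\theta-\theta^{0})$; since the first-step PQMLE satisfies $\check{\theta}-\theta^{0}=\mathcal{O}_{p}(G^{-1/2})$ (a special case of Theorem \ref{th:normality} applied with working variance $I$, under the same A.1)--A.11) with obvious modifications collected as A.8)', A.9)', A.11)'), and the sample average of $\nabla_{\gamma}\mathbf{z}_{g}(\gamma^{0})^{\top}\nabla_{\theta}\mathbf{e}_{g}(\tilde\theta)$ is $\mathcal{O}_{p}(1)$ by the ULLN, this plug-in term is also $\mathcal{O}_{p}(1)$ after the $\sqrt{G}$ scaling. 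Combining Steps 2 and 3, $\sqrt{G}(\hat{\gamma}-\gamma^{0})=\mathcal{O}_{p}(1)$, which is the claim.

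\textbf{Main obstacle.} The delicate point is not the mechanics of the Taylor expansion but verifying that the constructed moment summands $\nabla_{\gamma}\mathbf{z}_{g}(\gamma^{0})^{\top}\xi_{g}$ satisfy the NED/mixing hypotheses of the Jenish--Prucha CLT after grouping: $\xi_{g}$ is a quadratic functional of standardized residuals across pairs $(l,m)$ inside block $B_{g}$, so one must bound the NED coefficients of these products (using A.5) moment bounds and Cauchy--Schwarz to combine two NED factors) and check that the mixing-rate assumption $\sum_{r}r^{(d\tau^{*}+d)-1}L^{\tau^{*}}\hat{\alpha}^{\delta/(2+\delta)}(r)<\infty$ from A.11)' applies with the grouped lattice $D_{G}$. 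Once this is in place, the rest follows from the standard M-estimation apparatus already developed for Theorem \ref{th:normality}.
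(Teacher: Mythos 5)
Your proposal is correct and follows essentially the same route as the paper's proof: expand the first-order condition of (\ref{pregamma}) around $(\theta^0,\gamma^0)$, bound the centered score $\sum_g(\mathbf{e}_g(\theta^0)-\mathbf{z}_g(\gamma^0))\partial\mathbf{z}_g/\partial\gamma$ at rate $\CO_p(\sqrt{G})$ via the NED/mixing machinery of \cite{Jenish2012}, control the Hessian-type averages by the ULLN, and absorb the plug-in contribution of $\check{\theta}-\theta^0=\CO_p(n^{-1/2})$ into the remainder. The only cosmetic difference is that you fold the second-derivative term $\partial^2\mathbf{z}_{glm}/\partial\gamma\partial\gamma^{\top}$ times the residual into the $o_p(1)$ remainder, whereas the paper carries it explicitly before dispatching it by the same ULLN argument.
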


 $\mathbf{H}_{\infty} \defeq \lim_{G\to \infty} \E\mathbf{H}_G(\theta^0,\gamma^0),$ where $\mathbf{H}_G(\theta^0,\gamma^0)$ is defined in equation (\ref{eq:hessian}).
 It is not surprising to see that our estimation will be asymptotically normally distributed, with a variance covariance matrix of a sandwich form $AV(\theta^0)$, which involves the Hessian.  The rate of convergence is shown to be $\sqrt{G}$.

\begin{theorem} \label{th:normality}
Under A.1) - A.11), we have $AV(\theta^0) \defeq \mathbf{H}_{\infty}^{\top} \mathbf{AS}_{\infty} \mathbf{H}_{\infty}$.
\begin{equation}
\sqrt{G}AV(\theta^0)^{-1/2}(\hat{\theta} - \theta^0) \Rightarrow \mathbb{N}(0,I_p).
\end{equation}
\end{theorem}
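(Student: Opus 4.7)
The strategy is the standard M-estimator linearization adapted to the spatial NED setting. Starting from the first-order condition $\mathbf{S}_{G}(\hat{\theta},\hat{\gamma}) = \Co_{p}(1)$ in A.11), I would perform a mean-value expansion of $\mathbf{S}_{G}(\cdot, \hat{\gamma})$ in $\theta$ about $\theta^{0}$ to obtain
\begin{equation*}
0 = \mathbf{S}_{G}(\theta^{0},\hat{\gamma}) + \mathbf{H}_{G}(\bar{\theta},\hat{\gamma})(\hat{\theta}-\theta^{0}) + \Co_{p}(G^{-1/2}),
\end{equation*}
with $\bar{\theta}$ on the segment between $\hat{\theta}$ and $\theta^{0}$. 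Solving for $\hat{\theta}-\theta^{0}$ and rescaling by $\sqrt{G}$ reduces the problem to (i) proving $\mathbf{H}_{G}(\bar{\theta},\hat{\gamma}) \to_{p} \mathbf{H}_{\infty}$ and showing $\mathbf{H}_{\infty}$ is invertible, and (ii) establishing a CLT for $\sqrt{G}\,\mathbf{S}_{G}(\theta^{0},\hat{\gamma})$.

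For (i), consistency from Theorem \ref{th:consistency} together with $|\hat{\gamma}-\gamma^{0}|_{2} = \Co_{p}(G^{-1/2})$ from A.9) and Proposition \ref{pre} reduces the problem to pointwise convergence of each of $\mathbf{H}_{G,1},\mathbf{H}_{G,2},\mathbf{H}_{G,3}$ in (\ref{eq:hessian}) at $(\theta^{0},\gamma^{0})$. I would apply the ULLN of \cite{Jenish2009} to each piece by first showing that its summand is $L_{2}$-NED on the grouped mixing field $\tilde{\varepsilon}$: smooth, polynomially-growing functions of $(\mathbf{y}_{g},\nabla \mathbf{m}_{g}, \mathbf{W}_{g}^{-1})$ inherit NED coefficients from the base field under the moment bounds of A.5) and A.7), and Lipschitz continuity on the compact set $\Theta \times \Gamma$ (A.3), A.4)) together with the scaling constants $M_{G}$ defined in A.5) upgrades pointwise to uniform convergence. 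Positive definiteness of the limit is guaranteed by A.10).

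For (ii), I would isolate the effect of the plug-in $\hat{\gamma}$ by expanding once more:
\begin{equation*}
\sqrt{G}\,\mathbf{S}_{G}(\theta^{0},\hat{\gamma}) = \sqrt{G}\,\mathbf{S}_{G}(\theta^{0},\gamma^{0}) + \nabla_{\gamma}\mathbf{S}_{G}(\theta^{0},\bar{\gamma})\cdot\sqrt{G}(\hat{\gamma}-\gamma^{0}).
\end{equation*}
Because $\E[\mathbf{y}_{g}-\mathbf{m}_{g}(\theta^{0})\mid \mathbf{x}_{g}] = 0$, the population limit of $\nabla_{\gamma}\mathbf{S}_{G}(\theta^{0},\gamma^{0})$ vanishes, so a ULLN argument analogous to step (i) gives $\nabla_{\gamma}\mathbf{S}_{G}(\theta^{0},\bar{\gamma}) = \Co_{p}(1)$; combined with $\sqrt{G}(\hat{\gamma}-\gamma^{0}) = \CO_{p}(1)$ from A.9), the cross-term is $\Co_{p}(1)$. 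The leading term $(G^{1/2}M_{G})^{-1}\sum_{g}\nabla \mathbf{m}_{g}^{\top}(\theta^{0})\mathbf{W}_{g}^{-1}(\theta^{0},\gamma^{0})\mathbf{u}_{g}$ is a sum of mean-zero, group-level $L_{2}$-NED random vectors whose variance converges to $\mathbf{AS}_{\infty}$ by (\ref{as}); the CLT of \cite{Jenish2012} applies under the mixing-rate and moment conditions in A.6), A.10), A.11), yielding $\sqrt{G}\,\mathbf{S}_{G}(\theta^{0},\gamma^{0}) \Rightarrow \mathcal{N}(0,\mathbf{AS}_{\infty})$. Slutsky then delivers the stated sandwich form $AV(\theta^{0})$ and the claimed standardized limit. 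The main obstacle I foresee is the NED bookkeeping at the group level: verifying that the $L_{4}$-NED property of $\{y_{i}\}$ on $\varepsilon$ propagates through products with $\nabla \mathbf{m}_{g}$ and inverses of $\mathbf{W}_{g}(\theta,\gamma)$, and through aggregation into $B_{g}$, yielding group-level NED coefficients $\psi(s)$ and scaling factors whose summability $\sum_{r} r^{d-1}\psi(r) < \infty$ and matching moment orders $r > 4p''\vee 4p'$ remain compatible with the Jenish--Prucha hypotheses — A.5), A.6), A.7) are tailored to this, but the explicit radii and moment accounting (and the handling of $M_{G}$ in the scaling) require careful execution.
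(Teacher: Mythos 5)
Your proposal is correct and follows essentially the same route as the paper: a mean-value expansion of the first-order condition, the key observation that $\nabla_\gamma \mathbf{S}_G(\theta^0,\gamma^0)$ has vanishing population limit (the paper's $\mathbf{F}_0 = \mathbf{0}$) so the first-step estimation of $\gamma$ does not affect the asymptotic variance, a ULLN for the Hessian terms via their group-level NED property, and the CLT of \cite{Jenish2012} applied to $G^{-1/2}\sum_g s_g(\theta^0,\gamma^0)$. The only cosmetic difference is that you expand sequentially in $\theta$ and then $\gamma$ whereas the paper uses a single joint expansion, and the NED bookkeeping you flag as the main obstacle is exactly what the paper's Lemmas \ref{lem1:consistency}--\ref{lem3:consistency} carry out.
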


\subsection{Consistency of variance covariance matrix estimation}
In this subsection, we propose a semiparametric estimator of the asymptotic variance in Theorem \ref{th:normality}, and prove its consistency. The estimation is tailored to account for the spatial dependency of the underlying process. This facilitates us to create a confidence interval for our estimation.
%We propose the following estimator to estimate the asymptotic variance of $%
%\mathbf{\hat{\theta}}.$

 First let%
\begin{eqnarray}
\mathbf{\hat{A}} &\mathbf{=}&\frac{1}{|D_G|}\sum_{g}\nabla \mathbf{\hat{m}}%
_{g}^{\top }\mathbf{\hat{W}}_{g}^{-1}\nabla \mathbf{\hat{m}}_{g}, \\
\mathbf{\hat{B}} &=&\frac{1}{|D_G|}\sum_{g}\sum_{h\neq g}k(d_{gh})\nabla
\mathbf{\hat{m}}_{g}^{\top }\mathbf{\hat{W}}_{g}^{-1}\mathbf{\hat{u}}_{g}%
\mathbf{\hat{u}}_{h}^{\top }\mathbf{\hat{W}}_{h}^{-1}\nabla \mathbf{\hat{m}%
}_{h}^{\top },
\end{eqnarray}%
where $\nabla \mathbf{\hat{m}}_{g}\equiv \nabla \mathbf{\hat{m}}_{g}\left(\mathbf{\hat{\theta}}\right) ,$ $\mathbf{\hat{W}}_{g}\equiv \mathbf{\hat{W}}_{g}(\mathbf{\hat{\gamma}}, \hat{\theta})$.

% If
%the variance covariance matrix $\mathbf{W}_{g}$ is correctly specified and
%consistently estimated, a model-based consistent variance estimator of $%
%\mathbf{\hat{\theta}}$ is $\left( \sum_{g\in D_G}\nabla \mathbf{\hat{m}}%
%_{g}^{\top}\mathbf{\hat{W}}_{g}^{-1}\nabla \mathbf{\hat{m}}_{g}\right)
%^{-1}.$

 The estimator of $\mathrm{AV}\left( \mathbf{\theta}^0\right)$ which is robust to misspecification of the variance covariance
matrix is
\begin{eqnarray}
\widehat{\mathrm{AV}}\left( \mathbf{\hat{\theta}}\right)
&=&|D_G| \left( \sum_{g}\nabla \mathbf{\hat{m}}_{g}^{\top}\mathbf{\hat{W}}%
_{g}^{-1}\nabla \mathbf{\hat{m}}_{g}\right) ^{-1}
\notag \\
&&\left(
\sum_{g}\sum_{h(\neq g)}\nabla \mathbf{\hat{m}}_{g}^{\top}\mathbf{%
\hat{W}}_{g}^{-1}k\left( d_{gh}\right) \mathbf{\hat{u}}_{g}\mathbf{\hat{u}}%
_{h}^{\top }\mathbf{\hat{W}}_{h}^{-1}\nabla \mathbf{\hat{m}}_{h}\right)\nonumber \\
&&\left( \sum_{g}\nabla \mathbf{\hat{m}}_{g}^{\top}\mathbf{\hat{W}}%
_{g}^{-1}\nabla \mathbf{\hat{m}}_{g}\right) ^{-1},  \label{Avar}\\
&=& \hat{\mathbf{A}}^{-1}\hat{\mathbf{B}} \hat{\mathbf{A}}^{-1}
\end{eqnarray}%
where $k\left( d_{gh}\right) $ is the kernel function depending on the distance
between group $g$ and $h$, i.e. $\rho(g,h)$, and a bandwidth parameter $h_g$. As noted in \cite{kelejian2007hac}, there are many choices
 for the kernel functions, such as  rectangular kernel, Bartlett or triangular kernel, etc.
 In particular, without loss of generality, we can choose the Bartlett kernel function $k\left( d_{gh}\right)  = 1- \rho(g,h)/h_g$, for $\rho(g,h) < h_g$ and $k\left(g,h\right) =0$ for $ \rho(g,h) \geq h_g$. Further, we can obtain the average partial
effects (\textrm{APE}) of interest and carry on valid inference.

We now list the assumptions needed for the consistency of estimator of $AV(\theta^0)$.
\begin{itemize}
\item[B.1)] $\hat{\mbu}_g- \mbu_g = C_g \Delta_g,$ where $C_g$ is a $L \times p$, and $\Delta_g$ is a $p\times 1$ dimensional vector, with
the condition that $ |C_g|_2 = \CO_p(1),$ and $|\Delta_g|_2 = \CO_p((p G)^{-1/2}).$

\item[B.2)] The moment is bounded by a constant $ \max_{h:  \rho(h,g)\leq h_g}\E |Z_h|^{q'} \leq M L^2,$ $q'\geq 1,$ and $M$ is a constant, where $Z_{h}\defeq  \nabla \mathbf{m}^{\top}_{h}(\theta^0)\mathbf{W}_{h}^{-1}(\theta^0,\gamma^0)\mathbf{u}_{h}$.
\item[B.3)]
$|k(d_{gh})-1|\leq C_k |d_{gh}/h_g|^{\rho_K}$ for $d_{gh}\leq 1$ for some constant $\rho_k\geq 1$ and $0<C_k<\infty$
$ M_G^{-2}|D_G|^{-1}\sum_g \sum_h  |\rho(g,h)/h_g|^{\rho_k}\|e_i^{\top}Z_g^{\top}\|\|Z_he_j\| = \Co(1).$
\item[B.4)]
 Assume that $h_g^{d/q'} |D_G|^{-1}L^{d/q'}L^2 = \Co(1)$ , $h_g^{2d} L^{2d}\sum^{\infty}_{r=1} r^{(d\tau^*+d)-1}\hat{\alpha}^{\delta/(2+\delta)}(r)= \CO(G)$, and $h_g^{2d} \sum^{\infty}_{r=1}  L^{2d} r^{d-1}\psi((r-h_g)_{+}) = \CO(G)$, ($(r-h_g)_{+} = \max(r-h_g,0)$) where $\delta$ is a constant and
 $\delta^* = \delta \tau/(2+\delta)$.

% \item[B.5)]
%Suppose that the pre-estimator is $\sqrt{G}$ consistent, i.e. $|\mathbf{\hat{\gamma}-
%\gamma}|_2 = \CO_p(1/\sqrt{G})$
%\textbf{Remark} This assumption is verified in Proposition
\end{itemize}

B.1) is an assumption for decomposing the difference between the residuals and the true error, as in \cite{kelejian2007hac}. B.2) is about the moment bound and B.3) is on property of the kernel function. B.4) constrains on the spatial dependence coefficients and the bandwidth length. We provide in the following theorem the consistency of the $\widehat{\mathrm{AV}}\left( \mathbf{\hat{\theta}}\right)$. It is worth noting that we prove an elementwise version of the consistency, and the results below can be verified equivalently in any matrix norm, as we consider fixed dimension parameter.

\begin{theorem}\label{th:variance}
Under assumption B.1)- B.4) and A.1) - A.8). The variance-covariance estimator in (\ref{Avar}) is consistent. ${\widehat{\mathrm{AV}}}\left( \mathbf{\hat{\theta}}\right)\to_p \mathrm{AV}\left( \mathbf{\theta}^0\right).$
\end{theorem}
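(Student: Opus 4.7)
The plan is to prove the claim by establishing separately that $\hat{\mathbf A}$ converges in probability to $A_\infty\defeq \lim_{G\to\infty}|D_G|^{-1}\sum_g \E[\nabla m_g(\theta^0)^\top W_g^{-1}(\theta^0,\gamma^0)\nabla m_g(\theta^0)]$ and that $\hat{\mathbf B}$ converges in probability to $AS_\infty$. Since by A.10) the matrix $A_\infty$ is positive definite (equivalently, $H_\infty=-A_\infty$ is invertible, because the last two pieces of \eqref{eq:hessian} vanish in expectation at $\theta^0$), the continuous mapping theorem then delivers $\widehat{AV}(\hat\theta)=\hat{\mathbf A}^{-1}\hat{\mathbf B}\hat{\mathbf A}^{-1}\to_p A_\infty^{-1}AS_\infty A_\infty^{-1}=AV(\theta^0)$.

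For $\hat{\mathbf A}$, I would apply a uniform law of large numbers for NED fields (Jenish--Prucha, 2012) to the map $(\theta,\gamma)\mapsto|D_G|^{-1}\sum_g\nabla m_g(\theta)^\top W_g^{-1}(\theta,\gamma)\nabla m_g(\theta)$ on $\Theta\times\Gamma$. The NED assumption A.2), the Lipschitz/measurability conditions in A.4), and the moment bounds A.5) provide exactly the hypotheses of the ULLN (as in the treatment of $\mathbf H_{G,1}$ in the proof of Theorem 2). Combined with the consistency of $(\hat\theta,\hat\gamma)$ from Theorem \ref{th:consistency} and Proposition \ref{pre}, this yields $\hat{\mathbf A}\to_p A_\infty$.

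For $\hat{\mathbf B}$, set $Z_g\defeq\nabla m_g(\theta^0)^\top W_g^{-1}(\theta^0,\gamma^0)u_g$ with $\hat Z_g$ its plug-in analogue, and split
\[
\hat{\mathbf B}=\tilde{\mathbf B}+R_G,\qquad\tilde{\mathbf B}\defeq\frac{1}{|D_G|}\sum_g\sum_{h\neq g}k(d_{gh})\,Z_g Z_h^\top.
\]
The remainder $R_G$ is handled by decomposing $\hat Z_g-Z_g$ into three increments coming from the substitutions $u_g\mapsto\hat u_g$, $\nabla m_g(\theta^0)\mapsto\nabla m_g(\hat\theta)$, and $W_g^{-1}(\theta^0,\gamma^0)\mapsto\hat W_g^{-1}$; each is controlled by B.1), A.4), A.5), and the rates in A.9). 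A Cauchy--Schwarz step together with B.2) then bounds each elementwise contribution by a constant times the effective number of within-bandwidth pairs, which B.4) keeps $o_p(1)$. For the bias of $\tilde{\mathbf B}$, subtract $AS_\infty^{\mathrm{off}}\defeq\lim_G|D_G|^{-1}\sum_g\sum_{h\neq g}\E[Z_g Z_h^\top]$ and express the difference as $|D_G|^{-1}\sum_g\sum_{h\neq g}(k(d_{gh})-1)\E[Z_gZ_h^\top]$; the elementwise bound $|k(d_{gh})-1|\le C_k|\rho(g,h)/h_g|^{\rho_k}$ from B.3), combined with the summability hypothesis in B.3), makes this $o(1)$.

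The main obstacle is the variance bound for $\tilde{\mathbf B}$. For each fixed element, $\Var(e_i^\top\tilde{\mathbf B}e_j)$ is a quadruple sum over $(g_1,h_1,g_2,h_2)$ with nonzero contributions confined to the kernel's support. I would bound each covariance $\Cov(Z_{g_1,i}Z_{h_1,j},\,Z_{g_2,i}Z_{h_2,j})$ using the covariance inequality for NED functions of an $\alpha$-mixing input (Jenish--Prucha, 2012, Appendix), which splits it into a mixing piece, controlled by the grouped coefficient $\tilde\alpha$ and hence by $L^{\tau}\hat\alpha(r)$ via A.6), and an NED residual controlled by $\psi$. Summing over the quadruple and tracking the inflation factors $L$ and $L^{2d}$ in A.6) and B.4), the two balance conditions in B.4) cap the quadruple sum at $O(G)$, giving $\Var(\tilde{\mathbf B})=O(G^{-1})=o(1)$; the required fourth-order moments come from A.5) via $r>4p''\vee 4p'$. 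This fourth-moment/mixing/bandwidth bookkeeping is the delicate part: $h_g$ must grow fast enough to kill the bias in B.3) but slowly enough to keep the variance summable, and B.4) is precisely the simultaneous balance that makes both possible.
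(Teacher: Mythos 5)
Your proposal follows essentially the same route as the paper: a ULLN for the NED field gives $\hat{\mathbf A}\to_p\mathbf A_0$, and $\hat{\mathbf B}-\mathbf B_0$ is split into exactly the paper's three pieces — the generated-error term controlled by B.1)--B.2), the variance term controlled by the NED/$\alpha$-mixing covariance inequality of \cite{Jenish2012} together with the bandwidth balance in B.4), and the kernel-bias term controlled by B.3) — before Slutsky assembles the sandwich. The argument and the role assigned to each assumption match the paper's proof, so no substantive differences to report.
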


\section{Monte Carlo Simulations}\label{sec5}

In this section, we use Monte Carlo simulations to investigate the finite
sample performances of our proposed GEE approach with groupwise data compared to the partial QMLE. We
simulated count data and binary response data separately. We show that our GEE method is very critical for improving the efficiency of our estimation.The simulation mechanism is described as follows.

\subsection{Sampling Space}

We use sample sizes of 400 or 1600. We sample observations on a lattice.
For example, for sample size of 400, the sample space is a $20\times 20$
square lattice. Each observation resides on the intersections of this
lattice. The locations for the data are $\{(r,s):r,s=1,2,...,20\}$. The
distance $d_{ij}$ between location $i$ and $j$ is chosen to be the Euclidean distance.
Suppose $A(a_{i},a_{j})$ and $B(b_{i},b_{j})$ are the two points on the
lattice; their distance $d_{ij}$ is $\sqrt{%
(a_{i}-b_{i})^{2}+(a_{j}-b_{j})^{2}}$. The spatial correlation is based
on a given parameter $\rho $ and $d_{ij}$. The data are divided into groups
of 4 and the number of groups are set to be 100 for sample size 400. Similarly, for the
sample size of 1600, we use a $40\times 40$ lattice. We still use sample
size of 4 in each group and there are 400 groups in total. For simplicity,
we keep the pairwise distances in different groups the same.

\subsection{Count data}
\subsubsection{Data generating process}

In the count data case, for a Poisson distribution the variances and covariances of the count dependent
variable can be written in closed forms given the spatial correlation in the
underlying spatial error term. That is, by knowing the correlations in the spatial
error term, we can derive the correlations in the count dependent variable as shown in (\ref{geevar}) and (\ref{geecov}).
Consider the following spatial count data generating process: 1. $%
v_{i}$ is simulated as a multivariate lognormal variable with \textrm{E}$%
\left( v_{i}\right) =1$, exponentiating an underlying multivariate normal
distribution using with correlation matrix $W$. Let $a_{i}$ be the underlying multivariate normal
distributed variable. Then $v_{i}=\exp \left( a_{i}\right) $ follows a multivariate lognormal
distribution. We describe the underlying spatial process in Case 1, 2, and 3 as three
special cases to demonstrate different spatial correlations. 2. The coefficient parameters and explanatory variables are set as follows:
 $\beta _{1}=0.5,\beta _{2}=1,\beta _{3}=1,\beta _{4}=1.$; $ x_{2}\sim $\textrm{N}$\left(
0,0.25\right), x_{3}\sim\mathrm{Uniform}\left( 0,1\right), x_{5}\sim $\textrm{N}$\left(
0,1\right), x_{4}=1[x_{5}>0].$ 3. The mean function for individual i is $m_{i}=v_{i}\exp \left( \beta
_{1}+\beta _{2}x_{2}+\beta _{3}x_{3}+\beta _{4}x_{4}\right) ;$ 4. Finally we draw the dependent variable from the Poisson distribution with mean $m_{i}$: $y_{i}\sim\mathrm{Poisson}\left( m_{i}\right) .$ Specifically, the underlying spatial error $a_{i}$ has the
following three cases.

\textbf{Case 1}. $a_{i}=\left( I-\rho W\right) ^{-1}e_{i},e_{i}\sim $\textrm{N}$\left( 0,1\right);$ $W$ is the matrix with $W_{g}$ on the diagonal,
 $g=1,2,...,G.$ Other elements in $W$ are equal to zero. For group size equal to four,%
\begin{equation}
W_{g}=\frac{1}{3}\left(
\begin{array}{cccc}
0 & 1 & 1 & 1 \\
1 & 0 & 1 & 1 \\
1 & 1 & 0 & 1 \\
1 & 1 & 1 & 0%
\end{array}%
\right) .  \label{wg1}
\end{equation}

\textbf{Case 2}. $a_{i}=\left( I-\rho W\right) ^{-1}e_{i},e_{i}\sim $\textrm{N}$\left( 0,1\right);$ $W$ is the matrix with $W_{g}$ on the diagonal,
$g=1,2,...,G.$ The $(l,m)$th element in $W_{g}$, $W_{g\_lm}=$\ \ $\frac{\rho }{6\ast
d_{g_{\_}lm}},$ $\rho =0,0.5,1,1.5,l\neq m;$ $W_{g_{\_}lm}=0,l=m$ for group $g$.
Correlations are zero if observations are in different groups. For group size equal to four,%
\begin{equation}
W_{g}=\frac{1}{6}\left(
\begin{array}{cccc}
0 & \frac{\rho}{d_{g\_12}}\ & \frac{\rho}{d_{g\_13}}\ & \frac{\rho}{d_{g\_14}}\ \\
\frac{\rho}{d_{g\_21}}\ & 0 & \frac{\rho}{d_{g\_23}}\ & \frac{\rho}{d_{g\_24}}\ \\
\frac{\rho}{d_{g\_31}}\ & \frac{\rho}{d_{g\_32}}\ & 0 & \frac{\rho}{d_{g\_34}}\ \\
\frac{\rho}{d_{g\_41}}\ & \frac{\rho}{d_{g\_42}}\ & \frac{\rho}{d_{g\_43}}\ & 0%
\end{array}%
\right) .  \label{wg2}
\end{equation}

\textbf{Case 3}. In this case, the DGP has the following differences from Case 1 and Case 2.
$a_{i}$ is simulated as a multivariate lognormal variable by
exponentiating an underlying multivariate normal distribution \textrm{N}$%
\left( -\frac{1}{2},1\right) $ using with correlation matrix $W$. $W_{ij}=$\
\ $\frac{\rho }{d_{ij}},$ $\rho =0,0.2,0.4,0.6,i\neq j;$ $W_{ii}=1;$ $%
i,j=1,2,...,N.$ The underlying normal distribution implies that $v_{i}$
follows a multivariate lognormal distribution with \textrm{E}$\left(
v_{i}\right) =1$. We set $\beta _{1}=-1,\beta _{2}=1,\beta _{3}=1,\beta _{4}=1.$ $%
x_{2}$ follows a multivariate normal distribution \textrm{N}$\left(0,W\right) ;$
In this case, the data has general spatial correlations for
each pair of observations if $\rho \neq 0.$

\begin{equation}
\mathbf{W}=\left(
\begin{array}{ccccc}
1 & \frac{\rho }{d_{12}} & \frac{\rho }{d_{13}} & \cdots  & \frac{\rho }{%
d_{1N}} \\
\frac{\rho }{d_{21}} & 1 &  & \vdots  & \frac{\rho }{d_{2N}} \\
\frac{\rho }{d_{31}} &  & 1 &  & \vdots  \\
\vdots  & ... &  & \ddots  & \frac{\rho }{d_{N-1,N}} \\
\frac{\rho }{d_{N1}} & \frac{\rho }{d_{N2}} & \cdots  & \frac{\rho }{%
d_{N,N-1}} & 1%
\end{array}%
\right)   \label{W}
\end{equation}%
\begin{equation*}
\label{count3}
\end{equation*}

\subsubsection{Simulation results}

Table \ref{count1}, Table \ref{count2} and Table \ref{count3} show three cases of simulation results with
1000 replications with two different samples and group sizes: (1) $N=400,$ $%
G=100,$ $L=4$ (2) $N=1600,$ $G=400,$ $L=4$. There are four estimators,
Poisson partial \textrm{QMLE} estimator, \ Poisson GEE, Negative Binomial II (NB II) partial QMLE, and  NB II GEE. For simplicity, we use an exchangeable working correlation matrix for GEE estimators. We can see that, first as spatial correlation increases the GEE methods has smaller standard
deviations than QMLE. Second, when there is little spatial correlation, GEE
does not increase much finite sample bias due to accounting for possible
spatial correlation.
%\red{Third NB II QMLE and GEE estimators are less biased on
%average than the Poisson QMLE and GEE. Meanwhile, the standard deviations of
%NB II estimators are much, almost as half, smaller than Poisson estimators.}

In Case 1, when there is no spatial correlation, the Poisson QMLE should be
as efficient as GEE asymptotically. We can see that when $\rho =0,$ the
coefficient estimates and their standard deviations of Poisson QMLE and GEE
are pretty close, which means that there is little finite sample bias due to
accounting for possible spatial correlation when there is actually no
spatial correlation. The standard deviations for the estimated coefficients of Poisson QMLE and GEE
are almost the same. The standard deviation of $\hat{\beta}_{2}$ equals $%
0.259$ for Poisson QMLE and $0.260$ for Poisson GEE when $\rho =0$ for a
sample size of 400. As $\rho $ grows larger. the GEE estimator shows more
and more efficiency improvement over the partial QMLE. For example, for a
sample size of 400, when $\rho =1,$ the standard deviation of $\hat{\beta}%
_{2}$ equals $0.267$ for Poisson QMLE and $0.259$ for Poisson GEE. When $%
\rho =1.5,$ the standard deviation of $\hat{\beta}_{2}$ equals $0.320$ for
Poisson GEE and $0.302$ for Poisson PQMLE. The NB II GEE also has some
improvement over NB II PQMLE. When $\rho =1,$ the standard deviation of $%
\hat{\beta}_{2}$ equals $0.234$ for NB II PQMLE and $0.226$ for NB II GEE.
When $\rho =1.5,$ the standard deviation of $\hat{\beta}_{2}$ equals $0.276$
for NB II PQMLE and $0.261$ for NB II GEE. When sample size increases from
400 to 1600, we see the similar scenarios. Case 2 and Case 3 have shown
similar efficiency results for the GEE estimators.

{\begin{table}[tbp] \centering%
\caption{ Means and Standard Deviations for Count Case 1, averaged over
$1000$ samples.}%
\renewcommand{\arraystretch}{0.7}%
\scriptsize
\begin{tabular}{llcclllllll}
\hline\hline
&  & \multicolumn{4}{c}{N=400,G=100,L=4} &  & \multicolumn{4}{l}{
N=1600,G=400,L=4} \\
&  & Poisson & GEE-poisson & NB II & GEE-nb2 &  & Poisson & GEE-poisson & NB II & GEE-nb2
\\ \cline{1-2}\cline{3-11}
$\rho =0$ & $\hat{\beta}_{2}$ & $1.000$ & $0.999$ & \multicolumn{1}{c}{$1.002
$} & $1.002$ &  & 0.994 & 0.994 & 0.997 & 0.997 \\
& s.d.$\left( \hat{\beta}_{2}\right) $ & $\mathbf{0.259}$ & ${0.260}$
& \multicolumn{1}{c}{$\mathbf{0.227}$} & ${0.228}$ & ${}$ & $\mathbf{%
0.160}$ & $\mathbf{0.160}$ & $\mathbf{0.136}$ & $\mathbf{0.136}$ \\
\cline{2-2}\cline{3-11}
& $\hat{\beta}_{3}$ & $1.000$ & $0.999$ & \multicolumn{1}{c}{$1.002$} & $%
1.002$ &  & 0.999 & 1.000 & 0.998 & 0.998 \\
& s.d.$\left( \hat{\beta}_{3}\right) $ & $\mathbf{0.259}$ & $0.260$ &
\multicolumn{1}{c}{$\mathbf{0.227}$} & $0.228$ &  & $\mathbf{0.137}$ & $\mathbf{0.137}$ & $\mathbf{0.121}$ & $%
\mathbf{0.121}$ \\ \cline{2-2}\cline{3-11}
& $\hat{\beta}_{4}$ & $0.998$ & $0.998$ & \multicolumn{1}{c}{$0.996$} & $%
0.996$ &  & 1.003 & 1.003 & 1.003 & 1.003 \\
& s.d.$\left( \hat{\beta}_{4}\right) $ & $\mathbf{0.146}$ & $0.147$ &
\multicolumn{1}{c}{$\mathbf{0.137}$} & $\mathbf{0.137}$ &  & $\mathbf{0.071}$ & $\mathbf{0.071}$ & $\mathbf{0.067}$ & $%
\mathbf{0.067}$ \\ \cline{1-2}\cline{3-11}
$\rho =0.5$ & $\hat{\beta}_{2}$ & $0.985$ & $0.985$ & \multicolumn{1}{c}{$%
0.993$} & $0.994$ &  & 1.000 & 1.000 & 1.000 & 0.999 \\
& s.d.$\left( \hat{\beta}_{2}\right) $ & $0.256$ & $\mathbf{0.255}$ &
\multicolumn{1}{c}{$0.216$} & $\mathbf{0.215}$ &  & $\mathbf{0.127}$ & $\mathbf{0.127}$ & $0.110$ & $%
\mathbf{0.109}$ \\ \cline{2-2}\cline{3-11}
& $\hat{\beta}_{3}$ & $1.006$ & $1.006$ & \multicolumn{1}{c}{$1.004$} & $%
1.005$ &  & 1.005 & 1.004 & 1.003 & 1.003 \\
& s.d.$\left( \hat{\beta}_{3}\right) $ & $0.211$ & $\mathbf{0.210}$ &
\multicolumn{1}{c}{$0.180$} & $\mathbf{0.179}$ &  & $\mathbf{0.106}$ & $\mathbf{0.106}$ & $\mathbf{0.092}$ & $%
\mathbf{0.092}$ \\ \cline{2-2}\cline{3-11}
& $\hat{\beta}_{4}$ & $1.002$ & $1.002$ & \multicolumn{1}{c}{$1.003$} & $%
1.003$ &  & 1.003 & 1.003 & 1.002 & 1.002 \\
& s.d.$\left( \hat{\beta}_{4}\right) $ & $\mathbf{0.117}$ & $\mathbf{0.117}$ &
\multicolumn{1}{c}{$0.111$} & $\mathbf{0.110}$ &  & $\mathbf{0.058}$ & $\mathbf{0.058}$ & $\mathbf{0.054}$ & $%
\mathbf{0.054}$ \\ \cline{1-2}\cline{3-11}
$\rho =1$ & $\hat{\beta}_{2}$ & $0.987$ & $0.988$ & \multicolumn{1}{c}{$0.991
$} & $0.991$ &  & $0.998$ & 0.997 & 0.997 & 0.997 \\
& s.d.$\left( \hat{\beta}_{2}\right) $ & ${0.267}$ & $\mathbf{0.259}$
& \multicolumn{1}{c}{${0.234}$} & $\mathbf{0.226}$ &  & ${0.130%
}$ & $\mathbf{0.127}$ & ${0.130}$ & $\mathbf{0.128}$ \\
\cline{2-2}\cline{3-11}
& $\hat{\beta}_{3}$ & $1.003$ & $1.003$ & \multicolumn{1}{c}{$1.004$} & $%
1.004$ &  & 1.000 & 0.999 & 1.000 & 0.999 \\
& s.d.$\left( \hat{\beta}_{3}\right) $ & $0.220$ & $\mathbf{0.214}$ &
\multicolumn{1}{c}{$0.195$} & $\mathbf{0.190}$ &  & $0.105$ & $\mathbf{0.102}$ & $0.094$ & $%
\mathbf{0.091}$ \\ \cline{2-2}\cline{3-11}
& $\hat{\beta}_{4}$ & $0.995$ & $0.996$ & \multicolumn{1}{c}{$0.996$} & $%
0.997$ &  & $1.000$ & 1.000 & 1.000 & 1.000 \\
& s.d.$\left( \hat{\beta}_{4}\right) $ & $0.120$ & $\mathbf{0.119}$ &
\multicolumn{1}{c}{$0.113$} & $\mathbf{0.111}$ &  & $0.060$ & $\mathbf{0.058}$ & ${0.056}$ & $%
\mathbf{0.054}$ \\ \cline{1-2}\cline{3-11}
$\rho =1.5$ & $\hat{\beta}_{2}$ & \multicolumn{1}{l}{\ 0.980} &
\multicolumn{1}{c}{\ 0.982} & \multicolumn{1}{c}{0.995} & $0.998$ &  &
0.988 & 0.988 & 0.995 & 0.997 \\
& s.d.$\left( \hat{\beta}_{2}\right) $ & ${0.320}$ & $\mathbf{0.302}$
& \multicolumn{1}{c}{${0.276}$} & $\mathbf{0.261}$ &  & ${0.183%
}$ & $\mathbf{0.173}$ & ${0.154}$ & $\mathbf{0.145}$ \\
\cline{2-2}\cline{3-11}
& $\hat{\beta}_{3}$ & 0.997 & 0.995 & \multicolumn{1}{c}{0.992} & 0.992 &  &
0.992 & 0.994 & 0.997 & 0.999 \\
& s.d.$\left( \hat{\beta}_{3}\right) $ & $0.288$ & $\mathbf{0.271}$ &
\multicolumn{1}{c}{$0.250$} & $\mathbf{0.234}$ &  & $0.143$ & $\mathbf{0.136}$ & $0.126$ & $%
\mathbf{0.120}$ \\ \cline{2-2}\cline{3-11}
& $\hat{\beta}_{4}$ & 0.997 & 0.999 & \multicolumn{1}{c}{1.000} & 0.998 &  &
1.002 & 1.001 & 1.003 & 1.003 \\
& s.d.$\left( \hat{\beta}_{4}\right) $ & $0.146$ & $\mathbf{0.139}$ &
\multicolumn{1}{c}{$0.139$} & $\mathbf{0.131}$ &  & $0.077$ & $\mathbf{0.072}$ & $0.073$ & $%
\mathbf{0.068}$ \\ \hline\hline
\multicolumn{11}{l}{Note: The estimates with smaller standard deviations are marked with bold.}
\end{tabular}%
\label{count1}
\end{table}%

\begin{table}[tbp] \centering%
\caption{Means and Standard Deviations for Count Case 2, averaged over
$1000$ samples}
\renewcommand{\arraystretch}{0.7}%
\scriptsize
\begin{tabular}{llcclllllll}
\hline\hline
&  & \multicolumn{4}{c}{N=400,G=100,L=4} &  & \multicolumn{4}{l}{
N=1600,G=400,L=4} \\
&  & Poisson & GEE-poisson & NB II & GEE-nb2 &  & Poisson & GEE-poisson & NB II & GEE-nb2
\\ \cline{1-2}\cline{3-11}
$\rho =0$ & $\hat{\beta}_{2}$ & $0.990$ & $0.9990$ & \multicolumn{1}{c}{$%
0.992$} & $0.992$ &  & $0.994$ & $0.994$ & $0.999$ & $0.999$ \\
& s.d.$\left( \hat{\beta}_{2}\right) $ & $\mathbf{0.322}$ & $0.323$
& \multicolumn{1}{c}{$\mathbf{0.267}$} & $0.268$ &  & $\mathbf{0.162%
}$ & $\mathbf{0.162}$ & $\mathbf{0.139}$ & $\mathbf{0.139}$ \\
\cline{2-2}\cline{3-11}
& $\hat{\beta}_{3}$ & $0.986$ & $0.987$ & \multicolumn{1}{c}{$0.992$} & $%
0.992$ &  & $0.997$ & $0.997$ & $0.999$ & $0.999$ \\
& s.d.$\left( \hat{\beta}_{3}\right) $ & $\mathbf{0.281}$ & $\mathbf{0.281}$ &
\multicolumn{1}{c}{$\mathbf{0.244}$} & $\mathbf{0.244}$ &  & $\mathbf{0.137}$ & $\mathbf{0.137}%
$ & $\mathbf{0.119}$ & $\mathbf{0.119}$ \\ \cline{2-2}\cline{3-11}
& $\hat{\beta}_{4}$ & $0.999$ & $1.000$ & \multicolumn{1}{c}{$0.998$} & $%
0.998$ &  & $0.998$ & $0.998$ & $0.998$ & $0.998$ \\
& s.d.$\left( \hat{\beta}_{4}\right) $ & $\mathbf{0.140}$ & $0.141$ &
\multicolumn{1}{c}{$\mathbf{0.133}$} & $\mathbf{0.133}$ &  & $\mathbf{0.076}$ & $\mathbf{0.076}%
$ & $\mathbf{0.071}$ & $\mathbf{0.071}$ \\ \cline{1-2}\cline{3-11}
$\rho =0.5$ & $\hat{\beta}_{2}$ & $0.972$ & $0.971$ & \multicolumn{1}{c}{$%
0.981$} & $0.980$ &  & $1.000$ & $1.000$ & $1.001$ & $1.002$ \\
& s.d.$\left( \hat{\beta}_{2}\right) $ & $\mathbf{0.330}$ & $0.331$ &
\multicolumn{1}{c}{$\mathbf{0.285}$} & $0.286$ &  & $\mathbf{0.164}$ & $0.165$ & $\mathbf{0.136}$ & $\mathbf{0.136}$ \\ \cline{2-2}\cline{3-11}
& $\hat{\beta}_{3}$ & $0.992$ & $0.991$ & \multicolumn{1}{c}{$0.995$} & $%
0.994$ &  & $0.999$ & $0.999$ & $0.999$ & $0.999$ \\
& s.d.$\left( \hat{\beta}_{3}\right) $ & $\mathbf{0.276}$ & $\mathbf{0.276}$ &
\multicolumn{1}{c}{$\mathbf{0.243}$} & $\mathbf{0.243}$ &  & $\mathbf{0.141}$ & $\mathbf{0.141}%
$ & $\mathbf{0.120}$ & $\mathbf{0.120}$ \\ \cline{2-2}\cline{3-11}
& $\hat{\beta}_{4}$ & $0.995$ & $0.995$ & \multicolumn{1}{c}{$0.996$} & $%
0.995$ &  & $0.998$ & $0.998$ & $0.998$ & $0.998$ \\
& s.d.$\left( \hat{\beta}_{4}\right) $ & $\mathbf{0.151}$ & $\mathbf{0.151}$ &
\multicolumn{1}{c}{$0.142$} & $\mathbf{0.141}$ &  & $\mathbf{0.077}$ & $\mathbf{0.077}%
$ & $\mathbf{0.073}$ & $\mathbf{0.073}$ \\ \cline{1-2}\cline{3-11}
$\rho =1$ & $\hat{\beta}_{2}$ & $1.017$ & $1.014$ & \multicolumn{1}{c}{$1.016
$} & $1.014$ &  & $0.998$ & $0.997$ & $0.998$ & $0.997$ \\
& s.d.$\left( \hat{\beta}_{2}\right) $ & $0.400$ & $\mathbf{0.396}$
& \multicolumn{1}{c}{$0.319$} & $\mathbf{0.316}$ &  & $0.193%
$ & $\mathbf{0.191}$ & $0.161$ & $\mathbf{0.159}$ \\
\cline{2-2}\cline{3-11}
& $\hat{\beta}_{3}$ & $0.975$ & $0.976$ & \multicolumn{1}{c}{$0.978$} & $%
0.979$ &  & $1.005$ & $1.004$ & $1.004$ & $1.003$ \\
& s.d.$\left( \hat{\beta}_{3}\right) $ & $\mathbf{0.331}$ & $\mathbf{0.331}$ &
\multicolumn{1}{c}{$0.278$} & $\mathbf{0.276}$ &  & $0.158$ & $\mathbf{0.157}$ & $0.135$ & $%
\mathbf{0.134}$ \\ \cline{2-2}\cline{3-11}
& $\hat{\beta}_{4}$ & $0.998$ & $0.996$ & \multicolumn{1}{c}{$0.995$} & $%
0.994$ &  & $1.000$ & $1.000$ & $1.000$ & $1.000$ \\
& s.d.$\left( \hat{\beta}_{4}\right) $ & $0.185$ & $0.182$ &
\multicolumn{1}{c}{$0.173$} & $0.169$ &  & $0.088$ & $0.087$ & $0.083$ & $%
0.081$ \\ \cline{1-2}\cline{3-11}
$\rho =1.5$ & $\hat{\beta}_{2}$ & \multicolumn{1}{c}{$\ 0.970$} &
\multicolumn{1}{c}{$\ 0.973$} & \multicolumn{1}{c}{$1.013$} & $1.015$ &  & $%
1.004$ & $1.001$ & $1.008$ & $1.004$ \\
& s.d.$\left( \hat{\beta}_{2}\right) $ & $0.677$ & $\mathbf{0.662}$
& \multicolumn{1}{c}{$0.577$} & $\mathbf{0.570}$ &  & $0.311%
$ & $\mathbf{0.302}$ & $0.262$ & $\mathbf{0.255}$ \\
\cline{2-2}\cline{3-11}
& $\hat{\beta}_{3}$ & $0.972$ & $0.972$ & \multicolumn{1}{c}{$0.974$} & $%
0.976$ &  & $0.999$ & $0.997$ & $1.001$ & $1.000$ \\
& s.d.$\left( \hat{\beta}_{3}\right) $ & $0.627$ & $\mathbf{0.611}$ &
\multicolumn{1}{c}{$0.524$} & $\mathbf{0.504}$ &  & $0.293$ & $\mathbf{0.286}$ & $0.239$ & $%
\mathbf{0.233}$ \\ \cline{2-2}\cline{3-11}
& $\hat{\beta}_{4}$ & $1.002$ & $1.000$ & \multicolumn{1}{c}{$1.000$} & $%
0.998$ &  & $0.999$ & $1.000$ & $0.999$ & $1.000$ \\
& s.d.$\left( \hat{\beta}_{4}\right) $ & $0.326$ & $\mathbf{0.318}$ &
\multicolumn{1}{c}{$0.293$} & $\mathbf{0.284}$ &  & $0.160$ & $\mathbf{0.156}$ & $0.144$ & $%
\mathbf{0.141}$ \\ \hline\hline
\multicolumn{11}{l}{Note: The estimates with smaller standard deviations are marked with bold.}
\end{tabular}%
\label{count2}
\end{table}%
%EndExpansion
}

{\tiny
%TCIMACRO{\TeXButton{B}{\begin{table}[tbp] \centering}}%
%BeginExpansion
\begin{table}[tbp] \centering%
%EndExpansion
\caption{Means and Standard Deviations for Count Case 3, averaged over
$1000$ samples}%
\scriptsize
\renewcommand{\arraystretch}{0.7}%
\begin{tabular}{llcclllllll}
\hline\hline
&  & \multicolumn{4}{c}{N=400, G=100, L=4} &  & \multicolumn{4}{l}{N=1600,
G=400, L=4} \\
&  & Poisson & GEE-poisson & NB II & GEE-nb2 &  & Poisson & GEE-poisson & NB II & GEE-nb2
\\ \cline{1-2}\cline{3-11}
$\rho =0$ & $\hat{\beta}_{2}$ & $0.998$ & $0.998$ & \multicolumn{1}{c}{$1.000
$} & $1.000$ &  & $0.998$ & $0.998$ & $0.999$ & $0.999$ \\
& s.d.$\left( \hat{\beta}_{2}\right) $ & $\mathbf{0.330}$ & $\mathbf{0.330}$
& \multicolumn{1}{c}{$\mathbf{0.266}$} & $0.267$ &  & $\mathbf{0.165}$%
& $\mathbf{0.165}$ & $\mathbf{0.144}$ & $\mathbf{0.144}$ \\ \cline{2-2}\cline{3-11}
& $\hat{\beta}_{3}$ & $1.002$ & $1.002$ & \multicolumn{1}{c}{$1.002$} & $%
1.002$ &  & $0.995$ & $0.995$ & $0.995$ & $0.995$ \\
& s.d.$\left( \hat{\beta}_{3}\right) $ & $\mathbf{0.273}$ & $0.274$ &
\multicolumn{1}{c}{$\mathbf{0.240}$} & $0.241$ &  & $\mathbf{0.138}$ & $\mathbf{0.138}$ & $\mathbf{0.126}$ & $\mathbf{0.126}$ \\
\cline{2-2}\cline{3-11}
& $\hat{\beta}_{4}$ & $0.998$ & $0.999$ & \multicolumn{1}{c}{$0.998$} & $%
0.998$ &  & 0.997 & 0.997 & 0.997 & 0.997 \\
& s.d.$\left( \hat{\beta}_{4}\right) $ & $\mathbf{0.152}$ & $0.153$ &
\multicolumn{1}{c}{$\mathbf{0.142}$} & $0.143$ &  & $\mathbf{0.073}$ & $\mathbf{0.073}$ & $\mathbf{0.069}$ & $\mathbf{0.069}$ \\
\cline{1-2}\cline{3-11}
$\rho =0.2$ & $\hat{\beta}_{2}$ & $0.991$ & $0.911$ & \multicolumn{1}{c}{$%
0.997$} & $0.996$ &  & 0.998 & $0.999$ & $0.999$ & 1.000 \\
& s.d.$\left( \hat{\beta}_{2}\right) $ & $\mathbf{0.312}$ & $\mathbf{0.312}$ &
\multicolumn{1}{c}{$0.272$} & $\mathbf{0.271}$ &  & 0.158 & $\mathbf{0.157}$ & $\mathbf{0.137}$ & $\mathbf{0.137}$ \\
\cline{2-2}\cline{3-11}
& $\hat{\beta}_{3}$ & $0.991$ & $0.991$ & \multicolumn{1}{c}{$0.996$} & $%
0.996$ &  & 1.000 & 1.000 & $0.999$ & $0.999$ \\
& s.d.$\left( \hat{\beta}_{3}\right) $ & $\mathbf{0.265}$ & $0.266$ &
\multicolumn{1}{c}{$\mathbf{0.234}$} & $0.235$ &  & $\mathbf{0.130}$ & $\mathbf{0.130}$ & $\mathbf{0.116}$ & $\mathbf{0.116}$ \\
\cline{2-2}\cline{3-11}
& $\hat{\beta}_{4}$ & $1.002$ & $1.002$ & \multicolumn{1}{c}{$1.003$} & $%
1.004$ &  & $0.999$ & $0.999$ & 0.998 & 0.998 \\
& s.d.$\left( \hat{\beta}_{4}\right) $ & $\mathbf{0.143}$ & $\mathbf{0.143}$ &
\multicolumn{1}{c}{$\mathbf{0.137}$} & $\mathbf{0.137}$ &  & $\mathbf{0.073}$ & $\mathbf{0.073}$ & $\mathbf{0.069}$ & $\mathbf{0.069}$ \\
\cline{1-2}\cline{3-11}
$\rho =0.4$ & $\hat{\beta}_{2}$ & $0.988$ & $0.989$ & \multicolumn{1}{c}{$%
0.992$} & $0.993$ &  & 0.999 & 0.999 & 0.997 & 0.998 \\
& s.d.$\left( \hat{\beta}_{2}\right) $ & $0.305$ & $\mathbf{0.303}$
& \multicolumn{1}{c}{$0.261$} & $\mathbf{0.260}$ &  & 0.162
& $\mathbf{0.160}$ & 0.140 & $\mathbf{0.138}$ \\ \cline{2-2}\cline{3-11}
& $\hat{\beta}_{3}$ & $1.002$ & $1.003$ & \multicolumn{1}{c}{$1.006$} & $%
1.006$ &  & 0.997 & 0.996 & 0.999 & 0.998 \\
& s.d.$\left( \hat{\beta}_{3}\right) $ & $0.267$ & $\mathbf{0.265}$ &
\multicolumn{1}{c}{$0.238$} & $\mathbf{0.237}$ &  & 0.129 & $\mathbf{0.128}$ & 0.117 & $\mathbf{0.116}$ \\
\cline{2-2}\cline{3-11}
& $\hat{\beta}_{4}$ & $0.998$ & $0.998$ & \multicolumn{1}{c}{$0.997$} & $%
0.998$ &  & 1.003 & 1.003 & 1.003 & 1.003 \\
& s.d.$\left( \hat{\beta}_{4}\right) $ & $0.139$ & $\mathbf{0.138}$ &
\multicolumn{1}{c}{$0.131$} & $\mathbf{0.130}$ &  & 0.074 & $\mathbf{0.073}$ & 0.070 & $\mathbf{0.069}$ \\
\cline{1-2}\cline{3-11}
$\rho =0.6$ & $\hat{\beta}_{2}$ & \multicolumn{1}{c}{ 0.995} &
\multicolumn{1}{c}{$ 0.995$} & \multicolumn{1}{c}{$0.999$} & $0.999$ &  &
1.005 & 1.006 & 1.003 & 1.004 \\
& s.d.$\left( \hat{\beta}_{2}\right) $ & $0.300$ & $\mathbf{0.292}$
& \multicolumn{1}{c}{$0.260$} & $\mathbf{0.251}$ &  & 0.161
& $\mathbf{0.156}$ & 0.135 & $\mathbf{0.130}$ \\ \cline{2-2}\cline{3-11}
& $\hat{\beta}_{3}$ & 1.004 & 1.003 & \multicolumn{1}{c}{1.002} & 1.000 &  &
1.002 & 1.001 & 1.006 & 1.005 \\
& s.d.$\left( \hat{\beta}_{3}\right) $ & $0.253$ & $\mathbf{0.247}$ &
\multicolumn{1}{c}{$0.219$} & $\mathbf{0.213}$ &  & 0.131 & $\mathbf{0.128}$ & 0.114 & $\mathbf{0.110}$ \\
\cline{2-2}\cline{3-11}
& $\hat{\beta}_{4}$ & 1.004 & 1.003 & \multicolumn{1}{c}{1.003} & 1.002 &  &
1.000 & 1.000 & 1.000 & 1.000 \\
& s.d.$\left( \hat{\beta}_{4}\right) $ & $0.140$ & $\mathbf{0.138}$ &
\multicolumn{1}{c}{$0.132$} & $\mathbf{0.130}$ &  & 0.072 & $\mathbf{0.071}$ & $\mathbf{0.068}$ & $\mathbf{0.068}$ \\
\hline\hline
\multicolumn{11}{l}{Note: The estimates with smaller standard deviations are marked with bold.}
\end{tabular}%
\label{count3}
\end{table}%
 }

\subsection{Binary response data}

\subsubsection{Data generating process}

For the Probit model, the correlations of latent normal errors result in
correlations of binary response variables, but we cannot easily find the
specific form of the conditional variances and covariances for the binary dependent
variables. The correlations in latent error do not reflect the exact
correlations in the binary dependent variables. Consider the following cases
of data generating process. 1. The latent variable $y^{\ast }=\beta
_{1}+\beta _{2}x_{2}+\beta _{3}x_{3}+\beta _{4}x_{4}+e_{4},$ where $e_{4}$
is the latent spatial error term, and the parameters are set to be $\beta _{1}=\beta
_{2}=\beta _{3}=\beta _{4}=1.$ Then the binary dependent variable is
generated as $y_{i}=1$ if $y_{i}^{\ast }\geq 1.5$ and $y_{i}=0$ if $%
y_{i}^{\ast }<1.5.$ The explanatory variables are set as follows: $\
x_{1}=1; $ $x_{2}\sim \mathrm{N}\left( 1,1\right) ;$ $\
x_{3}=0.2x_{2}-1.2e_{1},e_{1}\sim \mathrm{N}\left( 0,1\right) ;$ $\
x_{5}=0.2x_{2}+0.2x_{3}+e_{2},e_{2}\sim \mathrm{N}\left( 0,1\right) ;$ $%
x_{4}=1\left[ x_{5}>0\right] .$ We consider two cases of latent spatial
error terms and the corresponding binary response variables are generated as
follows.

\textbf{Case 1}. The vector of spatial error $\mathbf{e}_{4}=\left( I-\rho W\right)
^{-1}\mathbf{e}_{3},e_{3}\sim \mathrm{N}\left( 0,1\right) ,$ where $\rho
=0,0.5,1,1.5$ respectively. $W$ is the matrix with $W_{g}$ on the diagonal, $%
g=1,2,...,G.$ Other elements in $W$ are equal to zero. In this case, only
individuals within a group are correlated. For group size equal to four, $%
W_{g}$ is the same as in (\ref{wg1}) in Case 1 for count data.

\textbf{Case 2}. The latent spatial error $e_{4}\sim \mathrm{MVN}(0,$W$ ),$ that
is, $e_{4}$ follows a standard multivariate normal distribution with
expectation zero and $W$  is $N\times N$ correlation matrix. %
$W_{ij}=\frac{\rho }{d_{ij}},$ $\rho =0,0.2,0.4,0.6,i\neq j;$ $W_{ii}=1;$ $%
i,j=1,2,...,N.$
$W$ is the same as in (\ref{W}) in Case 3 for count data. Therefore, the data has general spatial correlations for each pair of observations if $\rho \neq 0.$

\subsubsection{Simulation results}

In the simulation, two estimators are compared, the Probit partial QMLE
estimator, and the Probit GEE estimator with an exchangeable working
correlation matrix. We show two cases of the simulation: (1) N=400, G=100,
L=4; 2) N=1600, G=400, L=4. The replication times are 1000.
The simulation results for Case 1 and Case 2 are in Table \ref{probit1} and Table \ref{probit2}
separately. We find the following results.

First, in both cases, the GEE estimator is less biased than the partial
QMLE estimator. For example, for N=400, in Case 1 when $\rho =1,\hat{\beta}%
_{2}$ equals 1.252 for QMLE and 1.203 for GEE. In Case 2 when $\rho =0.6,%
\hat{\beta}_{2}$ equals 1.148 for QMLE and 1.090 for GEE. Second, the GEE
estimator has some obvious efficiency improvement over partial QMLE. For
example, in case 1 when $\rho =1$, the standard deviation of $\hat{\beta}%
_{2} $ equals $0.280$ for QMLE and $0.173$ for GEE. In Case 2 when $\rho
=0.6,$ the standard deviation of $\hat{\beta}_{2}$ equals 0.270 for QMLE and
0.164 for GEE for a sample size of 400. Third, when we increase the sample size to
1600 and number of groups to 400 correspondingly, the same scenario applies.
What is more, the bias and especially standard deviations for both the
Probit QMLE and GEE reduces. For example, for N=1600, in Case 1 when $\rho
=1,$ the standard deviations of $\hat{\beta}_{2}$ reduce to 0.121 for QMLE
and 0.081 for GEE.

\begin{table}[tbp] \centering
\caption{Means and Standard Deviations for Probit Case 1, averaged over
$1000$ samples}%
\scriptsize
\begin{tabular}{llcccccc}
\hline\hline
&  &  & \multicolumn{2}{c}{ N=400, G=100, L=4} &  &
\multicolumn{2}{c}{ N=1600, G=400, L=4} \\
&  &  & { Probit} & { GEE-probit} &  & { Probit}
& { GEE-probit} \\ \cline{1-3}\cline{3-8}
${ \rho =0}$ & ${ \hat{\beta}}_{2}$ &  & {%
1.076} & {1.033} &  & {1.016} & {1.007}
\\
& { s.d.}$\left( \hat{\beta}_{2}\right) $ &  & ${0.230%
}$ & ${\mathbf {0.142}}$ &  & ${0.103}$ & ${\mathbf {0.069}}$
\\ \cline{2-3}\cline{3-8}
& ${ \hat{\beta}}_{3}$ &  & {1.070} & {%
1.031} &  & {1.016} & {1.018} \\
& { s.d.}$\left(\hat{\beta}_{3}\right) $ &  & ${0.205%
}$ & ${\mathbf{0.127}}$ &  & ${0.084}$ & ${\mathbf{0.059}}$
\\ \cline{2-3}\cline{3-8}
& ${\hat{\beta}}_{4}$ &  & {1.069} & {%
1.021} &  & {1.019} & {1.011} \\
& {s.d.}$\left( \hat{\beta}_{4}\right) $ &  & ${0.304%
}$ & ${\mathbf{0.200}}$ &  & ${0.136}$ & ${\mathbf{0.103}}$
\\ \cline{1-3}\cline{3-8}
${ \rho =0.5}$ & ${ \hat{\beta}}_{2}$ &  &
{1.310} & {1.252} &  & {1.229} &
{ 1.213} \\
& { s.d.}$\left( \hat{\beta}_{2}\right) $ &  & ${0.293%
}$ & ${\mathbf{0.169}}$ &  & ${0.124}$ & ${\mathbf{0.077}}$
\\ \cline{2-3}\cline{3-8}
& ${ \hat{\beta}}_{3}$ &  & {1.310} & {%
1.256} &  & {1.229} & {1.214} \\
& { s.d.}$\left(\hat{\beta}_{3}\right) $ &  & ${0.259%
}$ & ${\mathbf{0.156}}$ &  & ${0.111}$ & ${\mathbf{0.720}}$
\\ \cline{2-3}\cline{3-8}
& ${\hat{\beta}}_{4}$ &  & {1.297} & {%
1.243} &  & {1.227} & {1.213} \\
& { s.d.}$\left(\hat{\beta}_{4}\right) $ &  & ${0.364%
}$ & ${\mathbf{0.238}}$ &  & ${0.165}$ & ${\mathbf{0.112}}$
\\ \cline{1-3}\cline{3-8}
${\rho =1}$ & ${\hat{\beta}}_{2}$ &  & {%
1.254} & {1.203} &  & {1.180} & {1.167}
\\
& {s.d.}$\left(\hat{\beta}_{2}\right) $ &  & ${0.280%
}$ & ${ \mathbf{0.173}}$ &  & ${ 0.121}$ & ${ \mathbf{0.081}}$
\\ \cline{2-3}\cline{3-8}
& ${\hat{\beta}}_{3}$ &  & {1.236} & {%
1.192} &  & {1.176} & {1.164} \\
& { s.d.}$\left( \hat{\beta}_{3}\right) $ &  & ${0.236%
}$ & ${\mathbf{0.152}}$ &  & ${0.105}$ & ${\mathbf{0.072}}$
\\ \cline{2-3}\cline{3-8}
& ${\hat{\beta}}_{4}$ &  & \multicolumn{1}{c}{1.238} & \multicolumn{1}{c}{1.196} &  & { 1.175}
& {1.165} \\
& { s.d.}$\left( \hat{\beta}_{4}\right) $ &  & \multicolumn{1}{c}{%
${ 0.356}$} & \multicolumn{1}{c}{${\mathbf{0.241}}$}
&  & ${ 0.156}$ & ${ \mathbf{0.109}}$ \\ \cline{1-3}\cline{3-8}
${ \rho =1.5}$ & ${ \hat{\beta}}_{2}$ &  &
{ 1.022} & { 0.982} &  & { 0.963} &
{ 0.949} \\
& { s.d.}$\left( \hat{\beta}_{2}\right) $ &  & ${ 0.230%
}$ & ${ \mathbf{0.149}}$ &  & ${ 0.102}$ & ${\mathbf{ 0.070}}$
\\ \cline{2-3}\cline{3-8}
& ${ \hat{\beta}}_{3}$ &  & { 1.013} & { %
0.979} &  & { 0.966} & { 0.953} \\
& { s.d.}$\left( \hat{\beta}_{3}\right) $ &  & ${ 0.196%
}$ & ${ \mathbf{0.132}}$ &  & ${ 0.086}$ & ${ \mathbf{0.063}}$
\\ \cline{2-3}\cline{3-8}
& ${ \hat{\beta}}_{4}$ &  & { 1.003} & { %
0.963} &  & { 0.968} & { 0.953} \\
& { s.d.}$\left( \hat{\beta}_{4}\right) $ &  & ${ 0.309%
}$ & ${ \mathbf{0.209}}$ &  & ${ 0.139}$ & ${\mathbf{ 0.101}}$
\\ \hline\hline
\multicolumn{8}{l}{Note: The estimates with smaller standard deviations are marked with bold.}
\end{tabular}%
\label{probit1}
\end{table}
\begin{table}[tbp] \centering%
%EndExpansion
\caption{Means and Standard Deviations for Probit Case 2, averaged over
$1000$ samples}%
\renewcommand{\arraystretch}{0.7}%
\scriptsize
\begin{tabular}{llcccccc}
\hline\hline
&  &  & \multicolumn{2}{l}{\scriptsize N=400, G=100, L=4} &  &
\multicolumn{2}{c}{N=1600, G=400, L=4} \\
&  &  & { Probit} & {GEE-probit} &  & {Probit}
& {GEE-probit} \\ \cline{1-3}\cline{3-8}
${\rho =0}$ & ${\hat{\beta}}_{2}$ &  & {%
1.068} & {1.033} &  & {1.009} & {1.004}
\\
& {s.d.}$\left( \hat{\beta}_{2}\right) $ &  & ${0.226%
}$ & ${\mathbf{0.143}}$ &  & ${0.101}$ & ${\mathbf{0.067}}$
\\ \cline{2-3}\cline{3-8}
& ${\hat{\beta}}_{3}$ &  & {1.070} & {%
1.036} &  & {1.011} & {1.006} \\
& {s.d.}$\left( \hat{\beta}_{3}\right) $ &  & ${0.196%
}$ & ${\mathbf{0.127}}$ &  & ${0.085}$ & ${\mathbf{0.061}}$
\\ \cline{2-3}\cline{3-8}
& ${\hat{\beta}}_{4}$ &  & {1.056} & {%
1.019} &  & {1.006} & {1.002} \\
& {s.d.}$\left( \hat{\beta}_{4}\right) $ &  & ${0.301%
}$ & ${\mathbf{0.214}}$ &  & ${0.142}$ & ${\mathbf{0.099}}$
\\ \cline{1-3}\cline{3-8}
${\rho =0.2}$ & ${\hat{\beta}}_{2}$ &  &
{1.100} & {1.046} &  & {1.023} &
{1.013} \\
& {s.d.}$\left(\hat{\beta}_{2}\right) $ &  & ${0.252%
}$ & ${\mathbf{0.139}}$ &  & ${0.102}$ & ${\mathbf{0.070}}$
\\ \cline{2-3}\cline{3-8}
& ${\hat{\beta}}_{3}$ &  & {1.087} & {%
1.040} &  & {1.020} & {1.012} \\
& {s.d.}$\left( \hat{\beta}_{3}\right) $ &  & ${0.212%
}$ & ${\mathbf{0.124}}$ &  & ${0.085}$ & ${\mathbf{0.060}}$
\\ \cline{2-3}\cline{3-8}
& ${\hat{\beta}}_{4}$ &  & {1.096} & {%
1.043} &  & {1.021} & {1.012} \\
& {s.d.}$\left( \hat{\beta}_{4}\right) $ &  & ${0.343%
}$ & ${\mathbf{0.210}}$ &  & ${0.138}$ & ${\mathbf{0.103}}$
\\ \cline{1-3}\cline{3-8}
${\rho =0.4}$ & ${\hat{\beta}}_{2}$ &  &
{1.106} & {1.059} &  & {1.036} &
{1.024} \\
& {s.d.}$\left( \hat{\beta}_{2}\right) $ &  & ${0.257%
}$ & ${\mathbf{0.153}}$ &  & ${0.106}$ & ${\mathbf{0.071}}$
\\ \cline{2-3}\cline{3-8}
& ${\hat{\beta}}_{3}$ &  & {1.099} & {%
1.058} &  & {1.034} & {1.022} \\
& {s.d.}$\left( \hat{\beta}_{3}\right) $ &  & ${ 0.207%
}$ & ${\mathbf{0.133}}$ &  & ${0.091}$ & ${\mathbf{0.065}}$
\\ \cline{2-3}\cline{3-8}
& ${\hat{\beta}}_{4}$ &  & \multicolumn{1}{c}{1.104} & \multicolumn{1}{c}{1.059} &  & {1.031}
& {1.020} \\
& {s.d.}$\left( \hat{\beta}_{4}\right) $ &  & \multicolumn{1}{c}{%
${0.326}$} & \multicolumn{1}{c}{${\mathbf{0.213}}$}
&  & ${0.145}$ & ${\mathbf{0.103}}$ \\ \cline{1-3}\cline{3-8}
${\rho =0.6}$ & ${\hat{\beta}}_{2}$ &  &
{1.148} & {1.090} &  & {1.041} &
{1.035} \\
& {s.d.}$\left( \hat{\beta}_{2}\right) $ &  & ${0.270%
}$ & ${\mathbf{0.164}}$ &  & ${0.109}$ & ${\mathbf{0.077}}$
\\ \cline{2-3}\cline{3-8}
& ${\hat{\beta}}_{3}$ &  & {1.140} & {%
1.089} &  & {1.037} & {1.030} \\
& {s.d.}$\left( \hat{\beta}_{3}\right) $ &  & ${0.238%
}$ & ${\mathbf{0.157}}$ &  & ${0.096}$ & ${\mathbf{0.072}}$
\\ \cline{2-3}\cline{3-8}
& ${\hat{\beta}}_{4}$ &  & {1.131} & { %
1.074} &  & {1.039} & {1.034} \\
& {s.d.}$\left( \hat{\beta}_{4}\right) $ &  & ${0.346%
}$ & ${\mathbf{0.232}}$ &  & ${0.151}$ & ${\mathbf{0.106}}$
\\ \hline\hline
\multicolumn{8}{l}{Note: The estimates with smaller standard deviations are marked with bold.}
\end{tabular}%

\label{probit2}
\end{table}

\section{An empirical application of the inflow FDI to China}\label{sec6}

In the empirical FDI literature, the gravity equation specification was
initially adopted from the empirical literature on trade flows. The gravity
equation has been widely used and extended in international trade since
\cite{tinbergen1962analysis}. \cite{anderson2003gravity} specify the gravity
equation as
\begin{equation}
T_{ij}=\alpha _{0}Y_{i}^{\alpha _{1}}Y_{j}^{\alpha _{2}}D_{ij}^{\alpha
_{3}}\eta _{ij}
\end{equation}%
where $T_{ij}$ is the trade flows between country $i$ and country $j$. $%
T_{ij}$ is proportional to the product of the two countries' GDPs, denoted
by $Y_{i}$ and $Y_{j}$, and inversely proportional to their distance. $%
D_{ij} $ broadly represents trade resistance. Let $\eta _{ij}$ be a
stochastic error that represents deviations from the theory. As a tradition
in the existing literature, by taking the natural logarithms of both sides
and adding other control variables represented by $Z_{ij}$, the
log-linearized equation is:%
\begin{equation}
\ln T_{ij}=\ln \alpha _{0}+\alpha _{1}\ln Y_{i}+\alpha _{2}\ln Y_{j}+\alpha
_{3}\ln D_{ij}+\beta Z_{ij}+\ln \eta _{ij}
\end{equation}

For the above equation, a traditional estimation approach is to use ordinary
least squares (OLS). However, there are two problems with the OLS estimation
of the log linearized model. First, $T_{ij}$ must be positive in order to
take the logarithm. A transformation of $\log(T_{ij}+1)$ can solve the
problem of logarithm but it is not clear how to interpret the estimation
results with respect to the original values. Second, the estimation heavily
depends on the independence assumption of $\eta _{ij}$ and explanatory
variables, which means the variance of $\eta _{ij}$ cannot depend on the
explanatory variables. Because of taking the logarithm, only under very
specific conditions on $\eta _{ij}$ is the log linear representation of the
constant-elasticity model useful as a device to estimate the parameters of
interest (\cite{silva2006log}). Jensen's inequality implies that ($ \E\log
Y $) is smaller than $\log \E(Y)$, thus log-linearized models estimated by OLS
as elasticities can be highly misleading in the presence of
heteroscedasticity. If the variance of $\eta _{ij}$ is dependent on the
explanatory variables, ordinary least squares is not consistent any more.

We adopt this specification and augment it to the inflow FDI to cities of
China. and use nonlinear estimation method, the GEE estimation. The
estimating equation is specified as follows
\begin{eqnarray}
\mathrm{\E}\left( FDI_{i}|X_{i}\right) &=&\exp [\beta _{0}+\beta _{1}\ln
(GDP_{i})+\beta _{2}\ln \left( GDPPC_{i}\right) +\beta _{3}\ln (WAGE_{i})
\nonumber \\
&&+\beta _{4}\ln \left( SCIEXP_{i}\right) +\beta _{5}BORDER_{i}],
\end{eqnarray}%
where $FDI_{i}$ is the inflow FDI in actual use for city $i$, $X_{i}$
represents all explanatory variables. The control variables includes city
level GDP, GDP per capita, the average wage, the government expenditure to
science, and whether the city is on the border. We collect
data of inflow FDI to 287 cities in 31 provincial administrative regions in
2007 in mainland China from the website of Development Research Center of the State Council of P. R. China
\footnote{The website of Development Research Center of the State Council of P. R. China is www.drcnet.com.cn}. Three cities, Jiayuguan (Gansu Province), Dingxi (Gansu Province) and
Karamay (Xinjiang Province), are dropped because of missing data on FDI. Thus we are
using 284 cities in total. We collect the latitudes and longitudes of the
center of each city using Google map and calculated the geographical distance matrix
between cities. The city center is defined as the location of the city
government. We use provinces as natural grouping so there are 31 groups.
Each group has one to twenty cities. The descriptive statistics are in Table \ref{summary}.
The grouping information is in Table \ref{grouping}.

For comparison, we also provide the OLS estimates of the log-linearized
model:
\begin{eqnarray}
\ln \left( FDI_{i}\right) &=&\beta _{0}+\beta _{1}\ln (GDP_{i})+\beta
_{2}\ln \left( GDPPC_{i}\right) +\beta _{3}\ln (WAGE_{i})  \nonumber \\
&&+\beta _{4}\ln \left( SCIEXP_{i}\right) +\beta _{5}BORDER_{i}+u_{i}.
\end{eqnarray}%
The log linearized model suffers from two main problems, first the dependent
variable cannot take log if it is zero; second as mentioned in Silva and
Tenreyro (2006) the log linearization can cause bias in parameter estimates
if there exists heteroskedasticity in the error term $u_{i}$.

To estimate the equation for FDI, we use OLS, Poisson QMLE, Poisson GEE with
the exchangeable working matrix, NB QMLE, NB GEE with the exchangeable
working matrix. In Table \ref{fdiequ} the results show advantage of Poisson GEE estimation. All
estimation results verifies the positive effect of GDP and GDP per capita in
the gravity equation for FDI. These estimates are all significant at the 1\%
level. What is more, the standard error of GDP and GDP per capita for
Poisson GEE is smaller than that for Poisson QMLE, which is smaller than
that for OLS. The Poisson regression has significant results on the
explanatory variables, log(wage), log(sciexp) and border, which are not
significant in the OLS regression. The local average wage has a negative
effect on inflow FDI to this city. Compared to other estimation methods, the
Poisson GEE estimates on log(wage) is the most significant, at 1\% level. It
means that when the average wage increase by 1\%, the inflow FDI would
decrease by about 1\%, which could due to the inhabiting effect of labor
cost. Similarly, when local government increase science expenditure by 1\%,
the inflow FDI would increase by about 0.3\%, which is shown by Poisson QMLE
and Poisson GEE, and in which case the Poisson GEE estimate has smaller
standard error than Poisson QMLE, which are 0.102 and 0.110 respectively.

\begin{table}[tbp] \centering%
\caption{Descriptive statistics}

$%
\begin{tabular}{lllllll}
\hline\hline
Variables & Obs & $\text{Average}$ & Std.Dev. & Min & Max & Variable
description \\ \hline
FDI & 284 & 43571.94 & 99369.96 & 0 & 791954 & 10,000 dollars \\
$\ln $FDI & 275 & 9.28 & 1.81 & 3.14 & 13.58 &  \\
GDP & 287 & 9451788 & 1.31e+07 & 618352 & 1.20e+08 & 10,000 yuan \\
$\ln $GDP & 287 & 15.58 & 0.92 & 13.34 & 18.60 &  \\
GDPPC & 287 & 21566.76 & 16506.67 & 3398 & 98938 & yuan \\
$\ln $GDPPC & 287 & 9.76 & 0.65 & 8.13 & 11.50 &  \\
WAGE & 287 & 21228.01 & 5800.10 & 9523.21 & 49311.1 & yearly, yuan. \\
$\ln $WAGE & 287 & 9.93 & 0.25 & 9.16 & 10.81 &  \\
SCIEXP & 287 & 23513.22 & 91766.74 & 469 & 1100000 & 10,000 yuan \\
$\ln $SCIEXP & 287 & 8.86 & 1.25 & 6.15 & 13.91 &  \\
BORDER & 287 & 0.06 & 0.24 & 0 & 1 & =1 if on the border \\ \hline\hline
\end{tabular}%
$%
\label{summary}
\end{table}%

\begin{table}[tbp] \centering%
%EndExpansion
\caption{Grouping information}

$%
\begin{tabular}{lllllllll}
\hline\hline
Group & Province & Freq. & Percent &  & Group & Province & Freq. & Percent
\\ \hline
1 & Beijing & 1 & 0.35 &  & 17 & Henan & 17 & 5.92 \\
2 & Tianjin & 1 & 0.35 &  & 18 & Hubei & 12 & 4.18 \\
3 & Hebei & 11 & 3.83 &  & 19 & Hunan & 13 & 4.53 \\
4 & Shanxi & 11 & 3.83 &  & 20 & Guangdong & 21 & 7.32 \\
5 & Guangxi & 14 & 4.88 &  & 21 & Hainan & 2 & 0.70 \\
6 & Inner Mongolia & 9 & 3.14 &  & 22 & Chongqing & 1 & 0.35 \\
7 & Liaoning & 14 & 4.88 &  & 23 & Sichuan & 18 & 6.27 \\
8 & Jilin & 8 & 2.79 &  & 24 & Guizhou & 4 & 2.07 \\
9 & Heilongjiang & 12 & 4.18 &  & 25 & Yunnan & 8 & 2.76 \\
10 & Shanghai & 1 & 0.35 &  & 26 & Shaanxi & 10 & 3.45 \\
11 & Jiangsu & 13 & 4.53 &  & 27 & Gansu & 12 & 4.14 \\
12 & Zhejiang & 11 & 3.83 &  & 28 & Qinghai & 1 & 0.34 \\
13 & Anhui & 17 & 5.92 &  & 29 & Ningxia & 5 & 1.72 \\
14 & Fujian & 9 & 3.14 &  & 30 & Xinjiang & 2 & 0.69 \\
15 & Jiangxi & 11 & 3.83 &  & 31 & Tibet & 1 & 0.34 \\
16 & Shandong & 17 & 5.92 &  & Total &  & 287 & 1.00 \\ \hline\hline
\end{tabular}%
$%
%TCIMACRO{\TeXButton{E}{\end{table}}}%
%BeginExpansion
\label{grouping}
\end{table}%
%EndExpansion

%TCIMACRO{\TeXButton{B}{\begin{table}[tbp] \centering}}%
%BeginExpansion
\begin{table}[tbp] \centering%
%EndExpansion
\caption{Estimating the FDI equation}

$%
\begin{tabular}{llllll}
\hline\hline
& OLS & Poisson & $%
\begin{array}{c}
\text{GEE \_poisson}%
\end{array}%
$ & NB & $%
\begin{array}{c}
\text{GEE \_nb2}%
\end{array}%
$ \\ \hline
$\ln $GDP & 1.099*** & 0.705*** & 0.746*** & 1.071*** & 0.982*** \\
& (0.188) & (0.151) & (0.132) & (0.205) & (0.176) \\
$\ln $GDPPC & 0.570*** & 0.747*** & 0.687*** & 0.610*** & 0.533*** \\
& (0.219) & (0.134) & (0.122) & (0.157) & (0.172) \\
$\ln $WAGE & -0.123 & -0.726* & -1.013*** & -0.146 & -0.111 \\
& (0.393) & (0.384) & (0.390) & (0.400) & (0.331) \\
$\ln $SCIEXP & 0.186 & 0.289*** & 0.311*** & 0.094 & 0.137 \\
& (0.142) & (0.110) & (0.102) & (0.111) & (0.106) \\
BORDER & -0.192 & -0.593*** & -0.197* & -0.556** & -0.037 \\
& (0.187) & (0.166) & (0.128) & (0.185) & (0.273) \\
\_cons & -13.894*** & -3.884 & -1.238 & -12.360*** & -11.021*** \\
& (3.670) & (3.094) & (3.011) & (3.130) & (2.863) \\ \hline
Observations & 275 & 284 & 284 & 284 & 284 \\
F(5, 269) & 152.03 &  &  &  &  \\
Wald Chi2(5) &  & 701.24 & 269.58 & 602.66 & 495.67 \\
p value & 0.000 & 0.000 & 0.000 & 0.000 & 0.000 \\ \hline\hline
\multicolumn{6}{l}{Note: Robust standard errors are in parentheses.} \\
\multicolumn{6}{l}{$^{\ast \ast \ast }$, $^{\ast \ast }$ and $^{\ast }$
indicate significance at the 1\%, 5\%, and 10\% level separately.}%
\end{tabular}%
$%
%TCIMACRO{\TeXButton{E}{\end{table}}}%
%BeginExpansion
\label{fdiequ}
\end{table}%
%EndExpansion

\newpage
\section{Appendix}\label{sec7}

\subsection{Some Useful Lemmas}
We verify the $L_1$ NED property of $q_g(\theta,\gamma)$, $h_g(\theta,\gamma)$ accordingly via the $L_4$ NED property of $\by_g$, and the $L_2$ NED property of $s_g(\theta,\gamma)$ for central limit theorem.
\begin{lemma}\label{NED} \label{lem1:consistency}
Under condition A.1)- A.8), $q_{g}\left( \mathbf{\theta}, \mathbf{\gamma}\right) $ is
$L_1$ \textrm{NED }on $\tilde{\vps}$, with the NED constant as $d_g \defeq \max_{i\in B_g} d_{n,i},$ and with the NED coefficients $\psi(s).$
Moreover, we have ULLN for the partial sum $ \{M_{G}G\}^{-1}\sum_{g} q_g (\theta, \gamma)$, namely \\$\ssup {(M_{G} G)}^{-1}\sum_{g} \{q_g (\theta, \gamma) - \E  [\sum_{g } q_g (\theta^0, \gamma^0)]\}\to_p 0$.
\end{lemma}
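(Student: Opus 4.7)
The plan is to establish the NED property of $q_g$ by viewing it as a quadratic form in the residuals $y_g - \mm_g(\theta)$ weighted by entries of $\mW_g^{-1}(\theta,\gamma)$, and to bootstrap the NED of the individual components into NED of the product using the standard factorization identity combined with Hölder's inequality. The ULLN will then follow by combining pointwise convergence (via a law of large numbers for spatial NED fields on $\tilde{\vps}$) with stochastic equicontinuity on $\Theta\times\Gamma$ delivered by the Lipschitz assumption A.4) and the moment bounds in A.5).

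First I would write
\[
q_g(\theta,\gamma) \;=\; \sum_{i,j\in B_g} w_{g,i,j}(\theta,\gamma)\,u_{g,i}(\theta)\,u_{g,j}(\theta),\qquad u_{g,i}(\theta) \defeq y_{g,i}-m_{g,i}(\theta),
\]
and observe that both $m_{g,i}(\theta)$ and $w_{g,i,j}(\theta,\gamma)$ are deterministic functions of $x_g$ alone. By A.1), the $\sigma$-field $\F_g(0)$ already contains $\sigma(x_g)$, so these objects are $\F_g(s)$-measurable for every $s\geq 0$. Consequently $\E[u_{g,i}(\theta)\,|\,\F_g(s)]=\E[y_{g,i}\,|\,\F_g(s)]-m_{g,i}(\theta)$, and $u_{g,i}(\theta)$ inherits from A.2) the $L_4$-NED property of $y_{g,i}$ with the same scaling $d_{n,i}$ and coefficient $\psi(s)$.

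Next I would upgrade to the product. Using the decomposition $XY-\E[XY|\F]=X(Y-\E[Y|\F])+\E[Y|\F](X-\E[X|\F])$ together with Hölder and Jensen, one gets
\[
\|u_{g,i}(\theta)u_{g,j}(\theta)-\E\{u_{g,i}(\theta)u_{g,j}(\theta)\,|\,\F_g(s)\}\|_{2}
\;\lesssim\;(\|u_{g,i}\|_{4}+\|u_{g,j}\|_{4})\,\max(d_{n,i},d_{n,j})\,\psi(s),
\]
with the right-hand moments bounded via $\|y_{g,i}\|_{4}$ and $\sup_\theta\|m_{g,i}\|_{4}$ from A.5). Since $w_{g,i,j}(\theta,\gamma)$ is $\F_g(0)$-measurable,
\[
\|w_{g,i,j}Z-\E[w_{g,i,j}Z\,|\,\F_g(s)]\|_{1}
\;=\;\|w_{g,i,j}(Z-\E[Z\,|\,\F_g(s)])\|_{1}
\;\leq\;\|w_{g,i,j}\|_{2}\,\|Z-\E[Z\,|\,\F_g(s)]\|_{2},
\]
and $\|w_{g,i,j}\|_{2}$ is finite by A.5). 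Summing the $L^{2}$ terms indexed by $i,j\in B_g$ (finite $L$), absorbing the finite constants into the scaling, yields that $q_g(\theta,\gamma)$ is $L_1$-NED on $\tilde{\vps}$ with constant proportional to $d_g=\max_{i\in B_g}d_{n,i}$ and coefficient $\psi(s)$.

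For the ULLN, I would invoke the spatial uniform law of large numbers of Jenish and Prucha (2012) for NED fields applied to $\{q_g(\theta,\gamma)\}$ on the group lattice $D_G$. The required ingredients are: (i) the pointwise $L_1$-NED just established with summable NED coefficients under A.2); (ii) the $L_{p''}$ moment bound $\E|q_g/c_{g,q}|^{p''}<\infty$ coming from A.5) together with the scaling $M_G$ absorbing $c_{g,q}\vee d_g$; (iii) the mixing summability for $\tilde{\vps}$ in A.6); and (iv) stochastic equicontinuity of $(\theta,\gamma)\mapsto q_g(\theta,\gamma)$ on the compact $\Theta\times\Gamma$, which follows from A.3)--A.4) via a standard covering argument once one bounds the random Lipschitz constant by $|u_g|_{2}^{2}|\nabla_{\theta,\gamma}\mW_g^{-1}|_{2}+|u_g|_{2}|\nabla\mm_g|_{2}|\mW_g^{-1}|_{2}$ and takes expectations using A.5).

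The main obstacle will be the bookkeeping in the product step: propagating NED through products of three factors (two residuals and one matrix entry) while keeping the resulting NED constant exactly of order $d_g$ and the coefficient exactly $\psi(s)$ up to absolute constants, rather than a worse rate. This requires that the moment bounds in A.5) be strong enough ($r>4p''\vee 4p'$) to absorb all the $L_4$ norms that appear after applying Hölder twice. A secondary, more routine difficulty is verifying stochastic equicontinuity uniformly in the nuisance parameter $\gamma$, since $\mW_g^{-1}(\theta,\gamma)$ can depend on $\gamma$ in a complicated way, but this is handled by the Lipschitz hypothesis together with the second-derivative moment bounds in A.5).
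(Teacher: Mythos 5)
Your proposal is correct and follows essentially the same route as the paper: the same quadratic-form decomposition of $q_g$, the same factorization-plus-H\"older step to pass the $L_4$-NED property of the residuals through the product (exploiting that $m_{g,i}(\theta)$ and $w_{g,i,j}(\theta,\gamma)$ are $\mathcal{F}_g(s)$-measurable functions of $x_g$), summation over the finitely many $(i,j)\in B_g\times B_g$ to get the $L_1$-NED constant of order $d_g$, and then the Jenish--Prucha ULLN verified via the $p'$-dominance condition from A.5) and stochastic equicontinuity from the Lipschitz bound in A.4)--A.5). The only cosmetic differences are that you insert an intermediate $L_2$ bound on the product of residuals before applying Cauchy--Schwarz with the weight entry, whereas the paper bounds the $L_1$ deviation of the full triple product directly; both are valid.
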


\begin{proof}
We verify $q_g(\theta,\gamma)$ is $L_1$ NED on $\tilde{\vps}.$
From A.1), we work with increasing domain asymptotics, which essentially assume that the growth of the sample size is achieved by an unbounded expansion of the sample region. Namely $|D_G|=G \to \infty$.

%In particular, suppose that for each $i$, exists a ball of the smallest radius $s_n$ i.e. $B_{i}(s_n)$ such that $|D_n|\subset B_{i}(s_n)$, and we have $|D_n|\lesssim C s_n^d$. $D_n \to \infty$ implies $s_n\to \infty$. Similarly there exists $s_g$, $B_i(s_g)$ covers all the observations in group $g$.

The groupwise vector $\by_g$ satisfies $\|\by_{g} - \E (\by_{g}|\mathcal{F}_g(s))\|_{2} \leq \sum_{i\in B_g} d_{i,n} \psi(s)\leq d_g L \psi(s)$ ($d_g = \max_{i\in B_g} d_{i,n}$) for $s \to \infty $ and $\psi(s) \to 0$ when $s\to \infty$. We abbreviate $W_{g,ij}$ as an element of $\mathbf{W}_{g}(\gamma, \theta)$.
Thus $y_{g,i}$ is $L_2$ NED on $\tilde{\vps}$ by A.2).
As $\E|y_{gi} W_{g,ij} y_{gj} - \E\{y_{gi}| \mathcal{F}_g(s)\} W_{g,ij} \E\{y_{gj}| \mathcal{F}_g(s)\}| \leq  \|y_{gi}- \E\{y_{gi}| \mathcal{F}_g(s)\}\|_2 \|W_{g,ij}y_{gj}\|_2
\\+ \|y_{gj}- \E\{y_{gj}| \mathcal{F}_g(s)\}\|_2 \|{W}_{g,ij}y_{gi}\|_2 \leq C (d_{n,i}\vee d_{n,j})\psi(s),
$ by the fact that $\mathcal{F}_i(s) \subset \mathcal{F}_g(s)$ should hold for any $i \in B_g$.
Therefore we have
$\E| (\my_g - \mm_g)^{\top} \mathbf{W}_g (\my_g- \mm_g) - \E\{(\my_g - \mm_g)^{\top} \mathbf{W}_g (\my_g- \mm_g)| \mathcal{F}_g(s)\}| \leq \sum_i \sum_j\E |(y_g- m_g)_i W_{g,ij} (y_{g}- m_g)_j- \E\{(y_g- m_g)_i W_{g,ij} (y_{g}- m_g)_j| \mathcal{F}_g(s)\}|\leq \sum_i \sum_j \E|(y_g- m_g)_i W_{g,ij} (y_{g}- m_g)_j- \E\{(y_g- m_g)_i| \mathcal{F}_g(s)\} W_{g,ij} \E\{(y_{g}- m_g)_j| \mathcal{F}_g(s)\}| \leq C L^{2} d_g \psi(s) ,$ where $d_g = \max_{i\in B_g} d_{n,i}$ with $d_{n,i} = \CO(L).$
%Again we require that $L^2\psi(s)\to 0 $ as $s \to \infty.$

Given the $L_1-$ NED property of $q_g(\theta, \gamma)$ regarding the ULLN, we first look at a pointwise convergence of the function $q_g(.,.)$.
%
%and we need to verify that $\tilde{\by} = \{\by_{g,i}, i \in D_n, n\geq 1\}$ is $L-2$ NED on $\varepsilon = \{\varepsilon_{i,n}, i \in T_n, n \geq 1\}.$
 We need to verify the following assumptions:
 \begin{itemize}
 \item[i)] There exists non random positive constants  $c_{g}, g \in D_n, n\geq 1$  such that for any $\theta, \gamma$, such that  $\E |q_{g}/c_{g}|^{p'} <
     \infty$, where $p'>1$.
 \item[ii)]The $\alpha-$ mixing coefficients of the input field $\vps$ satisfy $\tilde{\alpha}(u,v,r) \leq \psi(uL,vL) \hat{\alpha}(r),$ and for some $\hat{\alpha}(r)$, $\sum^{\infty}_{r=1} r^{d-1}L^{\tau} \hat{\alpha}(r)< \infty.$
 \end{itemize}

 Condition i) is implied by A.5) with the moment assumptions on objects involved in $q_g(\gamma, \theta)$ with $c_{g,q} = \CO(L^2)$. The reason is that $\E|q_g(\gamma, \theta)|^{p'} \leq \E\ssup |q_g(\gamma, \theta)|^{p'}.$
 For ii) we see that it is implied from A.6).

Moreover the uniform convergence needs in addition two assumptions:
\begin{itemize}
\item[i)] $p'-$ dominance assumption. There exists an array of positive real constants $%
\left \{ c_{g,q}\right \} $ such that $p  \geq 1$.%
\begin{equation}
\lim \sup_{G}\frac{1}{\left \vert D_{G}\right \vert }\sum_{g}\E\left( \mathbf{q}_{g}^{p'}1\left( \mathbf{q}_{g}>k\right) \right) \rightarrow 0\text{
as }k\rightarrow \infty,
\end{equation}%
where $\mathbf{q}_{g}=\sup_{\mathbf{\gamma \in \Gamma, \theta \in \Theta}}\left \vert q_{g}\left(
 \gamma, \theta \right)\right \vert /c_{g,q} .$ This is a revision form of
the domination condition as Assumption 6 in \cite{Jenish2009}.
Uniform boundedness of $q_{g}\left( \mathbf{\gamma, \theta }\right) $ is
covered by setting $c_{g,q}=\CO(L).$

\item[ii)] Stochastic equicontinuity. We assume that $q_{g}(\theta, \gamma)$ to be $L_0$ stochastic equicontinuity on $\Gamma\times \Theta$ iff $\lim_{G\to \infty} 1/ |D_G| \sum_{g\in D_G} \P(\sup_{(\gamma'\in \Gamma, \theta' \in \Theta ) \in B(\theta', \gamma', \delta)}|q_{g}(\gamma, \theta) - q_{g}(\gamma', \theta')|> \vps) \to 0 ,$ where $B(\theta', \gamma', \delta)$ is a $\delta-$ ball around the point $\gamma', \theta'$ with $\nu(\theta,\theta')\leq \delta$ and $\nu(\gamma,\gamma')\leq \delta$.
\end{itemize}
i) is implied by condition A.5). Namely we would like to prove the condition i), which is implied by the L $-s$ for any constant $s > p'$ boundedness of $\mathbf{q}_g$. Then we need to verify $\mbox{sup}_g\|\mathbf{q}_g\|_s< C$. As $\|\mathbf{q}_g\|_s = (\E|\mbox{sup}_{\theta \in \Theta, \gamma \in \Gamma}q_g(\theta,\gamma)|^s)^{1/s} \leq \sum_l\sum_g \E| \tilde{\vps}^s_{g,l}w^s_{g,l,m} \tilde{\vps}^s_{g,m}|\\\leq \sum_l\sum_g(\E \tilde{\vps}_{g,l}^{2s} \tilde{\vps}_{g,m}^{2s})^{1/(2s)} (\E w^{2s}_{g,l,m})^{1/(2s)},$ where $ \tilde{\vps}_{g,l} \defeq \sup_{\theta\in \Theta} \nabla \mathbf{m}_{g,l}(\theta)$, and $w_{g,l,m} \defeq \ssup w_{g,l,m}(\gamma, \theta).$
Therefore it can be seen that this will be implied by A.5) with $s<r/4$, with $c_{g,q} = \CO(L^2).$

The stochastic equicontinuity can be guaranteed by $q_g(\theta,\gamma)$ to be Lipschitz in parameter.
Namely for any $(\gamma, \theta)  \in (\Gamma, \Theta)$ and $ (\gamma', \theta' )\in (\Gamma, \Theta)$
\begin{equation}
|q_{g}(\gamma, \theta) - q_{g}(\gamma', \theta')| \leq B_{g1} g( \nu(\gamma, \gamma'))+ B_{g2}g(\nu(\theta,\theta')),
\end{equation}
where $g(s) \to 0$ when $s \to \infty$, and $B_{g1}, B_{g2}$ are random variables that do not depend on $\theta, \gamma$.
And $p'>0$,
\begin{equation}
\mbox{limsup}_{n \to \infty} (|D_G| |M_{G}|)^{-1}\sum_{g} \E |B_{gl}|_a^{p'}< \infty.
\end{equation}
To verify this
\begin{eqnarray}
&&|c_{g,q}^{-1}(\my_g -\mm_g(\theta))^{\top} \mathbf{W}_g^{-1}(\theta,\gamma) (\my_g- \mm_g(\theta)) - c_{g,q}^{-1}(\my_g - \mm_g(\theta'))^{\top}\mathbf{W}_g^{-1}(\theta',\gamma')(\my_g- \mm_g(\theta'))|\nonumber\\&&\leq
|c_{g,q}^{-1}\ssup s_g(\theta, \gamma)|_2|\theta-\theta'|_2+ |c_{g,q}^{-1}\ssup s_{g,\gamma}(\theta, \gamma)|_2|\gamma-\gamma'|_2,
\end{eqnarray}
where $s_{g,\gamma}(\theta, \gamma) \defeq  (\my_g-\mm_g(\theta))^{\top} \bigotimes (\my_g-\mm_g(\theta))^{\top} \partial (\mathbf{W}_g^{-1}(\theta, \gamma))/\partial \gamma$.

By A.5) we have $c_{g,q}^{-1}\||\ssup s_g(\theta, \gamma)|_2\|_{p'} = \CO(p),  c_{g,q}^{-1}\||\ssup s_{g,\gamma}(\theta, \gamma)|_2\|_{p'} = \CO(q) $.
So we have ii) and the desired results $\sup_{\theta \in \Theta, \gamma \in \Gamma}|\mathbf{Q}_{G}\left( \mathbf{\theta,\gamma}\right)_{i,j} - \overline{\mathbf{Q}}_{\infty,i,j}(\theta^0,\gamma^0)| \to \Co_p(1)$.

\end{proof}

\begin{lemma}\label{NED} \label{lem2:consistency}
Under condition A.1)- A.8), $s_{g}\left( \mathbf{\theta },\mathbf{\gamma}\right) $ is
$L_2$ \textrm{NED }on $\tilde{\vps}$, with the NED constant as $d_g = \max_{i\in B_g} d_{n,i},$ and with the NED coefficients $\psi(s).$
Moreover, we have a ULLN for the partial sums ${(M_{G}D_G)}^{-1}\sum_{g} s_g\left( \mathbf{\theta },\mathbf{\gamma}\right).$
\end{lemma}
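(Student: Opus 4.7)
The plan is to follow the same two-step strategy as in Lemma \ref{lem1:consistency}: first establish the $L_2$-NED property of $s_g(\theta,\gamma)$ coordinate-by-coordinate, then upgrade the pointwise law of large numbers for $(M_G G)^{-1}\sum_g s_g(\theta,\gamma)$ to a uniform one via a stochastic equicontinuity argument. The key observation that makes the NED step work is that $\nabla \mathbf{m}_g(\theta)$ and $\mathbf{W}_g^{-1}(\theta,\gamma)$ depend only on $\mathbf{x}_g$, and since $\vps_i = (x_i,\epsilon_i)$ the matrix $\nabla \mathbf{m}_g^{\top}\mathbf{W}_g^{-1}$ is $\mathcal{F}_g(0)$-measurable for every fixed $(\theta,\gamma)$. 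Consequently $s_g(\theta,\gamma) - \E[s_g(\theta,\gamma)|\mathcal{F}_g(s)] = \nabla \mathbf{m}_g^{\top}\mathbf{W}_g^{-1}\{\mathbf{y}_g - \E[\mathbf{y}_g|\mathcal{F}_g(s)]\}$, so the NED deviation inherits entirely from that of $\mathbf{y}_g$.

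Writing the $k$-th component as $\sum_{i,j\in B_g}[\nabla m_{g,i}]_k [W^{-1}_{g,ij}](y_{g,j}-\E[y_{g,j}|\mathcal{F}_g(s)])$ and applying Cauchy--Schwarz inside the expectation, I would bound $\|s_g^{(k)} - \E[s_g^{(k)}|\mathcal{F}_g(s)]\|_2$ by $\sum_{i,j} \|[\nabla m_{g,i}]_k [W^{-1}_{g,ij}]\|_4 \cdot \|y_{g,j}-\E[y_{g,j}|\mathcal{F}_g(s)]\|_4$. The first factor is uniformly finite by A.5) (using $r>4$ and Hölder to split the product), and the second is at most $d_{n,j}\varphi(s)\leq d_g\psi(s)$ by the $L_4$-NED assumption A.2). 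Summing over the $L^2$ pairs inside $B_g$ gives $\|s_g - \E[s_g|\mathcal{F}_g(s)]\|_2 \leq C L^2 d_g \psi(s)$; the factor $L^2$ is absorbed into the scaling constant $c_{g,s}$ defined in A.5), so $s_g/c_{g,s}$ is $L_2$-NED on $\tilde{\vps}$ with NED constant $d_g$ and coefficient $\psi(s)$.

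For the ULLN I verify componentwise the hypotheses of the Jenish--Prucha uniform law. Moment domination is a direct consequence of A.5): a Hölder argument analogous to the one in Lemma \ref{lem1:consistency} yields $\sup_G |D_G|^{-1}\sum_g \E[|\ssup s_g/c_{g,s}|_2^{p'}\,\mathbf{1}(|\ssup s_g/c_{g,s}|_2>k)]\to 0$ because we have moments of order $r>4p'\vee 4p''$ on $\nabla m_{g,i}$, $W^{-1}_{g,ij}$ and $y_{g,i}$. The mixing-summability requirement is A.6). Stochastic equicontinuity is handled through the Lipschitz bound $|s_g(\theta,\gamma)-s_g(\theta',\gamma')|_2 \leq B_{g,1}\nu(\theta,\theta') + B_{g,2}\nu(\gamma,\gamma')$, where $B_{g,1},B_{g,2}$ involve $\nabla^2\mathbf{m}_g$, $\partial \mathbf{W}_g^{-1}/\partial\gamma$ and $\mathbf{y}_g-\mathbf{m}_g$; by A.4), A.5) and the moment bounds in A.7) these constants satisfy $\limsup_G (|D_G|M_G)^{-1}\sum_g \E|B_{g,\ell}|^{p'}<\infty$, which is sufficient for stochastic equicontinuity on the compact set $\Theta\times\Gamma$. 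Combining pointwise convergence with equicontinuity yields $\ssup |(M_G G)^{-1}\sum_g \{s_g(\theta,\gamma)-\E s_g(\theta,\gamma)\}|_2 \to_p 0$.

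The main obstacle is keeping the bookkeeping for $L$, $d_g$ and the various Hölder exponents consistent: the product $\nabla m_{g,i}^{\top} W_{g,ij}^{-1} y_{g,j}$ requires moments of order at least $4$ on each factor to get $L_2$-NED of the product, which is why A.2) upgrades $y_i$ to $L_4$-NED and why A.5) demands $r>4p''\vee 4p'$. Once these exponents are tracked so that the resulting summability conditions match A.6) and A.2), the proof is a formal repetition of the scheme used in Lemma \ref{lem1:consistency}; no new probabilistic ingredients enter.
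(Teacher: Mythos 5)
Your proposal is correct and follows essentially the same route as the paper's proof: factor $\nabla\mathbf{m}_g^{\top}\mathbf{W}_g^{-1}$ out of the conditional expectation since it depends only on $\mathbf{x}_g$, bound the resulting product elementwise by Cauchy--Schwarz with $L_4$ norms to get the $C L^2 d_g\psi(s)$ NED bound, and then obtain the ULLN from $p'$-domination (via A.5)) plus a Lipschitz-in-parameters stochastic equicontinuity argument. No substantive differences.
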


\begin{proof}This proof is similarly proved as in lemma \ref{lem1:consistency}.
It can be seen that
$\| |\nabla \mm_g^{\top} \mathbf{W}_g(\theta,\gamma) (\my_g- \mm_g(\theta)) - \E\{\nabla \mm_g^{\top} \mathbf{W}_g(\theta,\gamma) (\my_g- \mm_g(\theta))|\mathcal{F}_g(s)\}|_2\|
 \leq \sum_i \sum_j\||\nabla m_{gi}^{\top} W_{ij}(\theta,\gamma) (y_{g}- m_g)_j\\- \E\{\nabla m_{gi}^{\top} W_{ij}(\theta,\gamma) (y_{g}- m_g)_j| \mathcal{F}_g(s)\}|_2\|\leq \sum_i \sum_j \||\nabla m_{gi}^{\top}
W_{ij} (\theta,\gamma)(y_{g}- m_g)_j- \nabla m_{gi}^{\top}W_{ij} (\theta,\gamma)\E\{(y_{g}- m_g)_j| \mathcal{F}_g(s)\}|_2\| \leq C' L^{2} d_g \psi(s) ,$ where $d_g = \max_{i\in B_g} d_{n,i}, $ and $ \mbox{max}_{i,j}\||\nabla m_{gi}^{\top}W_{ij} |_2\|_4 \lesssim C'pL$ according to A.5).
The $p'$ dominance assumption will be following from A.5) given the fact that $\mbox{sup}_g\E |\ssup s_g|^r< C,$ for $ p'< s< r/4.$  This would imply the uniform integrability.

Regarding the Lipschitz condition needed for the stochastic equicontinuity property $M_{G}^{-1}|\nabla \mm_g^{\top} \mathbf{W}_g(\theta,\gamma) (\my_g- \mm_g) - \nabla \mm_g^{\top}(\theta') \mathbf{W}_g(\theta',\gamma') (\my_g- \mm_g(\theta') )|\leq M_{G}^{-1}|h_g|_2 |\theta-\theta'|_2 + M_{G}^{-1}|H_{g,\gamma}|_2|\gamma-\gamma'|_2$, where $h_g \defeq \ssup h_g(\theta, \gamma)$ and $H_{g,\gamma} \defeq \ssup \partial s_g(\gamma, \theta)/\partial \gamma.$ The finiteness of $\sup_g\E(|H_g|_2^{p'}), \sup_g\E(|H_{g,\gamma}|_2^{p'}) $ will be implied by A.5).

\end{proof}

\begin{lemma}\label{NED} \label{lem3:consistency} \label{ULLNh}
Under condition A.1)- A.8), $h_{g}\left( \mathbf{\theta}, \mathbf{\gamma} \right)$ is
$L_1$ \textrm{NED }on $\tilde{\vps}$, with the NED constant as $d_g = \max_{i\in B_g} d_{n,i},$ and with the NED coefficient $\psi(s).$
Moreover, we have a ULLN for the partial sums ${(M_{G}D_G)}^{-1}\sum_{g} h_g\left( \mathbf{\theta },\mathbf{\gamma}\right).$
\end{lemma}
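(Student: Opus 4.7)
The plan is to mirror the strategy used for Lemmas \ref{lem1:consistency} and \ref{lem2:consistency}, exploiting the decomposition
\begin{equation*}
h_g(\theta,\gamma)=H_{g,1}(\theta,\gamma)+H_{g,2}(\theta,\gamma)+H_{g,3}(\theta,\gamma)
\end{equation*}
induced by (\ref{eq:hessian}). The first piece $H_{g,1}=-\nabla\mathbf{m}_g^{\top}\mathbf{W}_g^{-1}\nabla\mathbf{m}_g$ depends on $\vps$ only through $\mathbf{x}_g$, so its NED property follows directly from $\vps$ being $\alpha$--mixing, with coefficient $\psi(s)\equiv 0$ up to truncation; the dominance and Lipschitz bounds come from A.5) applied to $\nabla\mathbf{m}_g$ and $\mathbf{W}_g^{-1}$. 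The second and third pieces are linear in $\mathbf{y}_g-\mathbf{m}_g(\theta)$, multiplied by matrix factors whose elementwise absolute values are exactly $b_{ij}$ and $c_{ij}$ of A.7).

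For the $L_1$--NED step, I would write
\begin{equation*}
\bigl\|H_{g,2}-\E(H_{g,2}\mid\mathcal{F}_g(s))\bigr\|_1
\le\sum_{i,j}\bigl\|(y_{gi}-m_{gi})\,b_{ij}-\E\{(y_{gi}-m_{gi})\,b_{ij}\mid\mathcal{F}_g(s)\}\bigr\|_1,
\end{equation*}
and bound each summand by Cauchy--Schwarz, using that $\{y_{gi}\}$ is $L_2$--NED on $\tilde{\vps}$ with constant $d_{n,i}$ and coefficient $\psi(s)$ by A.2), while $\|\sup_{\theta,\gamma} b_{ij}\|_2<\infty$ by A.7). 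The same argument applies to $H_{g,3}$ with $c_{ij}$ in place of $b_{ij}$. Summing over the $L^2$ pairs $(i,j)$ produces the bound $C L^2 d_g\psi(s)$, with $d_g=\max_{i\in B_g}d_{n,i}$, establishing the claimed $L_1$--NED property with coefficient $\psi(s)$ and constant $d_g$.

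For the ULLN, I would follow the two--step recipe used in Lemma \ref{lem1:consistency}: (i) pointwise convergence in probability of $(M_G G)^{-1}\sum_g h_g(\theta,\gamma)$, which follows from the $L_1$--NED property just established together with the mixing condition A.6) (Theorem 3 of \cite{Jenish2012}); and (ii) stochastic equicontinuity in $(\theta,\gamma)$. Step (ii) reduces to a Lipschitz bound
\begin{equation*}
\bigl|c_{g,h}^{-1}\{h_g(\theta,\gamma)-h_g(\theta',\gamma')\}\bigr|_1
\le B_{g,1}\,\nu(\theta,\theta')+B_{g,2}\,\nu(\gamma,\gamma'),
\end{equation*}
where $B_{g,1},B_{g,2}$ are formed from suprema of the third derivatives of $\mathbf{m}_g$ and the second derivatives of $\mathbf{W}_g^{-1}$ times either $\mathbf{y}_g-\mathbf{m}_g$ or $\nabla\mathbf{m}_g$. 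Their $p'$--moments are uniformly bounded by A.5) (which supplies moment bounds up to second derivatives on $m_{g,i},w_{g,i,j}$ and the existence of third derivatives) and A.7). The $p'$--dominance condition $\limsup_G |D_G|^{-1}\sum_g \E(\mathbf{h}_g^{p'}\mathbf{1}\{\mathbf{h}_g>k\})\to 0$, where $\mathbf{h}_g=c_{g,h}^{-1}\sup_{\theta,\gamma}|h_g|_1$, is obtained by Markov's inequality once $\sup_g \E|\mathbf{h}_g|^r<\infty$ is verified via A.5) and A.7), using Cauchy--Schwarz to separate $(\mathbf{y}_g-\mathbf{m}_g)$ from the matrix factors. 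The scaling $c_{g,h}=\mathcal{O}(L^2)$ is inherited from the double sum over $B_g\times B_g$, consistent with the definition of $M_G$ in A.5).

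The main obstacle is the third term $H_{g,3}$, whose matrix factor $\partial\text{Vec}(\mathbf{W}_g^{-1})/\partial\theta$ couples $\theta$ and $\gamma$ non--trivially through the inverse of the weight matrix. Controlling its suprema requires using A.5) together with the assumed positive definiteness (implicitly via $\lambda_{\min}(\mathbf{W}_g)>c$ as in the remark following A.8)) to express derivatives of $\mathbf{W}_g^{-1}$ as $-\mathbf{W}_g^{-1}(\partial\mathbf{W}_g/\partial\theta)\mathbf{W}_g^{-1}$ and to bound the resulting triple product elementwise; once this is in hand, the same NED-propagation and equicontinuity arguments as above apply verbatim.
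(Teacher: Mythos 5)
Your proposal is correct and follows essentially the same route as the paper: the same three-term decomposition of the Hessian from (\ref{eq:hessian}), the same observation that the first piece is a measurable function of $\mathbf{x}_g$, the same Cauchy--Schwarz separation of $\mathbf{y}_g-\mathbf{m}_g$ from the matrix factors $b_{ij},c_{ij}$ of A.7) to propagate the NED property, and the same pointwise-convergence-plus-dominance-plus-stochastic-equicontinuity recipe for the ULLN. The only cosmetic difference is bookkeeping of the NED scaling constant (you sum over $L^2$ pairs where the paper takes a max over $i\in B_g$), which does not affect the conclusion.
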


\begin{proof}
Now we verify the component involved in the partial sums in  $\mathbf{H}_{G}\left( \mathbf{\theta ,\hat{\gamma}}\right)$ are also $L_1$ NED on $\tilde{\vps}.$
%= \{x_{n,i}, \epsilon_{n,i}, i \in D_n\}, $ where $x_{n,i}$ is the underlying covariates and $\epsilon_{n,i}$ is the underlying innovation processes.

Namely, $h_{1g} \defeq \nabla _{\mathbf{\theta }}\mathbf{m}_{g}^{\top }\left( \mathbf{\theta }\right) \mathbf{W}_{g}(\gamma, \theta)^{-1}\nabla
_{\mathbf{\theta }}\mathbf{m}_{g}\left( \mathbf{\theta }\right)$, $h_{2g} \defeq  [(\mathbf{y}_g-\mathbf{m}_g(\theta) )^{\top} \mathbf{W}_g(\gamma, \theta)^{-1} \otimes  I_q]\partial \mbox{Vec}(\nabla \mathbf{m}^{\top}_g)/\partial \theta $,
$h_{3g} \defeq \{(\mathbf{y}_g-\mathbf{m}_g(\theta))^{\top} \otimes \nabla \mathbf{m}^{\top}_g \}\partial \mbox{Vec}\{\mathbf{W}_g(\gamma,\theta)\}/\partial \theta.$
It is obvious that $h_{1g}$ is NED on $\tilde{\vps}$ as a measurable function of $\mx_g$. Define $e_i$ as a $p \times 1$ vector with only the $i-$th component as $1$, $|.|_a$ is taking the elementwise absolute value. And $b_{ij} \defeq e_i^{\top}(\mathbf{1}^{\top} W_g \otimes I_g )|\partial{\mbox{Vec}(\nabla\mm_g)}/\partial \theta|_a e_j.$ We verify now $h_{2g}$ for any fixed point $\gamma$ and $\theta$, it can seen that
$\E|h_{2g,i,j} -\E\{h_{2g,i,j}|\mathcal{F}_g(s)\}| \leq \E|e_i^{\top}([\{\by_{g,i} - \E (\by_{g,i} |\mathcal{F}_g(s))\}^{\top}   \mathbf{W}_g(\gamma, \theta)^{-1} \otimes  I_q]\partial \mbox{Vec}(\nabla \mathbf{m}^{\top}_g)/\partial \theta)e_j| \leq \E (\max_{i\in B_g}|y_{g,i} - \E [y_{g,i} |\mathcal{F}_g(s)]| b_{ij} |) \leq {L}^{1/2} \|\ssup b_{ij}\| d_{g} \psi(s)$, where for sufficiently large $s$ and $d_g = \CO({L}^{1/2} \|b_{ij}\| d_{g})$.  Therefore we proved the $L_1$ NED of $H_{2g}$.
Similarly for $H_{3g}$, define $c_{ij} =  e_i^{\top}(\mathbf{1}^{\top}\otimes \nabla \mm_g^{\top}(\theta) )|\partial{\mbox{Vec}(\nabla\mm_g(\theta) )}/\partial \theta|_a e_j$.
Then $\E|H_{2g,i,j} -\E\{H_{2g,i,j}|\mathcal{F}_l(s)\}| \leq {L_g}^{1/2} \|\ssup c_{ij}\| d_{g} \psi(s), $ where for sufficiently large $s$ and assume that ${L}^{1/2} \|\ssup c_{ij}\| d_{g} \psi(s)\to 0$.  We proved thus the $L_1$ NED of $H_{3g}.$
Then we would have the pointwise convergence of $\mathbf{H}_{G,1}(\theta, \gamma)$, $\mathbf{H}_{G,2}(\theta, \gamma)$, $\mathbf{H}_{G,3}(\theta,\gamma)$ any fixed point $\theta \in \Theta , \gamma \in \Gamma$.
To ensure that with probability $1-\Co_p(1),$ $|\mathbf{H}_{G}\left( \mathbf{\theta ,\hat{\gamma}}\right) - \mathbf{H}_{\infty}\left( \mathbf{\theta^0,\gamma^0}\right)|\leq \mbox{sup}_{\theta \in \Theta, \gamma \in \Gamma}|\mathbf{H}_{G}\left( \mathbf{\theta ,\gamma}\right) - \mathbf{H}_{\infty}\left( \mathbf{\theta^0,\gamma^0}\right)|\to 0$, therefore we need a ULLN.

Moreover the uniform convergence needs in addition two assumptions:
\begin{itemize}
\item[i)] There exists an array of positive real constants $%
\left \{ c_{g,h}\right \} $ such that  for constant $\delta> 0:$%
\begin{equation}
\mbox{limsup}_{G}\frac{1}{|D_{G}|}\sum_{g}E\left( \mathbf{H}_{l,g,i,j}^{2+\delta}\IF\left( \mathbf{H}_{l,g,i,j}>k\right) \right) \rightarrow 0\text{
as }k\rightarrow \infty ,
\end{equation}%
where $\mathbf{H}_{l,g,i,j}=\sup_{\mathbf{\theta \in \Theta, \gamma \in \Gamma }}\left \vert h_{gl,i,j}\left(
\mathbf{\theta }, \gamma\right) /  c_{g,h}\right \vert_1
 .$ This is a again revision form of
the domination condition as Assumption 6 in \cite{Jenish2009}.
Uniform boundedness of $\mathbf{H}_{g}\left( \mathbf{\theta }\right) $ is
covered by setting $c_{g,h}=\CO(L^2).$ $l = 1,2,3.$
\item[ii)] Stochastic equicontinuity. We assume that $H_{l,g}(\theta, \gamma)$ to be $L_0$ stochastic equicontinuity on $\Gamma$ iff $\lim_{G\to \infty} 1/ |D_G| \sum_{g} \P(\sup_{(\gamma'\in \Gamma, \theta'\in \Theta)  \in B(\gamma', \theta',\delta)}|H_{g,i,j}(\gamma, \theta) \\- H_{g,i,j}(\gamma',\theta')|> \vps) \to 0 .$
\end{itemize}
The stochastic equicontinuity can be guaranteed by $h_{g,i,j}(\theta,\gamma)$ to be Lipschitz in parameter, which is ensured by A.5).

Then we have $\sup_{\gamma \in \Gamma, \theta \in \Theta}|\mathbf{H}_{G,i,j}\left( \mathbf{\theta,\gamma}\right) - \mathbf{H}_{\infty,i,j}(\gamma, \theta)]| \to \Co_p(1)$.
\end{proof}

\subsection{Proof of Theorem \ref{th:consistency}}

Two sufficient conditions for consistent estimators are
i) identification implied by A.8) and ii) the objective
function $Q_G(\theta,\gamma)$ satisfies the uniform law of large numbers (ULLN).
By Lemma \ref{lem1:consistency}, we have the uniform LLN of $Q_G(\theta, \gamma)$.
%
%Thus
% we can apply \textbf{Theorem 1%
%} in \cite{Jenish2012}, which provides a

%uniform LLN that can be used to
%establish uniform convergence of random functions.
Namely  $\mathbf{\theta \in \Theta , \gamma \in \Gamma}$, $\mbox{sup}_{\theta \in \Theta, \gamma \in \Gamma}\frac{1}{M_{G}\left \vert D_{G}\right \vert }%
\left[ Q_{G}\left( \mathbf{\theta }, \mathbf{\gamma}\right) -Q_{\infty}\left( \mathbf{\theta^0
}, \mathbf{\gamma}^0\right) \right] $ $\overset{p}{\rightarrow }0,$ as $G\rightarrow \infty
.$

 Thus we conclude that under A.1)-A.8), the GEE estimator is consistent.

\subsection{Proof of Theorem \ref{th:normality}}

\subsubsection{Step 1 : Main expansion step} \label{expansion}

%\begin{itemize}
%\item[A.8)] The true point $\theta^0, \gamma^0$ lies in the interior point of $\Theta, \Gamma$. $\hat{\gamma}$ is estimiated with $|\hat{\gamma} - \gamma^0|_2 = \Co_p(G^{-1/2}).$
%\item[A.9)] $c<\lambda_{max}(\E\left(\nabla \mathbf{m}_{g}^{\top }\mathbf{W}_{g}^{\ast
%-1}\nabla \mathbf{m}_{g}\right))< \lambda_{max} (\E\left(\nabla \mathbf{m}_{g}^{\top }\mathbf{W}_{g}^{\ast
%-1}\nabla \mathbf{m}_{g}\right) < C$ is positive definite.
%\item[A.10)] $\mathbf{S}_{G}\left( \mathbf{\hat{\gamma}, \hat{\theta}} \right) = \Co_p(1)$.
%\end{itemize}
%
Recall $\mathbf{\mu}_g = \my_g - \mm_g(\theta^0)$ and $\mathbf{\hat{\mu}}_g = \my_g- \mm_g(\hat{\theta}) $
\begin{equation}
\mathbf{S}_{G}\left( \mathbf{\theta ,\hat{\gamma}}\right) =\frac{1}{M_{G}G}%
\sum_{g}\nabla \mathbf{m}_{g}^{\top }\left( \mathbf{\theta}\right)
\mathbf{W}_{g}^{-1}\left( \mathbf{\hat{\gamma}},\theta\right) \left[ \mathbf{y}_{g}-%
\mathbf{m}_{g}\left( \mathbf{\theta }\right) \right] .  \label{quasiscore}
\end{equation}%

From the first order condition A.11).
\begin{equation*}
\mathbf{S}_{G}\left( \mathbf{\hat{\gamma}, \hat{\theta}} \right) = \Co_p(1).
\end{equation*}

To expand $\mathbf{S}_{G}\left( \mathbf{\hat{\gamma}, \hat{\theta}} \right) $ around the point
$\gamma^0, \theta^0$, we have,
\begin{eqnarray*}
&&\mathbf{S}_{G}\left( \mathbf{\hat{\gamma}, \hat{\theta}} \right) =\mathbf{S}_{G}\left( \mathbf{\gamma^0, \theta^0} \right) + \mathbf{H}_G(\tilde{\theta}, \tilde{\gamma}) (\hat{\theta}-\theta^0)+
\nabla_{\gamma} \mathbf{S}_G(\tilde{\gamma},\tilde{\theta})(\hat{\gamma} - \gamma^0)\\
&&=\mathbf{S}_{G}\left( \mathbf{\gamma^0, \theta^0} \right) + \mathbf{H}_{\infty}(\theta^0, \gamma^0) (\hat{\theta}-\theta^0)+
\mathbf{F}_0(\hat{\gamma} - \gamma^0)\\&&+ \{\mathbf{H}_G(\tilde{\theta}, \tilde{\gamma}) - \mathbf{H}_{\infty}(\theta^0, \gamma^0)\}(\hat{\theta}-\theta^0)+ \{\nabla_{\gamma} \mathbf{S}_G(\tilde{\theta}, \tilde{\gamma})- \mathbf{F}_0\}(\hat{\gamma} - \gamma^0)
\end{eqnarray*}
where $\tilde{\theta}, \tilde{\gamma}$ lie in the line segment between $\theta^0,\gamma^0$ to $\hat{\theta},\hat{\gamma}$, $\mathbf{F}_{0}$ is a $L\times q$ matrix, $\mathbf{F}%
_{0}=\lim_{G\rightarrow \infty }\left \{ \frac{1}{M_{G}|D_G|}\sum_{g}\E%
\left[ \nabla _{\gamma }s_{g}\left( \mathbf{\theta }%
^{0};\mathbf{\gamma }^{0}\right) \right] \right \} .$  From the derivation below we see that $\mathbf{F}_{0}=%
\mathbf{0},$ the asymptotic distribution of the average score does not
depend on the distribution of $\mathbf{\hat{\gamma},}$ and the first-step
estimation of $\mathbf{\hat{\gamma}}$ will not affect the second-step
estimation in terms of asymptotic variance.

%We now look at the plug-in error by expanding the estimated score around the true value of $\gamma^0$.
%
%Suppose that $\gamma^0$ can be pre-estimated with a $\sqrt{G}$ rate.
%\begin{equation}
%\frac{1}{\sqrt{G}}\sum_{g\in D_G}s_{g}\left( \mathbf{w}_{g},\mathbf{%
%\theta }_{0}\mathbf{,\hat{\gamma}}\right) =\frac{1}{\sqrt{G}}\sum_{g\in D_G}%
%s_{g}\left( \mathbf{w}_{g},\mathbf{\theta }_{0}\mathbf{,\gamma }%
%^{0}\right) +\mathbf{F}_{0}\sqrt{G}\left( \mathbf{\hat{\gamma}-\gamma }%
%^{0}\right) +op\left( 1\right) ,
%\end{equation}%
%where
%\begin{equation}
%\mathbf{F}_{0}=\lim_{G\rightarrow \infty }\frac{1}{G}\sum_{g\in D_G}\E%
%\left[ \nabla _{\mathbf{\gamma }}s_{g}\left( \mathbf{w}_{g},\mathbf{%
%\theta }_{0};\mathbf{\gamma }^{0 }\right) \right] .
%\end{equation}

$\mathbf{F}_0$ is the the limit of orthogonal score by construction. To identify this, we can see that
$\nabla_{\gamma} \{\nabla \mathbf{m}_{g}^{\top }\left(
\mathbf{\theta}^0\right) \mathbf{W}_{g}^{-1}(\theta^0,\gamma^0)\left[ \mathbf{y}_{g}-%
\mathbf{m}_{g}\left( \mathbf{\theta}^0\right) \right]\} \\= \{\mathbf{y}_{g}-%
\mathbf{m}_{g}\left( \mathbf{\theta}^0\right) \}^{\top} \otimes \nabla \mathbf{m}^{\top}_{g} (\theta^0) \nabla_{\gamma} \mbox{Vec}\{\mathbf{W}_{g}^{-1}(\theta^0, \gamma^0)\}
$.

$\left[ \mathbf{y}_{g}-%
\mathbf{m}_{g}\left( \mathbf{\theta}^0\right) \right] ^{\top} \otimes \nabla \mathbf{m}^{\top}_{g} (\theta^0) \nabla_{\gamma} \mbox{Vec}\{\mathbf{W}_{g}^{-1}(\theta^0, \gamma^0)\} \\= \E[\E [\{\mathbf{y}_{g}-%
\mathbf{m}_{g}\left( \mathbf{\theta}^0\right) \} ^{\top}|\mx_g] \otimes \nabla \mathbf{m}^{\top}_{g} (\theta
^0)  \nabla_{\gamma} \mbox{Vec}\{\mathbf{W}_{g}^{-1}(\theta^0, \gamma^0)\}] = 0$.

To handle the term $ \{\mathbf{H}_G(\tilde{\theta}, \tilde{\gamma}) - \mathbf{H}_{\infty}(\theta^0, \gamma^0)\}(\hat{\theta}-\theta^0)+ \{\nabla_{\gamma} \mathbf{S}^{\top}_G(\tilde{\theta}, \tilde{\gamma})- \mathbf{F}^{\top}_0\}(\hat{\gamma} - \gamma^0)$, we need the ULLN for $\mathbf{H}_G(\theta^0, \gamma^0)$ to derive $|e_i^{\top}(\mathbf{H}_G( \tilde{\theta}, \tilde{\gamma}) - \mathbf{H}_{\infty}(\theta^0, \gamma^0))e_j|\leq  \mbox{sup}_{\theta,\gamma}|e_i^{\top}(\mathbf{H}_G(\theta, \gamma) - \mathbf{H}_{\infty}(\theta^0, \gamma^0))e_j|\to_p 0.$ Also for
$\nabla_{\gamma} \mathbf{S}_G( \tilde{\theta}, \gamma)$ to derive\\ $|e_i^{\top}(\{\nabla_{\gamma} \mathbf{S}^{\top}_G( \tilde{\theta}, \gamma)- \mathbf{F}^{\top}_0\})e_j|
\leq  \ssup|e_i^{\top}\{\nabla_{\gamma} \mathbf{S}^{\top}_G(\theta, \gamma) - \mathbf{F}^{\top}
_0\}e_j|\to_p 0.$ This is already verified by Lemma \ref{ULLNh}.
We arrive at the conclusion that for any vector $a\in \mathcal{R}^p, |a|_2 = 1$, $|a^{\top}\{\mathbf{H}_G(\tilde{\theta}, \tilde{\gamma}) - a^{\top}\mathbf{H}_{\infty}(\theta^0, \gamma^0)\}(\hat{\theta}-\theta^0)| \leq |\{a^{\top}\mathbf{H}_G(\tilde{\theta}, \tilde{\gamma}) - a^{\top}\mathbf{H}_{\infty}(\theta^0, \gamma^0)\}|_2|(\hat{\theta}-\theta^0)|_2 = \Co_p(1)\times\CO_p(|(\hat{\theta}-\theta^0)|_2 ) =  \Co_p(|(\hat{\theta}-\theta^0)|_2 )$ and $|a^{\top}\{\nabla_{\gamma} \mathbf{S}^{\top}_G(\tilde{\theta}, \tilde{\gamma})- \mathbf{F}^{\top}_0\}(\hat{\gamma} - \gamma^0)|_2  = \Co_p(|(\hat{\gamma}-\gamma^0)|_2 ) = \Co_p(G^{-1/2})$ by A.8).

Next we look at the invertibility of the matrix $\mathbf{H}_{\infty}(\theta^0,\gamma^0)$.
Taking the expected value of the score function over
the distribution of $\left( \mathbf{x}_{g}\mathbf{,y}%
_{g}\right) $ gives
\begin{eqnarray*}
\E\left[ h_{g}\left( \mathbf{\theta }^{0},%
\mathbf{\gamma }^{0}\right) \right]  &=&\E[\E%
[ \mathbf{h}_{g}\left( \mathbf{w}_{g},\mathbf{\theta }^{0},\mathbf{%
\gamma }^{0}\right) |\mathbf{x}_{g}] ]
\\&=&\E[\{(\mathbf{y}_g-\mathbf{m}_g(\theta^0))^{\top} \otimes \nabla \mathbf{m}^{\top}_g (\theta^0)\}\partial \mbox{Vec}\{\mathbf{{W}}_g\left(\mathbf{\theta }^{0},%
\mathbf{\gamma }^{0}\right)\}/\partial \theta ]
\\&&-\E[\nabla\mathbf{m}%
_{g}^{\top }(\theta^0)\mathbf{{W}}_{g}^{-1}(\theta^0,\gamma^0)\nabla
_{\mathbf{\theta }}\mathbf{m}_{g}(\theta^0) ]
\\&&+\E[[  \{(\mathbf{y}_g-\mathbf{m}_g(\theta^0))^{\top} \mathbf{{W}}_g^{-1}\left(\mathbf{\theta }^{0},\mathbf{\gamma }^{0}\right) \otimes  I_q\}]\partial \mbox{Vec}(\nabla \mathbf{m}^{\top}_g)/\partial \theta  ]\\
&=&\E[ -\nabla \mathbf{m}_{g}(\theta^0)^{\top }\mathbf{W}_{g}^{
-1}\left(\mathbf{\theta }^{0},%
\mathbf{\gamma }^{0}\right)\nabla \mathbf{m}_{g}(\theta^0)] \\&&+ \E[\{\E[(\mathbf{y}_g-\mathbf{m}_g(\theta^0))^{\top}|\mathbf{x}_{g}] \otimes \nabla \mathbf{m}^{\top}_g \}\partial \mbox{Vec}(\mathbf{{W}}_g(\mathbf{\theta }^{0},%
\mathbf{\gamma }^{0}))/\partial \theta ]\\&&
+\E[  \{\E[\{(\mathbf{y}_g-\mathbf{m}_g(\theta^0))^{\top}|\mathbf{x}_{g}\} \mathbf{{W}}_g^{-1}\left( \mathbf{\theta }^{0},%
\mathbf{\gamma }^{0}\right) \otimes  I_q]\}\partial \mbox{Vec}(\nabla \mathbf{m}^{\top}_g)(\theta^0)/\partial \theta ] \\
%\\
%&=&\E\left( -\nabla \mathbf{m}_{g}^{\top }\mathbf{W}_{g}^{\ast
%-1}\nabla \mathbf{m}_{g}\right) +\partial \nabla _{\mathbf{\theta }}^{2}%
%\mathbf{m}_{g}^{\top }\mathbf{W}_{g}^{\ast -1}\left[ \mathbf{m}_{g}-%
%\mathbf{m}_{g}\right]  \\
&=&\E[ -\nabla \mathbf{m}_{g}^{\top }(\theta^0)\mathbf{W}_{g}^{
-1}\left( \mathbf{\theta }^{0},%
\mathbf{\gamma }^{0}\right)\nabla \mathbf{m}_{g}(\theta^0)],
\end{eqnarray*}%
which is negative definite by assumption A.9).

The GEE estimator can be specifically written as%
\begin{equation}\label{eq:gee}
\sqrt{G}\left( \mathbf{\hat{\theta}}\mathbf{-\theta }%
_{0}\right) =\left[\mathbf{H}_{\infty}(\theta^0, \gamma^0) \right] ^{-1}\frac{1}{\sqrt{G%
}}\sum_{g}s_{g}\left(\mathbf{\theta }^{0},\mathbf{{%
\gamma^0}}\right)+ \Co_p(1)+ \Co_p(\sqrt{G}|\hat{\theta} - \theta^0|_2).
\end{equation}

Due to the $L_2$ NED property of $s_g$, $\Var(\sum^G_{g=1}s_{g}) = \CO(G )$, thus

%where $\mathbf{\ddot{\theta}}$ is some value between $\mathbf{\theta }^{0}$
%and $\mathbf{\hat{\theta}}$ and goes to $\mathbf{\theta }^{0}$ as $%
%G\rightarrow \infty .$

%We have verified in lemma ?? the L1 NED property of the partial in  ,
%\begin{equation*}
%-\frac{1}{G}\sum_{g\in D_G}\mathbf{h}_{g}\left( \mathbf{w}_{g},\mathbf{\theta
%}^{0},\mathbf{\gamma }^{\ast }\right) \rightarrow -\frac{1}{G}\sum_{g\in D_G}%
%\E\left[ \mathbf{h}_{g}\left( \mathbf{w}_{g},\mathbf{\theta }^{0},%
%\mathbf{\gamma }^{\ast }\right) \right] \rightarrow \mathbf{A}_{0}.
%\end{equation*}%
we have $\sqrt{G}|\hat{\theta} - \theta^0|_2 \lesssim|\mathbf{H}_{\infty}(\theta^0, \gamma^0)^{-1}|_2  = \CO_p(CM_G^{2})$, as the order of $\frac{1}{\sqrt{G%
}}\sum_{g}s_{g}\left( \mathbf{w}_{g},\mathbf{\theta }^{0};\mathbf{{%
\gamma^0}}\right)$ under assumption B.3) is $\CO_p(G^{-1/2})$.
This implies that $ \Co_p(\sqrt{G}|\hat{\theta} - \theta^0|_2) = \CO_p(1).$

\subsubsection{Step 2 Central Limit Theorem}
We derive the variance of $s_{g}\left( \mathbf{w}_{g},\mathbf{%
\theta }^{0}\mathbf{,\gamma }^{0 }\right) $ in this subsection.
\begin{eqnarray*}
&&AS_G = \mathrm{Var}\left[ \frac{1}{\sqrt{G}}\sum_{g}s_{g}\left(
\mathbf{w}_{g},\mathbf{\theta }^{0}\mathbf{,\gamma }^{0 }\right) \right]\\
&=&\mathrm{Var}\left \{ \frac{1}{\sqrt{G}}\sum_{g}\nabla \mathbf{m}%
_{g}^{\top }\left( \mathbf{\theta }^{0}\right) \mathbf{W}_{g}^{-1}\left(
 \mathbf{\theta }^{0},\mathbf{\gamma }^{0 }\right) \left[ \mathbf{y}_{g}-\mathbf{m}_{g}\left(
\mathbf{\theta }^{0}\right) \right] \right \}  \\
&=&\mathrm{Var}\left[ \frac{1}{\sqrt{G}}\sum_{g}\nabla \mathbf{m}%
_{g}^{\top}\left( \mathbf{\theta }^{0}\right) \mathbf{W}_{g}^{-1}\left( \mathbf{\theta }^{0},
\mathbf{\gamma }^{0 }\right) \mathbf{u}_{g}\right]  \\
&=&\frac{1}{G}\sum_{g}\E\left[ \nabla \mathbf{m}_{g}^{\top
}\left( \mathbf{\theta }^{0}\right) \mathbf{W}_{g}^{-1}\left( \mathbf{\theta }^{0},\mathbf{\gamma
}^{0 }\right) \mathbf{u}_{g}\mathbf{u}_{g}^{\top }\mathbf{W}%
_{g}^{-1}\left( \mathbf{\theta }^{0},\mathbf{\gamma }^{0 }\right) \nabla \mathbf{m}_{g}\left(
\mathbf{\theta }^{0}\right) \right]  \\
&&+\frac{1}{G}\sum_{g}\sum_{h, h\neq g}\E\left[ \nabla
\mathbf{m}_{g}^{\top }\left( \mathbf{\theta }^{0}\right) \mathbf{W}%
_{g}^{-1}\left( \mathbf{\theta }^{0}, \mathbf{\gamma }^{0 }\right) \mathbf{u}_{g}\mathbf{u}^{\top}_{h}%
\mathbf{W}_{h}^{-1}\left(\mathbf{\theta }^{0}, \mathbf{\gamma }^{0 }\right) \nabla \mathbf{m}%
_{h}\left( \mathbf{\theta }^{0}\right) \right] .
\end{eqnarray*}

The next step is to apply the central limit theorem (Corollary 1 in \cite{Jenish2012}) the element $\mathbf{S}_G = \frac{1}{\sqrt{G}}\sum_{g}%
s_{g}\left( \mathbf{w}_{g},\mathbf{\theta }^{0}\mathbf{,\gamma }%
^{0 }\right),$ and $AS_{\infty} = \lim_{G\to \infty} AS_G$.
For that we need to verify the following conditions:
\begin{itemize}
\item[i)]   $s_g$ is uniform $L_2$ NED on the $\alpha-$ mixing random field $\tilde{\vps}$ with  coefficients $d_gL$ and $\psi(s)$, $\sup_{G,g} d_gL < \infty$  and $\sum^{\infty}_{r=1}r^{d-1}\psi(r)<\infty$. Moreover $\mbox{sup}_{G}\mbox{sup}_{g} \|s_g\|_r,$ where $r > 2+\delta'$, with $\delta'$ as a constant.
\item[ii)]  The input field $\tilde{\vps}$ is $\alpha-$ mixing with coefficient $\sum^{\infty}_{r =1}r^{(d \tau^*+d)-1}L^{\tau^*} \hat{\alpha}^{\delta/(2+\delta')}(r)< \infty.$ ($\tau^* = \delta '\tau /(4+2\delta')$)
\item[iii)] $\inf_G |D_G|^{-1} M_G^{-2}  \lambda_{min}(\mathbf{AS}_{\infty})> 0.$  (suppressed $G$ for the triangular array.)
\end{itemize}
i) is proved in  Lemma \ref{lem2:consistency}, ii) can be inferred by A.11), and iii) can be inferred from A.10).
Therefore under A.1)-A.11)
 \begin{equation}
 \mathbf{AS}_{\infty} ^{-1/2} \mathbf{S}_G \Rightarrow \mathbb{N}(0,I_p).
\end{equation}

So we have  $AV(\hat{\theta}) = \mathbf{H}_{\infty}^{\top} \mathbf{AS}_{\infty} \mathbf{H}_{\infty}$
\begin{equation}
\sqrt{G}AV(\hat{\theta})^{-1/2}(\hat{\theta} - \theta^0) \Rightarrow \mathbb{N}(0,I_p).
\end{equation}

\subsection{Proof of Proposition \ref{pre}}

\begin{itemize}
\item[A.8)'](Identifiability)
$\overline{E}_{G}\left( \mathbf{\theta }, \mathbf{\gamma}\right)\defeq \sum_{g}(\mathbf{e}_{g}(\check{\theta})-\mathbf{z}%
_{g}(\gamma))^{\top }(\mathbf{e}_{g}(\check{\theta})-\mathbf{z}_{g}(\gamma))$.
 And $E_{\infty}(\gamma, \theta)\defeq \lim_{G\rightarrow \infty }\overline{E}_{G}\left( \mathbf{\gamma,
\theta }\right) .$  Assume that $\theta^0, \gamma^0$ are identified unique in a sense that \\ $\liminf_{G\to \infty}\mbox{inf}_{\gamma \in \Gamma: \nu(\gamma, \gamma^0) \geq \vps }\overline{E}_{G}\left( \mathbf{\theta}, \mathbf{\gamma}\right) > c_0> 0$, for a positive constant $c_0$.
%Also the score function $\E(\sum_l\sum_{m<l} (\mathbf{e}_{glm}(\theta)-\mathbf{z}%
%_{glm}(\gamma^0)) \partial{\mathbf{z}%
%_{glm}(\gamma^0)}/{\partial \gamma})> 0$ for any $\theta$.

%Let $\overline{E}_{G}\left( \mathbf{\theta }, \mathbf{\gamma}\right) \defeq \frac{1}{|M_{G}||D_G|}%
%\sum_{g}\sum_{h} \sum_l\sum_{m<l} \mathrm{\E}(\mathbf{e}_{glm}(\check{\theta})-\mathbf{z}%
%_{glm}(\hat{\gamma})) \partial{\mathbf{z}%
%_{glm}(\hat{\gamma})}/\partial \gamma .$

\item[A.9)'] The true point $\theta^0, \gamma^0$ lies in the interior point of $\Theta, \Gamma$. $\check{\theta}$ is estimated with $|\check{\theta} - \theta^0|_2 = \CO_p(G^{-1/2}).$\\
%\textbf{Remark}
%Verification of this assumption is in Proposition \ref{pre}.

%\item[A.10)] $c'<\lambda_{min}(L^{-2}\E\left(\nabla \mathbf{m}_{g}^{\top }\mathbf{W}_{g}^{\ast
%-1}\nabla \mathbf{m}_{g}\right))< \lambda_{max} (L^{-2}\E\left(\nabla \mathbf{m}_{g}^{\top }\mathbf{W}_{g}^{\ast
%-1}\nabla \mathbf{m}_{g}\right) < C'$ is positive definite, and $c'$ and $C'$ are two positive constants.
\item[A.11)'] $(|D_G|M_G)^{-1}\sum_{g} \sum_l\sum_{m<l} (\mathbf{e}_{glm}(\check{\theta})-\mathbf{z}%
_{glm}(\hat{\gamma})) \partial{\mathbf{z}%
_{glm}(\hat{\gamma})}/\partial \gamma = \Co_p(1)$.
\end{itemize}

In this subsection, we verify the consistency of the preestimator $\hat{\gamma}$.
As we have
\begin{equation}
\mathbf{\hat{\gamma}}=\arg \min_{\gamma} \sum_{g}(\mathbf{e}_{g}(\check{\theta})-\mathbf{z}%
_{g}(\gamma))^{\top }(\mathbf{e}_{g}(\check{\theta})-\mathbf{z}_{g}(\gamma)),
\end{equation}
which leads to $\arg \mathbf{zero}_{\gamma \in \Gamma} \sum_{g} \sum_l\sum_{m<l} (\mathbf{e}_{glm}(\check{\theta})-\mathbf{z}%
_{glm}(\gamma)) \partial{\mathbf{z}%
_{glm}(\gamma)}/\partial \gamma = 0$.

We can proceed with a similar expansion step  as in Section \ref{expansion}.
Therefore $\sum_{g} \sum_l\sum_{m<l} \{\mathbf{e}_{glm}(\check{\theta})-\mathbf{z}%
_{glm}(\hat{\gamma})\} \partial{\mathbf{z}%
_{glm}(\hat{\gamma})}/\partial \gamma \\= \sum_{g} \sum_l\sum_{m<l} \{\mathbf{e}_{glm}(\theta^0)-\mathbf{z}%
_{glm}(\gamma^0)\} \partial{\mathbf{z}%
_{glm}(\gamma^0)}/\partial \gamma\\ +  \sum_{g} \sum_l\sum_{m<l} \{\mathbf{e}_{glm}(\tilde{\theta})-\mathbf{z}%
_{glm}(\tilde{\gamma})\} \partial{\mathbf{z}%
_{glm}(\tilde{\gamma})}/{\partial \gamma\partial \gamma^{\top}}(\hat{\gamma}- \gamma^0)\\- \sum_{g} \sum_l\sum_{m<l} \{\partial \mathbf{z}%
_{glm}(\tilde{\gamma})/\partial \gamma\} \partial{\mathbf{z}%
_{glm}(\tilde{\gamma})}/{\partial \gamma^{\top}}(\hat{\gamma}- \gamma^0) \\ + \sum_{g} \sum_l\sum_{m<l} \{\partial\mathbf{z}%
_{glm}(\tilde{\gamma})/\partial \gamma\} \partial\mathbf{e}_{glm}(\tilde{\theta})/\partial \theta^{\top} (\check{\theta} - \theta^0),$
where $\tilde{\gamma}, \tilde{\theta}$ lies in the line segment between $\theta^0, \gamma^0$ and $\check{\theta}, \hat{\gamma}$.

It is known that under proper NED assumptions a pooled estimation $\check{\theta}$ satisfying $|\check{\theta}- \theta^0|_2  = \CO_p(1/\sqrt{n}).$
The verification step would be similar to the proof in Section \ref{expansion}, where we also need ULLN for the term $G^{-1}\sum_{g} \sum_l\sum_{m<l} \partial \mathbf{z}%
_{glm}(\tilde{\gamma})/\partial \gamma \partial{\mathbf{z}%
_{glm}(\tilde{\gamma})}/{\partial \gamma^{\top}}$, $2 G^{-1} \sum_{g} \sum_l\sum_{m<l}\{ \mathbf{e}_{glm}(\tilde{\theta})-\mathbf{z}%
_{glm}(\tilde{\gamma}) \}\partial\mathbf{e}_{glm}(\tilde{\theta})/\partial \theta$ and $G^{-1}\sum_{g} \sum_l\sum_{m<l} \{\mathbf{e}_{glm}(\tilde{\theta})-\mathbf{z}%
_{glm}(\tilde{\gamma})\} \partial{\mathbf{z}%
_{glm}(\tilde{\gamma})}/{\partial \gamma\partial \gamma^{\top}}$.
This will lead to $\sum_{g} \sum_l\sum_{m<l} \{\mathbf{e}_{glm}(\theta^0)-\mathbf{z}%
_{glm}(\gamma^0)\} \partial{\mathbf{z}%
_{glm}(\gamma^0)}/\partial \gamma = \CO_p(\sqrt{G}).$ (Lemma A.3 in \cite{Jenish2012}).

The desired results now follows from condition A.1) - A.3), A.5), A.6) and A.8)', A9)', A11)'.

\subsection{Proof of Theorem \ref{th:variance}}

We prove that $\mbox{sup}_{(\gamma, \theta) \in (\Gamma, \Theta)} e_i^{\top}\hat{\mathbf{A}}(\theta, \gamma)e_j \to_p e_i^{\top}\mathbf{A}_0 e_j$,\\ and $\mbox{sup}_{(\gamma, \theta) \in (\Gamma, \Theta)} e_i^{\top}\mathbf{\hat{B}}(\theta, \gamma)e_j \to_p e_i^{\top}\mathbf{B}_0 e_j$.
And by the Slutsky's theorem the variance covariance estimation is consistent.
 Firstly we prove that $e_i^{\top}(\mathbf{\hat{A}}-  \mathbf{A}_{0})e_j \to_p 0 $. This is implied by uniform law of large numbers for near-epoch dependent sequences, as mentioned the NED property of the underlying sequence $(\mathbf{x}_g)$ is trivial under condition A.1) - A.5) as it is a measurable function of the input field $\tilde{\vps}$.

$
e_i^{\top}\mathbf{\hat{A}}e_j=\frac{1}{G M_G}\sum_{g}e_i^{\top}\nabla \mathbf{\hat{m}}%
_{g}^{\top}\mathbf{\hat{W}}_{g}^{-1}\nabla \mathbf{\hat{m}}%
_{g}e_j\rightarrow_p \\\lim_{G\rightarrow \infty }\frac{1}{G M_G}\sum_{g}e_i^{\top}\E%
\left( \nabla \mathbf{m}_{g}^{\top}\mathbf{W}_{g}^{-1}\nabla \mathbf{m}%
_{g}\right)e_j
 =e_i^{\top}\mathbf{A}_{0}e_j.$

We still need to prove that $e_i^{\top}\mathbf{\hat{B}}e_j\rightarrow_p e_i^{\top}\mathbf{B}_{0}e_j.$
We denote $\mathbf{W}_g = \mathbf{W}_g(\theta^0,\gamma^0)$ and $\hat{\mathbf{W}}_g = \mathbf{W}_g(\hat{\theta},\hat{\gamma})$.

Recall that $Z_{g}\defeq  \nabla \mathbf{m}^{\top}_{g}\mathbf{W}_{g}^{-1}\mathbf{u}_{g},$
and $\hat{Z}_{g}\defeq \nabla \mathbf{\hat{m}}_{g}^{\top}\mathbf{\hat{W}}%
_{g}^{-1}\mathbf{\hat{u}}_{g}.$ %
\begin{eqnarray*}
\mathbf{B}_{0} &=&\lim_{G\rightarrow \infty }\mathrm{Var}\left[ \frac{1}{%
\sqrt{M_G^2|D_G|}}\sum_{g}s_{g}\left( \mathbf{\theta }^{0},\mathbf{%
\gamma }^{0 }\right) \right] \\
&=&\lim_{G\rightarrow \infty }\frac{1}{M_G^2|D_G|}\sum_{g}\E\left[ \nabla \mathbf{m}_{g}^{\top }%
\mathbf{W}_{g}^{-1}\mathbf{u}_{g}\mathbf{u}_{g}^{\top }\mathbf{W}%
_{g}^{-1}\nabla \mathbf{m}_{g}\right]
\\&&+\frac{1}{M_G^2|D_G|}\sum_{g}\sum_{h(\neq g)}\E\left[ \nabla \mathbf{m}_{g}^{\top }\mathbf{W}_{g}^{-1}%
\mathbf{u}_{g}\mathbf{u}_{h}^{\top }\mathbf{W}_{h}^{-1}\nabla \mathbf{m}%
_{h}\right] \\
&=&\lim_{G\rightarrow \infty }\frac{1}{M_G^2|D_G|}\sum_{g}\E\left[ Z_{g}^{^{\top }}Z_{g}\right] +%
\frac{1}{M_G^2|D_G|}\sum_{g}\sum_{h(\neq g) \in D_G}\E\left[ Z_{g}^{^{\top
}}Z_{h}\right] . \\
\mathbf{\hat{B}} &=&\frac{1}{M_G^2|D_G|}\sum_{g}\sum_{h(\neq g)}k(d_{gh})\nabla
\mathbf{
\hat{m}}_{g}^{\top }\mathbf{\hat{W}}_{g}^{-1}\mathbf{\hat{u}}_{g}%
\mathbf{\hat{u}}_{h}^{\top }\mathbf{\hat{W}}_{h}^{-1}\nabla \mathbf{\hat{m}%
}_{h}, \\
&\mathbf{=}&\frac{1}{M_G^2|D_G|}\sum_{g}\nabla \mathbf{\hat{m}}_{g}^{\top }%
\mathbf{\hat{W}}_{g}^{-1}\mathbf{\hat{u}}_{g}\mathbf{\hat{u}}_{g}^{\top }%
\mathbf{\hat{W}}_{g}^{-1}\nabla \mathbf{\hat{m}}_{g}
\\&&+\frac{1}{M_G^2|D_G|}%
\sum_{g}\sum_{h(\neq g) }k(d_{gh})\nabla \mathbf{\hat{m}}_{h}\mathbf{%
\hat{W}}_{g}^{-1}\mathbf{\hat{u}}_{g}\mathbf{\hat{u}}_{h}^{\top }\mathbf{%
\hat{W}}_{h}^{-1}\nabla \mathbf{\hat{m}}_{h} \\
&=&\frac{1}{M_G^2|D_G|}\sum_{g}\hat{Z}_{g}^{^{\top }}\hat{Z}_{g}+\frac{1}{M_G^2|D_G|}%
\sum_{g}\sum_{h(\neq g)}k(d_{gh})\hat{Z}_{g}^{^{\top }}\hat{Z}_{h}.
\end{eqnarray*}%
Define $\mathbf{B}_{0}^{k}$ and $\mathbf{B}^{k}$ as
\begin{eqnarray*}
\mathbf{B}_{0}^{k} &\mathbf{=}&\frac{1}{M_G^2|D_G|}\sum_{g}\E\left[
\nabla \mathbf{m}_{g}^{\top }\mathbf{W}_{g}^{-1}\mathbf{u}_{g}\mathbf{u}%
_{g}^{\top }\mathbf{W}_{g}^{-1}\nabla \mathbf{m}_{g}\right] \\
&&+\frac{1}{M_G^2|D_G|}%
\sum_{g}\sum_{ h(\neq g)}k(d_{gh})\E\left[ \nabla \mathbf{m}%
_{g}^{\top }\mathbf{W}_{g}^{-1}\mathbf{u}_{g}\mathbf{u}_{h}^{\top }%
\mathbf{W}_{h}^{-1}\nabla \mathbf{m}_{h}\right] \\
&=&\frac{1}{M_G^2|D_G|}\sum_{g}\E\left( Z_{g}^{^{\top }}Z_{g}\right) +%
\frac{1}{M_G^2|D_G|}\sum_{g\in D_Gy}\sum_{h(\neq g)}k(d_{gh})\E\left(
Z_{g}^{^{\top }}Z_{h}\right) . \\
\mathbf{B}^{k} &\mathbf{=}&\frac{1}{M_G^2|D_G|}\sum_{h(\neq g)}\left[ \nabla \mathbf{m}%
_{g}^{\top }\mathbf{W}_{g}^{-1}\mathbf{u}_{g}\mathbf{u}_{g}^{\top }%
\mathbf{W}_{g}^{-1}\nabla \mathbf{m}_{g}\right]\\&& +\frac{1}{M_G^2|D_G|}%
\sum_{g}\sum_{h(\neq g)}k(d_{gh})\left[ \nabla \mathbf{m}%
_{g}^{\top }\mathbf{W}_{g}^{-1}\mathbf{u}_{g}\mathbf{u}_{h}^{\top }%
\mathbf{W}_{h}^{-1}\nabla \mathbf{m}_{h}\right] \\
&=&\frac{1}{M_G^2|D_G|}\sum_{g}Z_{g}^{^{\top }}Z_{g}+\frac{1}{M_G^2|D_G|}%
\sum_{g}\sum_{h(\neq g)}k(d_{gh})Z_{g}^{^{\top }}Z_{h}.
\end{eqnarray*}%
Next write the estimation error for $\mathbf{B}_0$ as in three parts, namely the
part consists of generated errors ($I_1$), the variance ($I_2$) and the bias part ($I_3$).
We need to prove that the generated error term is negligible, the variance term is small induced by the
property NED, and the bias term is also small.
\begin{eqnarray*}
&& \left \vert e_i^{\top}(\mathbf{\hat{B}-B}_{0})e_j \right \vert\\ &=&\left\vert  e_i^{\top}(\mathbf{\hat{B}}%
-\mathbf{B}^{k})e_j +  e_i^{\top}(\mathbf{B}^{k}-\mathbf{B}_{0}^{k})e_j+ e_i^{\top}(\mathbf{B}_{0}^{k}-\mathbf{B}_{0})e_j
\right \vert \\
&\leq &\left \vert e_i^{\top}( \mathbf{\hat{B}}-\mathbf{B}^{k}) e_j\right \vert +\left \vert
e_i^{\top}(\mathbf{\mathbf{B}^{k}-\mathbf{B}_{0}^{k}})e_j\right \vert +\left \vert e_i^{\top}( \mathbf{%
{B}_{0}^{k}-B}_{0} )e_j\right \vert \\
&\defeq&I_1+I_2+I_3\\
\end{eqnarray*}

The following statement are what we need to to prove, and will lead to
$\left \vert  e_i^{\top}(\mathbf{\hat{B}}-\mathbf{B}_{0})e_j\right \vert =\Co_p(1)$.

\begin{eqnarray*}
I_1 & = &| e_i^{\top}(\mathbf{\hat{B}}-\mathbf{B}^{k})e_j|\\
&=&|\frac{1}{ M_G^2|D_G|}\sum_{g} e_i^{\top}\hat{Z}_{g}^{\top }\hat{Z}_{g}e_j+\frac{1}{M_G^2|D_G|}\sum_{g}\sum_{h(\neq g)}k(d_{gh}) e_i^{\top}\hat{Z}_{g}^{\top }\hat{Z}_{h}e_j\\&&
-[ \frac{1}{M_G^2|D_G|}\sum_{g} e_i^{\top}Z_{g}^{^{\top }}Z_{g}e_j+\frac{1}{M_G^2|D_G|}
\sum_{g}\sum_{h (\neq g)}k(d_{gh}) e_i^{\top}Z_{g}^{^{\top }}Z_{h} e_j]|=\smallO_p\left( 1\right)\end{eqnarray*}
\begin{eqnarray*}
I_2 &=& |  e_i^{\top}(\mathbf{B}^{k}-\mathbf{B}_{0}^{k})e_j| \\&=&|\frac{1}{M_G^2|D_G|}\sum_{g} e_i^{\top}\left[ Z_{g}^{^{\top }}Z_{g}-\E%
\left( Z_{g}^{\top }Z_{g}\right) \right]e_j \\
&&+\frac{1}{M_G^2|D_G|}%
\sum_{g}\sum_{h(\neq g)}k(d_{gh}) e_i^{\top}\left[ Z_{g}^{^{\top }}Z_{h}-%
\E\left( Z_{g}^{^{\top }}Z_{h}\right) \right] e_j|\\
&=&\smallO_p\left( 1\right)
\end{eqnarray*}
\begin{eqnarray*}
I_3&=& \left \vert e_i^{\top} (\mathbf{B}_{0}^{k}-\mathbf{B}_{0})e_j\right \vert\\ &=&|
\frac{1}{|D_G| M_G^2}\sum_{g}\sum_{h(\neq g)}k(d_{gh}) e_i^{\top}\E(
Z_{g}^{^{\top }}Z_{h})e_j {-}\frac{1}{G M_G^2}\sum_{g}\sum_{h(\neq g)}e_i^{\top}\E\left[ Z_{g}^{^{\top }}Z_{h}\right]e_j |\\
&=&\frac{1}{|D_G| M_G^2}\sum_{g}\sum_{h(\neq g)}|k(d_{gh})-1
| e_i^{\top} \E\left( Z_{g}^{^{\top }}Z_{h}\right) e_j|\\&=&\smallO_p\left( 1\right)
\end{eqnarray*}
%
%Define \sum_{g}\sum_{h}^{G}k(d_{gh}) e_i^{\top}\left[ Z_{g}^{^{\top }}Z_{h} -  \hat{Z}_{g}^{^{\top }}\hat{Z}_{h}\right] e_j \leq \smallO_p\left( 1\right)$.
To prove each of $I_1, I_2, I_3$ is $\Co_p(1)$, we define
$p_{gh}=Z_{g}^{^{\top }}Z_{h}-\E\left( Z_{g}^{^{\top
}}Z_{h}\right) $.

\textbf{Step 1}
We handle firstly $I_1$,
$I_1 \leq |M_G^{-2}|D_G|^{-1} \sum_g\sum_h e_i^{\top} (\hat{Z}_g - Z_g)^{\top}Z_h e_j K(d_{gh})| \\+  |M_G^{-2}|D_G|^{-1} \sum_g\sum_h e_i^{\top} (\hat{Z}_h- Z_h^{\top})(\hat{Z}_g - Z_g)e_j K(d_{gh})|+ | M_G^{-2}|D_G|^{-1} \sum_g\sum_h e_i^{\top}Z_g^{\top}(\hat{Z}_h - Z_h)e_j K(d_{gh})|
\defeq I_{11} +I_{12}+I_{13}.$
{Assume that $\hat{Z}_g - Z_g = (\nabla \mathbf{m}^{\top}_{g}\mathbf{W}_{g}^{-1}(\hat{\mathbf{u}}_{g}-\mathbf{u}_{g} ) )\\
= ( \nabla\mathbf{m}_{g}^{\top}\mathbf{W}_{g}^{-1}C_ g \Delta_g)$.  $\sum_g|C_g|_2 = \CO_p(L G)$ and $|\Delta_g|_2 = \CO_p(G^{-1/2}),$ where recall that $|.|_2$ defined the Euclidean norm of a matrix.}
Thus we have $I_{11} = M_G^{-2}|D_G|^{-1} \sum_g\sum_h |e_i^{\top}  (\hat{Z}_g - Z_g)^{\top}Z_h  e_j K(d_{gh})| = M_G^{-2}|D_G|^{-1} \sum_g\sum_h  |e_i^{\top} \nabla\mathbf{m}_{g}\mathbf{W}_{g}^{-1}C_g \Delta_g  Z_h  e_j K(d_{gh})| \\
\leq   M_G^{-2}|D_G|^{-1} \sum_g   |e_i^{\top} \nabla\mathbf{m}_{g}\mathbf{W}_{g}^{-1}C_g \Delta_g|_2|  \mbox{max}_{\rho(h,g)\leq h_g }Z_h  e_j|_2
\\ \leq   M_G^{-2}|D_G|^{-1} \sum_g|e_i^{\top} \nabla\mathbf{m}_{g}\mathbf{W}_{g}^{-1}C_g|_2 |\Delta_g|_2|\mbox{max}_{h:\rho(h,g)\leq h_g }Z_h  e_j|_2 =\CO_p(h_g^{d/q'}L^{d/q'} /\sqrt{G}),$ given the fact that the number of observations lying in a $h_g$ ball is $\{\sharp h:  \rho(h,g)\leq h_g\}  \lesssim C h_g^d L^d ,$
$(\E|\max_{h:  \rho(h,g)\leq h_g} Z_h|^2)^{1/2} \leq C h_g^{d/q'}\max_{h:  \rho(h,g)\leq h_g}\|Z_h\|_{q'} L^{d/{q'}}$, where from B.2) we have that $ \max_{h:  \rho(h,g)\leq h_g}\|Z_h\|_{q'} \leq C L^2$.

$I_{12} = M_G^{-2}|D_G|^{-1} \sum_g \sum_h e_i^{\top} (\hat Z_g - Z_g)^{\top} (\hat{Z}_h - Z_h) K(d_{gh})e_j = M_G^{-2}|D_G|^{-1} \sum_g \sum_h e_i^{\top} (\hat Z_g - Z_g)^{\top} (\hat{Z}_h - Z_h) K(d_{gh})e_j\leq M_G^{-2}|D_G|^{-1} \sum_g \sum_h e_i^{\top} \nabla \mm_g^{\top} \mmW_g^{-1}C_g\Delta_g (\nabla \mm_h^{\top}\mmW_h^{-1} C_h\Delta_h)^{\top}  K(d_{gh})e_j \\ \leq |e_i^{\top} \nabla \mm_g^{\top} \mmW_g^{-1}C_g|_2 |\Delta_g |_2|\Delta_h^{\top} |_2  |C_h^{\top} \mmW_h^{-1} \nabla \mm_h^{\top}e_j|_2 = \CO_p( h_g^{d/q'} |D_G|^{-1}L^{d/q'}L^2).$
The rate of $I_{13}$ is similarly derived as $I_{11}$.
Then from B.1) $I_1 = \Co_p(1).$

\textbf{Step 2}
Now we look at the variance case $I_2,$

$I_2 = \frac{1}{|D_G| M_G^2}%
\sum_{g}\sum_{h \neq g  }|k(d_{gh}) e_i^{\top}\left[ Z_{g}^{^{\top }}Z_{h}-%
\E\left( Z_{g}^{^{\top }}Z_{h}\right) \right] e_j \vert
=\smallO_p\left( 1\right).$\\
As we can see that $\E I_2 = 0$ and we need to study \\$\Var(I_2) = |D_G|{-2} M_G^{-4} \sum_{g1} \sum_{h1} \sum_{g2} \sum_{h2} k(d_{g1h1})k(d_{g2h2})\\ \E \{e_i^{\top}\left[ Z_{g1}^{^{\top }}Z_{h1}-%
\E\left( Z_{g1}^{^{\top }}Z_{h1}\right) \right] e_je_i^{\top}\left[ Z_{g2}^{^{\top }}Z_{h2}-%
\E\left( Z_{g2}^{^{\top }}Z_{h2}\right) \right] e_j\}.$

Denote $p_{g1h1,ij} \defeq e_i^{\top}\left[ Z_{g1}^{^{\top }}Z_{h1}-%
\E\left( Z_{g1}^{^{\top }}Z_{h1}\right) \right] e_j.$

{According to assumption A.1)- A.8), the underlying random field $\tilde{\varepsilon}$ with $\alpha-$ mixing $\tilde{\alpha}(u,v,r)\leq (uL+vL)^{\tau}\hat{\alpha}(r)$, with $\tau \geq 0.$
 We need to verify  the NED property of $p_{g1h1,ij}$.}

{From Lemma \ref{lem2:consistency}, the NED property of $Z_g = s_g(\theta^0,\gamma^0)$ with $\psi(m) $ and NED constant bounded by $L^2d_gC'$, where $C'$ is a bound for the $\max_{i,j}\||\nabla m_g(\theta^0) \mathbf{W}_{gij}^{-1}(\theta^0,\gamma^0)|_2\|_4$.}
According to the definition of Bartlett kernel we focus on the pairs with $\rho(h1,g1) \leq h_g$ and $\rho(h2,g2) \leq h_g$,  we see that
$p_{g1h1,ij}$, $\|Z_{h1}^{\top}Z_{g_1} - \E [Z_{h1}^{\top}Z_{g1}|\mathcal{F}_{h1}(s+h_g)]\| \leq (\||Z_{h1}|_2\|_4 d_{g1}\vee \||Z_{g1}|_2\|_4 d_{h1} )\psi(s)$.

Therefore $p_{g1h1,ij}$ would be also $L_2$ NED with ${\psi}(m) = \tilde{\psi}(m+h_g), $ with $m>h_g$.

From the property of the $L_2$ NED, following from Lemma B.3 of \cite{Jenish2012},

$\Cov(p_{g1h1,ij}, p_{g2h2,ij}) =  \E \{e_i^{\top}\left[ Z_{g1}^{\top }Z_{h1}-%
\E\left( Z_{g1}^{^{\top }}Z_{h1}\right) \right] e_je_i^{\top}\left[ Z_{g2}^{^{\top }}Z_{h2}-%
\E\left( Z_{g2}^{^{\top }}Z_{h2}\right) \right] e_j\} \\\leq \|p_{g1h1,ij}\|_{2+\delta}\{C_1 \|p_{g1h1,ij}\|_{2+\delta}[\rho(g1,g2)/3]^{d\tau^*}\hat{\alpha}^{\delta/(2+\delta)}(\rho(g1,g2)/3)+ C_2 \tilde{\psi}([\rho(g1,g2)]/3)\},$
where $\tau^* \defeq \delta \tau/(2+\delta)$.

So $\Var(I_2) = M_G^{-4}|D_G|^{-2} h_g^{2d} L^{2d}  \sum_{g1} \sum_{g2} \mbox{max}_{h1,h2} k(d_{g1h1})k(d_{g2h2}) \E \{e_i^{\top}\left[ Z_{g1}^{^{\top }}Z_{h1}-%
\E\left( Z_{g1}^{^{\top }}Z_{h1}\right) \right] e_j\\e_i^{\top}\left[ Z_{g2}^{^{\top }}Z_{h2}-%
\E\left( Z_{g2}^{^{\top }}Z_{h2}\right) \right] e_j\} $
$\leq M_G^{-4}|D_G|^{-2} h_g^{2d} L^{2d}\mbox{max}_{h1,h2} \sum_{g1,g2}\|p_{g1h1,ij}\|_{2+\delta}\{C_1 \|p_{g1h1,ij}\|_{2+\delta}\\ \{\rho(g1,g2)/3\}^{d\tau^*}\hat{\alpha}^{\delta/(2+\delta)}(\rho(g1,g2)/3)+ C_2 \tilde{\psi}(\rho(g1,g2)/3)\}\\ \leq M_G^{-4}G^{-2} h_g^{2d} \mbox{max}_{h1,h2} \sum_{g1}\sum^{\infty}_{r=1} \sum_{g2 \in \{g_2: \rho_{g1,g2} \in [r, r+1)\}}\|p_{g1h1,ij}\|_{2+\delta}\{C_1 \|p_{g1h1,ij}\|_{2+\delta}[\rho(g1,g2)/3]^{d\tau^*}\\ \hat{\alpha}^{\delta/(2+\delta)}(\rho(g1,g2)/3)+ C_2 \psi(([\rho(g1,g2)]/3-h_g))_{+}\}\\
  \leq |D_G|^{-1} h_g^{2d} L^{2d}\sum^{\infty}_{r=1}  \{C'_1 r^{(d\tau^*+d)-1}\hat{\alpha}^{\delta/(2+\delta)}(r)+ C_2r^{d-1}\psi((r-h_g)_{+})\}$.
 { From B.4) we assume that $h_g^{2d} L^{2d}\sum^{\infty}_{r=1} r^{(d\tau^*+d)-1}\hat{\alpha}^{\delta/(2+\delta)}(r)= \CO(G)$, and $h_g^{2d} \sum^{\infty}_{r=1}  L^{2d} r^{d-1}\psi((r-h_g)_{+}) = \CO(G), $ then we have $\Var(I_2) = \Co(1).$ }

\textbf{Step 3 }

{According to B.4), $|k(d_{gh})-1|\leq C_k |\rho(g,h)/h_g|^{\rho_K}$ for $\rho(g,h)/h_g\leq 1$ for some constant $\rho_k\geq 1$ and $0<C_k<\infty$.}

We handle the bias term $I_3$,\\
$M_G^{-2}|D_G|^{-1}\sum_g \sum_h |e_i^{\top}(k(\rho(g,h)/h_g)-1)\E(Z_g^{\top}Z_h)e_j|
 \\\leq M_G^{-2}|D_G|^{-1}\sum_g \sum_h C_k |\rho(g,h)/h_g|^{\rho_k}e_i^{\top}\E(Z_g^{\top}Z_h)e_j\\ \leq M_G^{-2}|D_G|^{-1}\sum_g \sum_h  |\rho(g,h)/h_g|^{\rho_k}\|e_i^{\top}Z_g^{\top}\|\|Z_he_j\| .$

{Also according B.4), $M_G^{-2}|D_G|^{-1}\sum_g \sum_h  |\rho(g,h)/h_g|^{\rho_k}\|e_i^{\top}Z_g^{\top}\|\|Z_he_j\|$ is $\Co(1)$.}

\subsection{Two special cases}\label{ass2}
To justify the NED assumptions in A.2), we now verify the two $L_2$ NED properties in our example. ($L_4$ NED can be similarly verified.)  In particular we would like to analyze how the underlying assumptions of the data innovation processes would induce the assumption of A.1).
\subsubsection{Poisson Regression/ Negative Binomial}
The focused model is
$y_{n,i}$s are poisson counts observations,
$\E(y_{n,i}| x_{n,i}, v_{n,i}) = \exp(x_{n,i}^{\top}\beta)v_{n,i} $.
We suppose that $v_{i,n} = g(\eta_{i,n})$, where $g(.)$ is twice continuously differentiable function. For example $g(x) = \exp(x)$ and then $\E(y_{n,i}| x_{n,i}, v_{n,i}) = \exp(x_{n,i}^{\top}\beta+ \eta_{n,i} ),$  and $x_{n,i}$ are controls with $p \times 1$ dimension.

We assume that $\eta_{n,i}$ follows a spatial autoregressive model. Namely \\$\eta_{n,i} = \lambda\sum^n_{j=1}w_{n,ij}\eta_{n, j} + \epsilon_{n,i}.$
Suppose $\eta_n = \lambda W \eta_n+ \epsilon_n$, and $\eta_n = (I- \lambda W)^{-1}\epsilon_n$, define $ [a_{ij}] = (I -\lambda W)^{-1}.$

Then we have
$$v_{n,i} = g(\sum_{j=1} a_{ij} \epsilon_{n,j}).$$

For the moment we assume the decomposition:
$y_{n,i} = \E(y_{n,i}| x_{n,i}, v_{n,i}) +\varepsilon_{n,i}.$

Assume that $\{\xi_{n,i} = (x_{n,i}, \epsilon_{n,i}, \varepsilon_{n,i})\}$ are {mixing random field}.

We now establish that $Y = \{y_{n,i},s_i \in D_n , n \geq 1\}$ is uniform $L_2$ NED on $\xi = \{\xi_{n,i}, s_i \in D_n, n \geq 1\}.$ Define $\mathcal{F}_{n,i}(s) =  \sigma(\xi_{n,j}: j \in D_{n}, \rho(i, j) \leq s).$

It can be seen that, for any $i \in D_n,$
\begin{eqnarray*}
\tilde{y}_{n,i} = y_{n,i} - \E(y_{n,i}| \mathcal{F}_{n,i}(s))& = &\exp(x_{n,i}^{\top}\beta)v_{n,i}  +\varepsilon_{n,i}  -  \exp(x_{n,i}^{\top}\beta)\E (v_{n,i}| \mathcal{F}_{n,i}(s)) - \varepsilon_{n,i} \\
&=& [v_{n,i}  - \E \{v_{n,i}| \mathcal{F}_{n,i}(s)\}] \exp(x_{n,i}^{\top}\beta)
\end{eqnarray*}
As $v_{n,i} -  \E (v_{n,i}| \mathcal{F}_{n,i}(s)) =  g(\sum_j a_{ij} \epsilon_{n,j}) - \E \{ g(\sum_j a_{ij} \epsilon_{n,j})| \mathcal{F}_{n,i}(s)\}.$

Taylor expansion to the first order yield,
\begin{eqnarray}
g(\sum_j a_{ij} \epsilon_{n,j}) - \E \{ g(\sum_j a_{ij} \epsilon_{n,j})| \mathcal{F}_{n,i}(s)\} = g'(\tilde{a})\sum_{j \in B^c(s)} a_{ij} \epsilon_{n,j},
\end{eqnarray}
where $\tilde{a}$ is a point between $0$ and $\sum_j a_{ij} \epsilon_{n,j}$, $B^c(s)$ is the set of $j$ with $\rho(i,j)\geq s$.
Thus we have
\begin{eqnarray}
(\E|\tilde{y}_{n,i}|^2)^{1/2} \leq C \sum_{j \in B^c(s)} |a_{ij}|,
\end{eqnarray}
where we assume that $\|g'(\tilde{a})\epsilon_{n,j}\|_2$ is uniformly bounded by $C$.
Also we require that
$\mbox{limsup}_{s \to \infty}  \sup_{i\in D_n} \sum_{j \in B^c(s)}|a_{ij}| \to 0.$ The proof is completed.

\subsubsection{Probit Model}
We now prove the case of probit model,
\begin{eqnarray*}
y_{n,i} = \IF(y_{n,i}^* >0)\\
y^*_{n,i} = x_{n,i}^{\top}\beta + e_{n,i}.
\end{eqnarray*}
And
$e_{n,i} = \lambda \sum_jw_{n,ij} e_{n,j} + v_{n,i}.$
We now establish that $Y = \{y_{n,i},s_i \in D_n , n \geq 1\}$ ($\|y^*_{n,i}\|_2< \infty$) is $L_2$ NED on $\xi = \{(x_{n,i}, e_{n,i}), s_i \in D_n, n \geq 1\}.$ Thus again similar to the previous case we can denote $e_{n,i} = \sum_{j}a_{ij} v_{n,i}, $ where $a_{ij}$ are the matrix entries of $(I- \lambda W)^{-1}$.

\begin{proof}
First of the latent process is $\{y^*_{n,i}\}$ is a special case of the Cliff-Ord type of process, and therefore would be $L_2-$ unform NED if
$\mbox{limsup}_{s \to \infty}  \sup_{i\in D_n} \sum_{j \in B^c(s)}|a_{ij}| \to 0 $, and $\| v_{n,i}\|_{r'}\leq \infty,$ $r'=2$.\\
 For any $\epsilon > 0$, define the event $B = \{ |y_{n, i}^{*}| < \epsilon, |\E[y_{n, i}^{*}|\mathcal{F}_{n, i}(s)]| < \epsilon \}$. Using $|\II(x_{1} \geq 0) - \II(x_{2} \geq 0) \leq \frac{|x_{1} - x_{2} |}{\epsilon} \II(x_{1} > \epsilon \text{ or } x_{2} > \epsilon ) + \II(x_{1} < \epsilon,  x_{2} < \epsilon )$, we have
\begin{align*}
  & \| y_{n, i} - \E[y_{n, i}|\mathcal{F}_{n, i}(s)] \| =  \| \II(y_{n, i}^{*} \geq 0) - \E[ \II(y_{n, i}^{*} \geq 0)|\mathcal{F}_{n, i}(s)] \|\\
   & \leq \| \II(y_{n, i}^{*} \geq 0) - \II\{ \E[y_{n, i}^{*}|\mathcal{F}_{n, i}(s)] \geq 0 \} \| = \left\{ \E\left |  \II(y_{n, i}^{*} \geq 0) - \II\{ \E[y_{n, i}^{*}|\mathcal{F}_{n, i}(s)] \geq 0 \}  \right |^{2} \right \}^{\frac{1}{2}} \\
   & \leq  \left\{ \frac{1}{\epsilon^{2}} \int_{B^{c}}\left |  y_{n, i}^{*} - \E[y_{n, i}^{*}|\mathcal{F}_{n, i}(s)]  \right |^{2} d\P + \int_{B} d\P \right \}^{\frac{1}{2}} \\
   & \leq  \left\{ \frac{1}{\epsilon^{2}} \int_{B^{c}}\left |  y_{n, i}^{*} - \E[y_{n, i}^{*}|\mathcal{F}_{n, i}(s)]  \right |^{2} d\P \right \}^{\frac{1}{2}} + \left\{ \int_{B} d\P \right \}^{\frac{1}{2}} \\
   & \leq  \frac{1}{\epsilon} \| y_{n, i}^{*} - \E[y_{n, i}^{*}|\mathcal{F}_{n, i}(s)] \|_{2}  + \pi_{4}\epsilon^{1/2}
   , \quad \text{for some constant $\pi_{4}>0$},
\end{align*}
where the first inequality is based on Therorem 10.12 of \cite{Davidson1994} by taking $\II\{ \E[y_{n, i}^{*}|\mathcal{F}_{n, i}(s)] \geq 0 \}$ as an approximation of $\II(y_{n, i}^{*} \geq 0)$ with measure $\mathcal{F}_{n, i}(s)$. When taking $\epsilon = \| y_{n, i}^{*} - \E[y_{n, i}^{*}|\mathcal{F}_{n, i}(s)] \|^{q}, 0 < q < 1 $, when $\epsilon$ converges to 0, both terms converge to 0 at a slower rate than $\| u_{n, i}^{*} - \E[y_{n, i}^{*}|\mathcal{F}_{n, i}(s)] \|$, therefore, the process $\{ (y_{n, i}) \}_{i=1}^{n} $ is uniform $L_2$ NED.
\end{proof}

\subsection{Exponential family}\label{exp}
For parameter $\theta \in \mathbf{R}^p$, and a random variable $X$.
$f(x,\theta) = h(x)\mbox{exp} \{\theta^{\top}T(x)- A(\theta)\},$
where $A(\theta) = \log \int h(x)\mbox{exp} \{\theta^{\top}T(x)\}d F(x)$ is the \textit{cumulant function}, and $T(x)$ is referred to as the \textit{\textit{sufficient statistics}}.
In particular, we know that  $\partial A(\theta)/\partial \theta = \E(T(X)) $ and $\partial A(\theta)/\partial \theta \partial \theta^{\top}= \Var(T(X)) = I (\theta)$ are regarded as the Fisher information matrix.

{
Suppose $y_i$ is following an exponential family condition on $x_i$, then the conditional mean and conditional variance function will be both expressed as known function, which is the first and the second derivative of the cumulants generating function $A(\mu_i)$. In particular $\E(T(y_i)) = \partial A(\mu_i)/\partial \mu_i|_{\mu_i = v(x_i^{\top}\theta)},$ and the variance covariance  $\Var(T(y_i)) = \partial A(\mu_i)/\partial \mu_i \partial \mu_i^{\top}|_{\mu_i = v(x_i^{\top}\theta)},$ where $v(\cdot)$ is a link function. Notably the variance covariance function is thus treated as a known function related to the conditional mean in this case as they are both related to $A(\cdot)$.}

\bibliography{Literature}

\end{document}